\newcommand{\@abbrev}[3]{
	\def\c@a@def##1{
		\if ##1.
		\relax
		\else
		\@ifdefinable{\@nameuse{#1##1}}{\@namedef{#1##1}{#2##1}}
		\expandafter\c@a@def
		\fi
	}
	\c@a@def #3.
}
\newcommand{\FO}{\ensuremath{\textsc{FO}}}
\newcommand{\CPT}{\ensuremath{\textsc{CPT}}}
\newcommand{\classmonPCx}[1]{\ensuremath{\mathbf{\mathcal{K}_{\monPCx 
				k}}}}
\newcommand{\HF}{\text{HF}}
\newcommand{\Aut}{\mathbf{Aut}}
\newcommand{\Sym}{{\mathbf{Sym}}}
\renewcommand{\phi}{\varphi}
\newcommand{\lra}{\longrightarrow}
\renewcommand{\phi}{\varphi}
\renewcommand{\theta}{\vartheta}
\renewcommand{\AA}{{\mathfrak A}}
\renewcommand{\epsilon}{\varepsilon}
\newcommand{\tc}{\text{tc}}
\newcommand{\Ee}{{\cal E}}
\newcommand{\Kk}{{\cal K}}
\newcommand{\Oo}{{\cal O}}
\newcommand{\Pp}{{\cal P}}
\renewcommand{\bar}{\overline}
\newcommand{\cutout}[1]{}
\newcommand{\Stab}{\mathbf{Stab}}
\newcommand{\StabP}{\Stab^\bullet}
\newcommand{\Sp}{\mathbf{SP}}
\newcommand{\SpC}{\Sp}
\newcommand{\Orbit}{\mathbf{Orbit}}
\title{Choiceless Computation and Symmetry: Limitations of Definability} %TODO Please add
\titlerunning{Limitations of choiceless definability} %TODO optional, please use if title is longer than one line
\author{Benedikt Pago}{Mathematical Foundations of Computer Science, RWTH Aachen University, Germany }{pago@logic.rwth-aachen.de}{}{}%TODO mandatory, please use full name; only 1 author per \author macro; first two parameters are mandatory, other parameters can be empty. Please provide at least the name of the affiliation and the country. The full address is optional
\authorrunning{B. Pago} %TODO mandatory. First: Use abbreviated first/middle names. Second (only in severe cases): Use first author plus 'et al.'
\keywords{finite model theory, descriptive complexity, choiceless computation, symmetries of combinatorial objects} %TODO mandatory; please add comma-separated list of keywords
\begin{document}

\maketitle

\begin{abstract}
	The search for a logic capturing PTIME is a long standing open problem in finite model theory. One of the most promising candidate logics for this is \emph{Choiceless Polynomial Time} with counting (CPT). Abstractly speaking, CPT is an isomorphism-invariant computation model working with hereditarily finite sets as data structures.\\
	While it is easy to check that the evaluation of CPT-sentences is possible in polynomial time, the converse has been open for more than 20 years: Can every PTIME-decidable property of finite structures be expressed in CPT?\\
	We attempt to make progress towards a negative answer and show that Choiceless Polynomial Time cannot compute a \emph{preorder} with colour classes of \emph{logarithmic size} in every hypercube. The reason is that such preorders have super-polynomially many automorphic images, which makes it impossible for CPT to define them.\\
	While the computation of such a preorder is not a decision problem that would immediately separate P and CPT, it is significant for the following reason: The so-called Cai-Fürer-Immerman (CFI) problem is one of the standard ``benchmarks'' for logics and maybe best known for separating fixed-point logic with counting (FPC) from P. Hence, it is natural to consider this also a potential candidate for the separation of CPT and P. The strongest known positive result in this regard says that CPT is able to solve CFI if a preorder with logarithmically sized colour classes is present in the input structure.\\
	Our result implies that this approach cannot be generalised to unordered inputs. In other words, CFI on unordered hypercubes is a PTIME-problem which provably cannot be tackled with the state-of-the-art choiceless algorithmic techniques.
\end{abstract}

\section{Introduction}
\label{sec:introduction}
One of the big open questions in descriptive complexity theory is whether there exists a logic capturing PTIME (see \cite{chandra1980structure}, \cite{grohe2008quest}, \cite{gurevich1985logic}, \cite{pakusa2015linear}). Towards an answer to this question, several logics of increasing expressive power within PTIME have been devised, the best-studied of which is probably FPC, \emph{fixed-point logic with counting} (see \cite{dawar2015nature} for a survey). However, FPC only corresponds to a strict subset of PTIME because it cannot express the so-called \emph{CFI query}, a version of the graph isomorphism problem on certain graphs constructed by Cai, Fürer and Immerman in 1992 \cite{caifurimm92}. This problem is in P and has turned out to be extremely valuable as a benchmark for PTIME-logics as well as for certain classes of graph isomorphism algorithms.\\

The most important candidate logics for capturing PTIME, which have not yet fallen prey to the CFI problem, are \emph{Rank logic} \cite{dawar2009logics} and \emph{Choiceless Polynomial Time} (CPT). CPT was introduced in 1999 by Blass, Gurevich and Shelah \cite{blass1999} as a machine model that comes as close to Turing machines as possible, while enforcing \emph{isomorphism-invariance} of the computations -- this property is precisely the main difference between logics and classical Turing machines. Since its original invention, various different formalisations of CPT have emerged but the underlying principle is always the same: \emph{Symmetric} computation on \emph{polynomially-sized hereditarily finite sets} as data structures. \\

Not many lower bound results for Choiceless Polynomial Time are known so far, and of course, no \emph{decision problem} in P has been shown to be undefinable in CPT. However, what has been achieved is a non-definability statement for a \emph{functional problem}: Rossman showed that CPT cannot define the dual of any given finite vector space \cite{rossman2010choiceless}. Our contribution is a result of a similar kind, but stronger in a sense: We show non-definability not only for a concrete functionally determined object, but for all objects satisfying a certain set of properties. Concretely, no CPT program can define a hereditarily finite set representing a preorder with colour classes of logarithmic size in every hypercube. This can be seen -- potentially -- as a first step towards a non-definability result for a decision problem: the already mentioned CFI query; this would separate CPT from PTIME. Let us explain what undefinable preorders have to do with the CFI problem (see \autoref{sec:CFI} for details).\\

A preorder in a structure can be seen as a linear order on a collection of \emph{colour classes}, which form a partition of the universe: These colour classes are subsets of the structure whose elements are pairwise indistinguishable. The smaller the colour classes are, the ``finer'' is the preorder, and the more closely it resembles a linear order. By the famous Immerman-Vardi Theorem (\cite{immerman1982relational}, \cite{vardi1982complexity}), fixed-point logic, and therefore also CPT, captures PTIME on linearly ordered structures. Therefore, intuitively speaking, hard problems like CFI should become easier to handle if CPT is able to define a sufficiently fine preorder, or even a linear order, on the input structure. Indeed, Pakusa, Schalthöfer and Selman showed that CPT can define the CFI query if a preorder with colour classes of logarithmic size is available \cite{pakusa2018definability}. This is the strongest known positive result concerning the solvability of CFI in CPT.\\
Our contribution implies that this result cannot be generalised to the CFI problem on unordered input structures: Instances of CFI can be obtained by applying the Cai-Fürer-Immerman construction to any family of connected graphs, in particular also to hypercubes. Since CPT cannot define a sufficiently fine preorder in all hypercubes, and the CFI construction preserves the hypercube-structure, the algorithmic technique from \cite{pakusa2018definability} which heavily relies on  such preorders cannot be applied to all unordered CFI structures.\\
Therefore, if CFI on unordered structures is solvable in CPT, entirely new choiceless algorithmic techniques are needed to show this. Otherwise, if CFI is indeed a separating problem for CPT and P, one possible approach to prove this would be to identify further hereditarily finite sets over hypercubes which are not CPT-definable.\\

Technically, what we show in this paper is a statement concerning the orbit size of certain hereditarily finite objects over hypercubes: For every $n \in \bbN$, fix a h.f.\ object representing a preorder in the $n$-dimensional hypercube. If the colour classes of each preorder are of logarithmic size w.r.t.\ the hypercube, then the orbit size (w.r.t.\ the hypercube-automorphisms) of these h.f.\ objects grows super-polynomially in $2^n$, which is the size of the $n$-dimensional hypercube.\\

Since CPT is a logic and therefore isomorphism-invariant, it has to define any object together with its entire orbit -- if the size of the orbit is not polynomially bounded, then this is not possible in Choiceless Polynomial Time. In fact, we can interpret this non-definability result as an inherent weakness of choiceless polynomial time computation in general: It holds for any isomorphism-invariant polynomial time (or even polynomial space) computation model on hereditarily finite sets. Hence, should it be the case that CPT fails to capture PTIME because of a super-polynomial orbit argument like this one, we could conclude that the quest for a PTIME-logic should continue with other data structures than hereditarily finite sets.\\

Finally, we remark that the main combinatorial tool we use in our proof -- so-called \emph{supporting partitions} -- is taken from \cite{anderson2017symmetric}, where Anderson and Dawar show a correspondence between FPC and symmetric circuits. There, it is used for the calculation of orbit sizes of circuit gates. The fact that this tool also helps to understand the symmetries of hereditarily finite objects over hypercubes demonstrates its versatility and usefulness for the study of symmetric objects in general.

\section{Choiceless computation and the undefinability of preorders}
\label{sec:CPT}
In this paper, we will not give a definition of CPT, but only state its properties that our lower bound depends on. Thereby, our result also holds for a much broader class of choiceless computation models that includes CPT.\\
For details on CPT, we refer to the literature: A concise survey on the subject can be found in \cite{gradel2015polynomial}. It should be noted that there are multiple different ways to formalise CPT: The original definition was via abstract state machines \cite{blass1999}, but there are also more ``logic-like'' presentations such as Polynomial Interpretation Logic (see \cite{grapakschalkai15}, \cite{svenja}) and BGS-logic \cite{rossman2010choiceless}. The latter is essentially a fixed-point logic that allows for the isomorphism-invariant creation and manipulation of \emph{hereditarily finite sets} over the input structure. In fact, it has been shown in \cite{dawar2008choiceless} that any CPT-program (the words ``program'' and ``sentence'' are often used interchangeably in the context of CPT) is equivalent to a sentence in FPC (fixed-point logic with counting) evaluated in the input structure enriched with all the necessary hereditarily finite sets. Therefore, let us make this notion precise.

\paragraph*{Hereditarily finite sets and choiceless computation} Let $A$ be a non-empty set. The set of \emph{hereditarily finite objects} over $A$, $\HF(A)$, is defined as $\bigcup_{i \in \bbN} \HF_i(A)$, where $\HF_0(A) := A \cup \{\emptyset\}, \HF_{i+1}(A) := \HF_{i}(A) \cup 2^{\HF_{i}(A)}$. The size of an h.f.\ set $x \in \HF(a)$ is measured in terms of its \emph{transitive closure} $\tc(x)$: The set $\tc(x)$ is the least transitive set such that $x \in \tc(x)$. Transitivity means that for every $a \in \tc(x)$, $a \subseteq \tc(x)$.\\ 
If the atom set $A$ is the universe of a structure $\AA$, then the action of $\Aut(\AA) \leq \Sym(A)$, the automorphism group of $\AA$, extends naturally to $\HF(A)$: For $x \in \HF(A)$, $\pi \in \Aut(\AA)$, $\pi(x)$ is obtained from $x$ by replacing each occurrence of an atom $a$ in $x$ with $\pi(a)$.\\
The \emph{orbit} (w.r.t.\ the action of $\Aut(\AA)$) of an object $x \in \HF(A)$ is the set of all its automorphic images, i.e.\ $\{\pi(x) \mid \pi \in \Aut(\AA)\}$. The \emph{stabiliser} $\Stab(x)$ of $x$ is the subgroup $\{\pi \in \Aut(\AA) \mid \pi(x) = x \}$.

\begin{definition}
	\label{def:symmetry}
	Let $\AA$ be a finite relational structure with universe $A$, and $p : \bbN \lra \bbN$ a polynomial. We say that a h.f.\ object $x \in \HF(A)$ is
	\begin{itemize}
		\item \emph{symmetric} (w.r.t.\ $\AA$) if $x$ is stabilised by all automorphisms of $\AA$, i.e.\ $\Stab(x) = \Aut(\AA)$;
		\item \emph{$p$-bounded} if $|\tc(x)| \leq p(|A|)$.
	\end{itemize}
\end{definition} 

Every CPT-program comes with an explicit polynomial bound $p$ that limits both the length of its runs as well as the size of the h.f.\ sets that it may use in the computation. Further, due to its nature as a logic, all operations of CPT are symmetry-invariant. This is already everything that our lower bound depends on. The following abstract view on the execution of CPT-programs is true regardless of the concrete presentation of CPT, and this level of abstraction is sufficient for the purposes of this paper:\\  

Let $\Pi$ be a CPT-program with bound $p$, and $\AA$ be a structure of matching signature. Then the run of $\Pi$ on $\AA$ is a sequence of h.f.\ sets $x_1, x_2, ... \in \HF(A)$, each of which is symmetric and $p$-bounded w.r.t.\ $\AA$.\\

Consequently, no CPT-program -- and generally, no computation model operating on symmetric $p$-bounded h.f.\ sets -- can compute a h.f.\ set $x$ with super-polynomial orbit size because the corresponding stage of the run must contain $x$ along with its entire orbit in order to fulfil the symmetry-condition. Now, we are almost ready to state our general lower bound theorem, which applies to CPT as a special case by the facts just mentioned.

\paragraph*{Preorders and colour classes}
A \emph{preorder} $\prec$ on a set $A$ induces a partition of $A$ into \emph{colour classes} $C_1,...,C_m$. A colour class is a set of $\prec$-incomparable elements, and $\prec$ induces a linear order on the colour classes. Our technical contribution is to show that any such ordered partition with colour classes of logarithmic size in the family of $n$-dimensional hypercubes has super-polynomial orbit size. From this it follows that such preorders are not CPT-definable in the hypercubes, regardless of how they are represented as h.f.\ sets. For a structure $\AA$ with universe $A$, we say that a set $x \in \HF(A)$ \emph{encodes} an ordered partition over $\AA$ if there exists a CPT-program which computes the following natural representation of $(C_1,...,C_m)$ on input $\AA$ and $x$: $\{ C_1, \{ C_2, \{ C_3, \{  ...  \} \} \}    \}$.
If the ordered partition $\Pp_n = (C_1,...,C_m)$ has a super-polynomially large orbit, then the same is true for any h.f.\ set that encodes $\Pp_n$ (see Proposition \ref{prop:encodingsHaveLargeOrbit}).
Thus, our main result is summarised as follows:
\begin{theorem}
	\label{thm:mainResult}
	Let $(H_n)_{n \in \bbN}$ be the family of $n$-dimensional hypercubes. For each $n \in \bbN$, fix an ordered partition $\Pp_n = (C_1,...,C_{m_n})$ of $\{0,1\}^n$ such that each part $C_i$ has size at most $\Oo(n) = \Oo(\log |H_n|)$. Let $x_n$ be any \emph{symmetric} (w.r.t.\ $H_n$) h.f.\ set over $H_n$ that \emph{encodes} $\Pp_n$. Then there exists no polynomial $p$ such that $x_n$ is also \emph{$p$-bounded} w.r.t.\ $H_n$. 
\end{theorem}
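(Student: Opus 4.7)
I plan to argue by contradiction. Suppose, for some polynomial $p$, $x_n$ is symmetric with respect to $H_n$, is $p$-bounded, and encodes $\Pp_n$; I will derive a contradiction with the assumption that each colour class has size $\Oo(n)$. The first step exploits the combination of symmetry and $p$-boundedness: since $x_n$ is symmetric, its transitive closure $\tc(x_n)$ is closed under the action of $\Aut(H_n)$, so each $y \in \tc(x_n)$ has its entire $\Aut(H_n)$-orbit inside $\tc(x_n)$. Consequently $|\Orbit(y)| \le |\tc(x_n)| \le p(2^n)$, and orbit--stabiliser yields $|\Stab(y)| \ge n!\cdot 2^n / p(2^n)$, which is super-polynomial in $2^n$ since $n! = 2^{\Omega(n\log n)}$.

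The second step is to leverage these super-polynomial stabilisers through the supporting-partition machinery of \cite{anderson2017symmetric}, adapted from the symmetric group to the hyperoctahedral group $\Aut(H_n) = B_n = S_n \ltimes (\Z/2)^n$ acting on $\{0,1\}^n$. The key lemma I would prove says: every subgroup of $B_n$ of super-polynomial order set-wise preserves some partition of $\{0,1\}^n$ with only polynomially many parts. Applying this to each $\Stab(y)$ gives a polynomial-size supporting partition $\Ss_y$ for each $y \in \tc(x_n)$. Since the CPT program witnessing the encoding of $\Pp_n$ performs only polynomially many set-theoretic operations on elements of $\tc(x_n)$, and meets of polynomially many polynomial-size partitions remain polynomial-size, I would combine the $\Ss_y$ into a single polynomial-size partition $\Ss$ of $\{0,1\}^n$ whose set-wise stabiliser in $\Aut(H_n)$ is contained in $\Stab(\Pp_n) = \bigcap_i \Stab_{\text{set}}(C_i)$.

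The third step, and the main obstacle, is the combinatorial core: to show that no polynomial-size set-partition $\Ss$ of $\{0,1\}^n$ can satisfy the above containment once every $C_i$ has size $\Oo(n) = \Oo(\log|H_n|)$. Because the number of colour classes is $\Omega(2^n/n)$, any polynomial-size $\Ss$ is much coarser than the colour partition, so some part of $\Ss$ must either contain many whole $C_i$ or cut across them. The task is then to exhibit an automorphism $(\sigma, v) \in B_n$ that preserves $\Ss$ set-wise but sends some $C_i$ to a distinct $C_j$, contradicting the stabiliser containment. The delicate point is that $B_n$ couples a coordinate-permuting $S_n$-action with an XOR action of $(\Z/2)^n$, and the $C_i$ behave like tiny length-$n$ binary ``codes'' whose joint $B_n$-stabiliser must be forced down to polynomial size. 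Carrying out the averaging/counting argument inside $B_n$ while genuinely using the $\Oo(\log|H_n|)$ bound on colour classes is where the real technical weight lies, and is the step that I expect to require the bulk of the work.
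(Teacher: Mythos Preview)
Your Step~1 is correct and is exactly the reduction the paper uses: if $x_n$ is symmetric and $p$-bounded, then every element of $\tc(x_n)$ has orbit of size at most $p(2^n)$. From this point the paper proceeds directly: it proves that $\Pp_n$ itself has super-polynomial orbit under $\Sym_n\le\Aut(H_n)$ acting on bit positions (\autoref{thm:mainTechnical}), hence any h.f.\ encoding of $\Pp_n$ does too, and therefore no symmetric $p$-bounded object can contain one.

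Your Steps~2--3, by contrast, contain genuine gaps. In Step~2 the claim that ``meets of polynomially many polynomial-size partitions remain polynomial-size'' is false: the meet of $k$ partitions with at most $m$ parts each can have up to $m^{k}$ parts, which is doubly exponential when $k$ and $m$ are both polynomial in $2^n$. The direction of the containments is also off: the sandwich lemma gives $\Stab(y)\le\Stab(\Ss_y)$, not the reverse, so there is no way to combine the $\Ss_y$ into an $\Ss$ with $\Stab(\Ss)\subseteq\Stab(\Pp_n)$. And Step~3 as formulated is outright false: there \emph{are} partitions of $\{0,1\}^n$ with only $O(n)$ parts whose setwise $\Aut(H_n)$-stabiliser is trivial (for instance take singletons $\{1^k0^{n-k}\}$ for $0\le k\le n$ together with one further generic singleton to kill the residual order-two symmetry, and put everything else into one big part). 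Such an $\Ss$ satisfies $\Stab(\Ss)=\{1\}\subseteq\Stab(\Pp_n)$ for \emph{every} $\Pp_n$, so the statement you plan to prove in Step~3 cannot hold.

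The conceptual miss behind both problems is the ground set of the supporting partitions. You partition the vertex set $\{0,1\}^n$; the paper instead partitions the \emph{index set} $[n]$ and works only inside the subgroup $\Sym_n$ that permutes bit positions. At that scale every support has at most $n$ parts, and the sandwich lemma yields the usable bound $\Stab_n(\Pp_n)\le\bigcap_{C\in\Pp_n}\Stab_n(\Sp(C))$. The technical core is then a case split on whether $\max_{C}|\Sp(C)|$ is $o(n)$ or $\Theta(n)$: in the sublinear case one selects a small subfamily $J_n\subseteq\Pp_n$ whose intersected supports already individualise almost all of $[n]$ and counts the realisable tuples of part-permutations; in the linear case a single colour class with support of size $\Theta(n)$ already has small enough stabiliser, established via a witness subset of that class of size at most $n/2$. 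The $\Oo(n)$ bound on the colour classes is used essentially in both branches --- this is the ``real technical weight'' you anticipate, but it lives over $[n]$, not over $\{0,1\}^n$.
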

The proof can be found in \autoref{sec:main}.
As already explained, this implies the following nondefinability statement for CPT.
\begin{corollary}
	There is no CPT-program that computes in every hypercube $H_n$ a (h.f.\ set representation of a) total preorder with colour classes of size $\Oo(n)= \Oo(\log |H_n|)$.
\end{corollary}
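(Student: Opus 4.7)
The plan is to derive the corollary from Theorem~\ref{thm:mainResult} by invoking the two abstract properties of CPT runs that were spelled out just above the theorem: every h.f.\ set produced along a run of a CPT program is \emph{symmetric} with respect to the input structure and \emph{$p$-bounded} for the polynomial $p$ attached to the program.

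I would argue by contradiction. Suppose some CPT program $\Pi$, with explicit polynomial bound $p$, produces on every hypercube $H_n$ a h.f.\ set $y_n$ that represents a total preorder $\Pp_n = (C_1, \dots, C_{m_n})$ whose colour classes all have size $\Oo(n) = \Oo(\log |H_n|)$. Since $\Pi$ is isomorphism-invariant and $\pi(H_n) = H_n$ for every $\pi \in \Aut(H_n)$, running $\Pi$ on $\pi(H_n)$ must produce both $y_n$ (determinism on identical inputs) and $\pi(y_n)$ (isomorphism-invariance), so $\pi(y_n) = y_n$. Hence $y_n$ is symmetric w.r.t.\ $H_n$. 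Moreover, the polynomial bound attached to $\Pi$ gives $|\tc(y_n)| \le p(|H_n|)$, so $y_n$ is $p$-bounded.

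The last step is to check that $y_n$ \emph{encodes} $\Pp_n$ in the sense defined just before Theorem~\ref{thm:mainResult}: either $y_n$ is already the canonical nested-set representation $\{C_1, \{C_2, \{C_3, \{\dots\}\}\}\}$, in which case the identity program trivially witnesses the encoding, or $\Pi$ can be post-composed with a short CPT routine converting its output into this nested form. Either way, the sequence $(y_n)_n$ together with $(\Pp_n)_n$ satisfies all the hypotheses of Theorem~\ref{thm:mainResult}, whose conclusion then asserts that no polynomial bounds $|\tc(y_n)|$ in $|H_n|$. This directly contradicts the $p$-boundedness established in the previous step, completing the argument.

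There is no genuine technical obstacle in carrying this out; the entire substance of the corollary is absorbed into Theorem~\ref{thm:mainResult}. The only thing to be slightly careful about is the match between the informal phrase ``h.f.\ set representation of a preorder'' used in the corollary and the formal notion of \emph{encoding} fixed just before the theorem, but this is a definition-chasing matter rather than a real obstacle, since the definition was set up precisely to accommodate any CPT-decodable representation.
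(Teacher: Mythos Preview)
Your proposal is correct and follows essentially the same route as the paper, which does not give a separate proof of the corollary but simply states that it follows from Theorem~\ref{thm:mainResult} together with the abstract properties of CPT runs (symmetry and $p$-boundedness) spelled out earlier. Your write-up makes these implicit steps explicit and handles the match between ``h.f.\ set representation'' and the formal notion of \emph{encoding} exactly as intended.
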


\section{Previous work and the significance of undefinable preorders}
\label{sec:CFI}
As already mentioned, our contribution is a non-definability result for a \emph{functional problem}, the computation of certain preorders.\\
However, our research is motivated by the study of a \emph{decision problem} which is seen as a potential candidate for the separation of CPT from PTIME: The so-called \emph{CFI problem}, that we briefly introduce next. Whether CFI in its general version is solvable in $\CPT$ is an open question, but at least for restricted versions, where the structures possess a certain degree of built-in order, it is known to be in CPT. Our non-definability result implies that being able to solve the restricted version of CFI in CPT is of no help for solving CFI in the general case.
\paragraph*{The CFI problem}
For a detailed account of the CFI problem and the construction of the so-called CFI graphs, we refer the reader to the original paper \cite{caifurimm92} by Cai, Fürer and Immerman. Here, we only review it to an extent sufficient for our purposes.\\

Essentially, CFI is the Graph Isomorphism problem on specific pairs of graphs that are obtained by applying the so-called CFI construction to a family of connected graphs, for example, to hypercubes. These are referred to as the \emph{underlying graphs}. The construction replaces every edge and every vertex of the underlying graph with a gadget. Importantly, the symmetries of the underlying graph are preserved this way.
Any underlying graph $G$ can be transformed into an odd and an even CFI graph, $G_0$ and $G_1$. It holds $G_0 \not\cong G_1$, and there is a simple polynomial time algorithm which can determine, given a CFI graph $G_x$, whether it is odd or even, i.e.\ if $G_x \cong G_0$, or $G_x \cong G_1$. This is what the CFI problem asks for.\\
However, on the logical side, that is, in $\FO$ with counting, $G_0$ and $G_1$ can only be distinguished with a linear number of variables. As a consequence, no FPC-sentence can solve the CFI-problem (on a suitable class of underlying graphs). Since this very expressive ``reference logic'' within PTIME fails to solve CFI, this raises the question whether CPT is strong enough to achieve this, or if CFI is indeed a problem that separates CPT from P.
\paragraph*{Solving CFI in CPT}
If the underlying graphs of the CFI construction satisfy certain properties, then CFI can be solved in CPT:
\begin{theorem}[\cite{pakusa2018definability}]
	\label{thm:svenjaPreorder}
	Let $\Kk$ be the class of connected, preordered graphs $G = (V,E,\prec)$ where the size of each colour class is bounded by $\log |V|$. The CFI problem on underlying graphs in $\Kk$ can be solved in Choiceless Polynomial Time.
\end{theorem}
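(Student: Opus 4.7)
The plan is to build a CPT program that decides, given a CFI graph $G_x$ whose underlying graph $G=(V,E,\prec)$ lies in $\Kk$, whether $G_x \cong G_0$ or $G_x \cong G_1$. The standard reduction turns this decision into solving a linear system over $\mathbb{F}_2$: the edge gadgets of the CFI construction give rise to $\mathbb{F}_2$-variables, the vertex gadgets to parity constraints, and the ``twist parity'' of $G_x$ is encoded by the constant side of this system. So the whole task reduces to symmetrically defining this linear system and then symmetrically applying Gaussian elimination to determine whether it is consistent. Linear algebra over $\mathbb{F}_2$ is already a canonical operation once a linear order on the variables is available; the whole difficulty lies in manufacturing enough order choicelessly.

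The key algorithmic leverage is the preorder $\prec$ together with the logarithmic bound $|C_i|\leq \log|V|$. I would exploit this in two stages. First, process the colour classes $C_1\prec C_2\prec\dots\prec C_m$ in order using CPT's fixed-point mechanism: at stage $i$, assume that a canonical linear order on the edge-variables associated with $\bigcup_{j<i}C_j$ has already been defined as a hereditarily finite object. Second, to extend the order to $C_i$, make use of the fact that $|C_i|\leq\log|V|$ implies $2^{|C_i|}\leq|V|$; hence the power set $2^{C_i}$, and similarly all ordered tuples over $C_i$ of any fixed length, form only polynomially many h.f.\ sets over $V$. This is the classical trick of \cite{pakusa2018definability}: even though $(\log|V|)!$ may be super-polynomial, every individual structure that one needs to build over $C_i$ (subsets, characteristic vectors, sequences of Gauss pivots) has polynomial multiplicity.

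Concretely, I would define, for each $C_i$, the h.f.\ set of all linear extensions of $\prec$ restricted to $C_i$ by recursively pivoting on subsets of $C_i$: at each step select, symmetrically, the collection of minimal still-unordered elements and enumerate their possible positions as h.f.\ tuples. Since $|C_i|\leq\log|V|$, this recursion has polynomial depth and each stage keeps the activated h.f.\ universe polynomial. Running Gaussian elimination ``in parallel'' over all these linear extensions — using the set of extensions as an h.f.\ index set — produces, at the top level, a Boolean value that depends only on the isomorphism type of $G_x$ (because the outcome of Gauss elimination on a consistent/inconsistent $\mathbb{F}_2$-system does not depend on the pivot order). The preorder ensures that the recursion across different colour classes is itself canonical, so the whole construction is symmetry-invariant and $p$-bounded for a suitable polynomial $p$.

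The main obstacle is the interaction between symmetry and polynomial bounding in stage two: one must verify that the h.f.\ object storing all linear extensions within a single colour class, together with the running state of Gauss elimination indexed by these extensions, remains of polynomial transitive closure throughout the fixed-point iteration. This relies crucially on $|C_i|=O(\log|V|)$, and would fail already for colour classes of size $\omega(\log|V|)$ — which is why \autoref{thm:mainResult} identifies this regime as the precise boundary of the PSS technique. A secondary technical point is to express the CFI-to-$\mathbb{F}_2$-system reduction itself within CPT, but this is straightforward from the gadget structure since every gadget has bounded size and is locally definable in FPC, hence in CPT.
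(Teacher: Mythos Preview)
This theorem is not proved in the present paper: it is quoted from \cite{pakusa2018definability} and only its statement and a one-line description of the method (``the algorithmic creation of a so-called `super-symmetric' h.f.\ object with polynomial orbit'') appear here. So there is no in-paper proof to compare against; I can only comment on your sketch relative to what the paper says about the cited technique.

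Your high-level strategy --- process the colour classes in the $\prec$-order and exploit $|C_i|\le\log|V|$ to keep the constructed h.f.\ sets polynomial --- is exactly the leverage used in \cite{pakusa2018definability}. But the concrete object you propose to build does not stay polynomial. You write that you would ``define, for each $C_i$, the h.f.\ set of all linear extensions of $\prec$ restricted to $C_i$'' and then run Gaussian elimination ``in parallel'' over this index set. That set has size $|C_i|!$, and for $|C_i|=\log|V|$ this is $(\log|V|)!\approx |V|^{\Theta(\log\log|V|)}$, which is super-polynomial --- a fact you yourself note two sentences earlier. So the object you hand to the elimination routine already violates the $p$-bound before any computation starts; the later remark that ``each stage keeps the activated h.f.\ universe polynomial'' is simply not true for this construction. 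You flag this as ``the main obstacle'' but do not discharge it.

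The construction in \cite{pakusa2018definability} (extending \cite{dawar2008choiceless}) avoids this by never materialising linear extensions at all. Instead it builds, inductively along the colour classes, a single symmetric h.f.\ set whose elements are certain \emph{subsets} of the edge-gadget atoms seen so far; the point is that within one colour class only $2^{O(|C_i|)}\le\mathrm{poly}(|V|)$ such subsets arise, and the nesting along $\prec$ keeps the transitive closure polynomial. The parity of the CFI instance is then read off from this super-symmetric object directly, not via a separate Gaussian-elimination pass indexed by orderings. If you want to repair your argument, the fix is to replace ``set of linear extensions of $C_i$'' by ``set of relevant subsets of (edges incident to) $C_i$'' and to show that the inductive state after each colour class is a single symmetric polynomial-size set rather than a super-polynomial family of ordered runs.
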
	
This is the strongest known positive result concerning CFI and CPT. It is a generalisation of the CPT-algorithm by Dawar, Richerby and Rossman from \cite{dawar2008choiceless} for the CFI problem on linearly ordered graphs. Note that not the CFI graphs $G_0, G_1$ are ordered/preordered in these settings, but only the underlying graph $G$ (otherwise, the Immerman-Vardi Theorem could be applied). The order/preorder on $G$ allows for the algorithmic creation of a so-called ``super-symmetric'' h.f.\ object with polynomial orbit which makes it possible to determine the parity of the input CFI graph $G_x$. This object reflects in its structure the preorder on the input, and is therefore not definable in unordered inputs according to our \autoref{thm:mainResult}: It can be checked that \autoref{thm:mainResult} not only holds for hypercubes but also for the CFI graphs obtained from them; this is true because $\Aut(G)$ embeds into $\Aut(G_i)$ for any graph $G$ and corresponding CFI graph $G_i$. Hence, the algorithmic technique that proves \autoref{thm:svenjaPreorder} cannot be generalised to the CFI problem on unordered graphs. In fact, any CPT algorithm that is to solve the unordered CFI problem must avoid the construction of a h.f.\ object whose nesting structure induces a too fine preorder on the input.\\

We remark that there are of course families of graphs where the undefinability of such preorders is much easier to show than on hypercubes. For instance, on complete graphs, it is clear that the orbit of a preorder with logarithmic colour classes grows super-polynomially. However, the size of any CFI graph $G_0$ is exponential in the maximal degree of $G$. Therefore, the polynomial resources of CPT suffice to solve CFI on unordered graphs of linear maximal degree (this is another result from \cite{pakusa2018definability}). Hence, complete graphs do not yield hard CFI instances. In contrast, CFI on hypercubes is well-suited as a benchmark for CPT because their degree is logarithmic and thus the CFI construction only increases the size polynomially.\\

Our lower bound is a first piece of evidence that the CFI problem on hypercubes is hard (and maybe even unsolvable) for CPT and we believe that it deserves further investigation. The results in \cite{dawar2008choiceless} indirectly suggest a systematic way to do so: Namely, Dawar, Richerby and Rossman showed that -- as long as the CFI structures satisfy a certain homogeneity condition -- solving the CFI problem in CPT always requires the construction of a h.f.\ set which contains a large subset of the input structure as atoms. If it were possible to show that no sufficiently large h.f.\ object over hypercubes has a polynomial orbit, then this could be used to separate CPT from PTIME. Our result is a step in that direction as it suggests that this large object cannot be structurally similar to a preorder.

\section{Analysing orbits of hereditarily finite objects over hypercubes}
\label{sec:hypercubes}

Let $H_n = (V_n, E_n)$ be the $n$-dimensional hypercube, i.e.\ $V_n = \{0,1\}^n$,\\ $E_n = \{\{u,v\} \in V_n^2 \mid d(u,v) = 1 \}$, where $d(u,v)$ is the Hamming-distance.\\ 
Its automorphism group $\Aut(H_n)$ is isomorphic to the semi-direct product of $\Sym_n$ and $(\{0,1\}^n,\oplus)$ \cite{harary2000automorphism}, where $\Sym_n$ is the symmetric group on $[n] = \{1,2,...,n\}$, and $(\{0,1\}^n,\oplus)$ is the group formed by the length-$n$ binary strings together with the bitwise XOR-operation. More precisely, any automorphism $\sigma \in \Aut(H_n)$ corresponds to the pair $(\pi,w) \in \Sym_n \times \{0,1\}^n$ with $\sigma(v) = \pi(v) \oplus w$, where $\pi(v) = v_{\pi^{-1}(1)}v_{\pi^{-1}(2)}...v_{\pi^{-1}(n)}$ (i.e.\ $\pi(v)$ is obtained from $v$ by permuting the positions of the word according to $\pi$). This means: $|\Aut(H_n)| = n! \cdot 2^n$. Note that it is the factor $n!$ which makes the size of this group super-polynomial in $|V_n| = 2^n$.\\

Our main technical theorem, \autoref{thm:mainTechnical} concerns a fixed sequence of ordered partitions of the $n$-dimensional hypercubes, $(\Pp_n)_{n \in \bbN}$. We aim for a lower bound on the orbit size of these ordered partitions $\Pp_n$ w.r.t.\ the action of $\Aut(H_n)$. For our purposes, it only matters whether this lower bound is super-polynomial in $2^n = |V_n|$, or not. For this question, we can restrict ourselves to automorphisms corresponding to permutation-word pairs of the form $(\pi,0^n)$, for $\pi \in \Sym_n$. Therefore, for the rest of this paper, we simply let $\Sym_n$ act on $V_n$ by permuting the positions of the binary strings as described above. In this sense, $\Sym_n$ embeds into $\Aut(H_n)$, and hence, whenever $\Pp_n$ has a super-polynomial orbit with respect to this action of $\Sym_n$, this is also true with respect to the action of $\Aut(H_n)$.\\

To sum up, our task is to lower-bound the orbit-sizes of ordered partitions $\Pp$ of $\{0,1\}^n$ with respect to $\Sym_n$ acting on the positions of the strings. We do this via the Orbit-Stabiliser Theorem. 
Let $\Stab_n(\Pp)$ and $\Orbit_n(\Pp)$ denote the stabiliser and orbit, respectively, of $\Pp$ w.r.t.\ the action of $\Sym_n$ on the string positions:
\[
\Stab_n(\Pp) := \{ \pi \in \Sym_n \mid \pi(\Pp) = \Pp \}, \
\Orbit_n(\Pp) := \{\pi(\Pp) \mid \pi \in \Sym_n\}.
\]

\begin{proposition}[Orbit-Stabiliser]
\[
|\Orbit_n(\Pp)|= \frac{|\Sym_n|}{|\Stab_n(\Pp)|} = \frac{n!}{|\Stab_n(\Pp)|}.
\]
\end{proposition}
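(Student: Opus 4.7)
The statement is a direct instance of the classical Orbit-Stabiliser theorem applied to the action of $\Sym_n$ on ordered partitions of $\{0,1\}^n$, so the plan is simply to specialise the standard group-theoretic proof to this setting.

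First I would verify that $\Stab_n(\Pp)$ really is a subgroup of $\Sym_n$: the identity clearly stabilises $\Pp$, composition of stabilising permutations stabilises $\Pp$ (since the action of $\Sym_n$ on ordered partitions is by group homomorphism), and inverses do too because $\pi(\Pp) = \Pp$ forces $\pi^{-1}(\Pp) = \Pp$. Hence the set of left cosets $\Sym_n / \Stab_n(\Pp)$ is well-defined and has cardinality $n! / |\Stab_n(\Pp)|$ by Lagrange's theorem.

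Next I would construct the key bijection
\[
\Phi : \Sym_n / \Stab_n(\Pp) \longrightarrow \Orbit_n(\Pp), \qquad \pi \, \Stab_n(\Pp) \longmapsto \pi(\Pp).
\]
Well-definedness and injectivity are handled by the same calculation: for $\pi_1, \pi_2 \in \Sym_n$, the equality $\pi_1(\Pp) = \pi_2(\Pp)$ is equivalent to $\pi_2^{-1}\pi_1(\Pp) = \Pp$, i.e.\ $\pi_2^{-1}\pi_1 \in \Stab_n(\Pp)$, i.e.\ $\pi_1 \Stab_n(\Pp) = \pi_2 \Stab_n(\Pp)$. Surjectivity is immediate from the definition of $\Orbit_n(\Pp)$ as the set of all $\pi(\Pp)$ with $\pi \in \Sym_n$.

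Combining the bijection $\Phi$ with Lagrange's theorem yields $|\Orbit_n(\Pp)| = |\Sym_n / \Stab_n(\Pp)| = n! / |\Stab_n(\Pp)|$, as claimed. There is no real obstacle here; the only thing worth flagging is that the statement silently uses the fact that the $\Sym_n$-action on ordered partitions is indeed a group action (which follows from the positional action on binary strings being one), so that the standard orbit-stabiliser argument applies verbatim.
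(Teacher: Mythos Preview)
Your proof is correct and is exactly the standard textbook argument for the Orbit--Stabiliser theorem. Note, however, that the paper does not supply its own proof of this proposition: it is stated as a classical fact and used without further justification. So there is nothing to compare against; your write-up simply fills in the well-known details that the paper omits.
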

This means that we have to upper-bound $|\Stab_n(\Pp)|$. Since the partition is viewed as an ordered tuple of parts, any automorphism that stabilises $\Pp = (C_1,...,C_m)$ must stabilise each $C_i$ (not necessarily pointwise, but as a set):

\begin{proposition}
	\label{prop:basicFact}
	\[
	\Stab_n(\Pp) \subseteq \bigcap_{C \in \Pp} \Stab_n(C).
	\]
\end{proposition}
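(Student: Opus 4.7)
The plan is to unfold the definitions and observe that the ordering of the partition forces each part to be stabilised setwise by any element of $\Stab_n(\Pp)$.

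First I would fix an arbitrary $\pi \in \Stab_n(\Pp)$ and recall the action of $\Sym_n$ on an ordered partition: the permutation acts componentwise, so $\pi(\Pp) = (\pi(C_1), \pi(C_2), \ldots, \pi(C_m))$, where $\pi(C_i) = \{\pi(v) \mid v \in C_i\}$. The hypothesis $\pi \in \Stab_n(\Pp)$ means precisely that $(\pi(C_1), \ldots, \pi(C_m)) = (C_1, \ldots, C_m)$ as tuples. Since equality of tuples is componentwise, this immediately yields $\pi(C_i) = C_i$ for every $i \in \{1, \ldots, m\}$, i.e.\ $\pi \in \Stab_n(C_i)$ for each $i$. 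Intersecting over all parts gives the claimed inclusion.

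The only subtle point worth emphasising is the role of the ordering: if $\Pp$ were an \emph{unordered} partition, then $\pi$ would be allowed to permute equally-sized parts among themselves, and the inclusion could fail. It is the fact that $\Pp$ is a tuple (and the action compares components positionwise) that makes each $C_i$ individually invariant. There is no real obstacle here; the statement is essentially a definitional unwinding, and this is why the excerpt labels it a basic fact. The interesting work will come in later sections, where one has to bound $|\Stab_n(C)|$ for a single colour class $C$ of logarithmic size and combine the bounds across all parts.
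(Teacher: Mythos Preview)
Your argument is correct and matches the paper's own treatment: the paper does not give a formal proof but simply notes in the sentence preceding the proposition that since $\Pp$ is an ordered tuple of parts, any $\pi$ stabilising $\Pp$ must stabilise each $C_i$ setwise. Your unfolding of the componentwise action on tuples is exactly this observation made explicit.
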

In other words, we have reduced our problem to upper-bounding the size of the simultaneous stabiliser group of a collection of sets of bitstrings. In the next section, we introduce a tool that we need in order to accomplish this: So-called \emph{supporting partitions}.

\section{Approximating permutation groups with supporting partitions}
\label{sec:support}
The notions and results in this section are mostly taken from the paper on Symmetric Circuits and FPC by Anderson and Dawar \cite{anderson2017symmetric}. 

\begin{definition}
	Let $\Pp$ be a partition of $[n]$. 
	\begin{itemize}
		\item The \emph{pointwise} stabiliser of $\Pp$ is $\StabP_n(\Pp) := \{\pi \in \Sym_n \mid \pi(P) = P \text{ for all } P \in \Pp \}.$
		\item The \emph{setwise} stabiliser of $\Pp$ is $\Stab_n(\Pp) := \{\pi \in \Sym_n \mid \pi(P) \in \Pp \text{ for all } P \in \Pp \}$ (these are all $\pi \in \Sym_n$ that induce a permutation on the parts of $\Pp$).
	\end{itemize}	
\end{definition}

\begin{definition}[Supporting Partition, \cite{anderson2017symmetric}]
	Let $G \leq \Sym_n$ be a group. A supporting partition $\Pp$ of $G$ is a partition of $[n]$ such that $\StabP_n(\Pp) \leq G$.
\end{definition}
A group $G \subseteq \Sym_n$ may have several supporting partitions but there always exists a unique \emph{coarsest supporting partition}. A partition $\Pp'$ is as coarse as a partition $\Pp$, if every part in $\Pp$ is contained in some part in $\Pp'$. For any two partitions $\Pp, \Pp'$ there exists a finest partition $\Ee(\Pp,\Pp')$ that is as coarse as either of them:

\begin{definition}[\cite{anderson2017symmetric}]
	Let $\Pp,\Pp'$ be partitions of $[n]$. Let $\sim$ be a binary relation on $[n]$ such that $x \sim y$ iff there exists a part $P \in \Pp$ or $P \in \Pp'$ such that $x,y \in P$. Then $\Ee(\Pp,\Pp')$ is the partition of $[n]$ whose parts are the equivalence classes of $[n]$ under the transitive closure of $\sim$.
\end{definition}
As shown in \cite{anderson2017symmetric}, the property of being a supporting partition of a group $G \subseteq \Sym_n$ is preserved under the operation $\Ee$. Therefore it holds:

\begin{lemma}[\cite{anderson2017symmetric}]
	Each permutation group $G \leq \Sym_n$ has a unique \emph{coarsest supporting partition}, denoted $\Sp(G)$.
\end{lemma}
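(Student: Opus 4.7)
The proof plan is short because the nontrivial technical content has already been imported from \cite{anderson2017symmetric}, namely the fact that being a supporting partition of $G$ is preserved under the operation $\Ee$. Given this, the lemma reduces to existence plus a lattice-theoretic closure argument.

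First, I would check that the set $\Ss(G)$ of supporting partitions of $G$ is nonempty. Take the discrete partition $\Pp_0 = \{\{1\},\{2\},\ldots,\{n\}\}$. A permutation fixing every singleton is the identity, so $\StabP_n(\Pp_0) = \{\mathrm{id}\} \leq G$ for every $G \leq \Sym_n$. Hence $\Pp_0 \in \Ss(G)$, and since there are only finitely many partitions of $[n]$, we may enumerate $\Ss(G) = \{\Pp_1, \ldots, \Pp_k\}$.

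Next, I would define a candidate for the coarsest supporting partition by iterating $\Ee$:
\[
\Sp(G) := \Ee(\Pp_1, \Ee(\Pp_2, \ldots, \Ee(\Pp_{k-1}, \Pp_k)\ldots)).
\]
By the preservation property cited from \cite{anderson2017symmetric}, an easy induction on $k$ shows $\Sp(G) \in \Ss(G)$. By construction of $\Ee$ as the finest common coarsening, another straightforward induction shows that $\Sp(G)$ is as coarse as each $\Pp_i$; that is, every part of every supporting partition $\Pp_i$ is contained in some part of $\Sp(G)$. So $\Sp(G)$ is at least as coarse as every element of $\Ss(G)$.

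Finally, uniqueness: suppose $\Qq \in \Ss(G)$ is also a coarsest supporting partition. Then $\Ee(\Qq,\Sp(G)) \in \Ss(G)$ by preservation and is as coarse as both $\Qq$ and $\Sp(G)$. Since neither $\Qq$ nor $\Sp(G)$ admits a strictly coarser supporting partition, we must have $\Ee(\Qq,\Sp(G)) = \Qq = \Sp(G)$, giving $\Qq = \Sp(G)$. The main (and only) obstacle I foresee is verifying that the ``as coarse as'' relation behaves as expected under iterated $\Ee$, but this is immediate from the definition of $\Ee$ via the transitive closure of the relation $\sim$, so no real difficulty arises.
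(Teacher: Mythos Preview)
Your proposal is correct and follows exactly the approach the paper indicates: the paper simply cites from \cite{anderson2017symmetric} that being a supporting partition is preserved under $\Ee$ and states the lemma as an immediate consequence, and your argument is the standard way to spell out that consequence (nonemptiness via the discrete partition, closure under $\Ee$ on the finite lattice of partitions, and uniqueness by maximality). Nothing further is needed.
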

When we write $\Sp(a)$ for $a \in \HF(\{0,1\}^n)$, we mean $\Sp(\Stab_n(a))$, that is, the coarsest supporting partition of the stabiliser of $a$, where -- as in the previous section -- we consider the stabiliser as the subgroup of $\Sym_n$ acting on the positions of the binary strings. Note that if $a \in \{0,1\}^n$, then $\Sp(a)$ is just the partition of $[n]$ into $\{ k \in [n] \mid a_k = 0 \}$ and $\{ k \in [n] \mid a_k = 1 \}$.\\
\\
The reason why coarsest supporting partitions are useful for estimating the sizes of certain stabiliser subgroups is the following result:
\begin{lemma}[\cite{anderson2017symmetric}]
	\label{lem:sandwichLemma}
	Let $G \leq \Sym_n$ be a group. Then:
	\[\StabP_n(\Sp(G)) \leq G \leq \Stab_n(\Sp(G)).\]
\end{lemma}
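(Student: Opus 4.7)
The plan is to treat the two inclusions separately, exploiting the maximality of $\Sp(G)$ among supporting partitions of $G$. The left inclusion $\StabP_n(\Sp(G)) \leq G$ is immediate from the definition of a supporting partition: $\Sp(G)$ is by construction a supporting partition of $G$, so its pointwise stabiliser is contained in $G$.

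For the right inclusion $G \leq \Stab_n(\Sp(G))$, the key intermediate fact I would establish is that the set of supporting partitions of $G$ is closed under the action of $G$ itself: for every $\pi \in G$, the partition $\pi(\Sp(G)) := \{\pi(P) \mid P \in \Sp(G)\}$ is again a supporting partition of $G$. To verify this, pick any $\sigma \in \StabP_n(\pi(\Sp(G)))$; then $\sigma(\pi(P)) = \pi(P)$ for every $P \in \Sp(G)$, which rearranges to $\pi^{-1}\sigma\pi \in \StabP_n(\Sp(G)) \leq G$. Since $\pi \in G$ and $G$ is a group, conjugation by $\pi$ preserves $G$, so $\sigma \in \pi G \pi^{-1} = G$, as required. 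This is the only place where membership $\pi \in G$ is actually used, and it is used essentially.

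To conclude, I would combine this closure property with the operation $\Ee$ and the uniqueness of the coarsest supporting partition. Since $\Sp(G)$ and $\pi(\Sp(G))$ are both supporting partitions, so is $\Ee(\Sp(G),\pi(\Sp(G)))$ by the previous lemma; and this new partition is at least as coarse as $\Sp(G)$, so by maximality it must equal $\Sp(G)$. Consequently every part of $\pi(\Sp(G))$ lies inside some part of $\Sp(G)$, i.e.\ $\pi(\Sp(G))$ refines $\Sp(G)$. But $\pi$ is a bijection of $[n]$, so $\pi(\Sp(G))$ has exactly as many parts as $\Sp(G)$; the refinement therefore collapses to equality $\pi(\Sp(G)) = \Sp(G)$, which is precisely the statement $\pi \in \Stab_n(\Sp(G))$.

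The only step requiring any real thought is the closure claim $\pi \in G \Rightarrow \pi(\Sp(G))$ supports $G$; once that is in place, the rest is either immediate from the definitions or a routine application of the maximality of $\Sp(G)$, so I do not anticipate a serious obstacle.
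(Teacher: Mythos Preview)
Your argument is correct. The paper itself does not supply a proof of this lemma; it is simply quoted from \cite{anderson2017symmetric}, so there is nothing to compare against directly. Your reasoning is the standard one: the first inclusion is the definition of a supporting partition, and for the second you show that $G$ acts on the set of supporting partitions (via the conjugation computation $\pi^{-1}\sigma\pi \in \StabP_n(\Sp(G)) \leq G$), whence the unique coarsest one must be a fixed point of this action.

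One minor remark: once you know that $\Sp(G)$ is the \emph{unique} coarsest supporting partition (i.e.\ the maximum in the refinement order, which the paper already records as a consequence of closure under $\Ee$), it follows immediately that \emph{every} supporting partition refines $\Sp(G)$. Hence your explicit detour through $\Ee(\Sp(G),\pi(\Sp(G)))$ in steps 2--3 is not strictly necessary; you can pass directly from ``$\pi(\Sp(G))$ is a supporting partition'' to ``$\pi(\Sp(G))$ refines $\Sp(G)$'' and then conclude equality by the part-counting observation. This does not affect correctness, only concision.
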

This lemma enables us to upper-bound stabilisers of subsets of $\{0,1\}^n$ by the stabilisers of their supporting partitions.\\
Finally, because we will frequently need it later in our proof, we define the operation $\sqcap$ as the ``intersection'' of two partitions of $[n]$:
\begin{definition}[Intersection of partitions]
	\label{def:intersection}
	Let $\Pp,\Pp'$ be partitions of $[n]$. The \emph{intersection} $\Pp \sqcap \Pp'$ is defined like this:
	\[\Pp \sqcap \Pp' := \{ \Pp(k) \cap \Pp'(k) \mid k \in [n] \}.\]
	Here, $\Pp(k), \Pp'(k)$ denote the parts of the respective partition that contain $k$.
\end{definition}

\section{The Super-Polynomial Orbit Theorem}
\label{sec:main}
Our main technical theorem reads as follows: 
\begin{theorem}
	\label{thm:mainTechnical}
Let $(\Pp_n)_{(n \in \bbN)}$ be a sequence where each $\Pp_n = (C_1,...,C_{m_n})$ is an ordered partition of $\{0,1\}^n$. Assume that $\max_{i \in [m_n]}{|C_i|}$ is in $\Oo(n)$.
~\\
Then, $|\Orbit_n(\Pp_n)|$ (as defined in \autoref{sec:hypercubes}) grows asymptotically faster than any polynomial in $2^n = |\{0,1\}^n|$.	
\end{theorem}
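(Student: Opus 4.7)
The plan is to combine the Orbit--Stabiliser proposition with an index-multiplication argument. I would reduce the statement to an upper bound of the form $|G| := |\Stab_n(\Pp_n)| \leq 2^{\Oo(\log^2 n)}$, since Stirling then gives $|\Orbit_n(\Pp_n)| = n!/|G| \geq 2^{n \log n - \Oo(\log^2 n)}$, which beats $2^{cn}$ for every constant $c$ and is therefore super-polynomial in $2^n$. A first simplification, coming from the ordered-partition setting together with Proposition~\ref{prop:basicFact} (and its obvious converse), is that $G = \bigcap_i \Stab_n(C_i)$. The core observation is that for every bitstring $a \in \{0,1\}^n$, the $G$-orbit of $a$ lies inside the unique $C_i$ containing $a$, so $|G \cdot a| \leq \max_i |C_i| \in \Oo(n)$.

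Next I would pick a small family of bitstrings whose $\Sym_n$-stabilisers jointly collapse to the identity. Concretely, take $k := \lceil \log_2 n \rceil$ bitstrings $a^{(1)}, \dots, a^{(k)} \in \{0,1\}^n$ such that the signature map $i \mapsto (a^{(1)}_i, \dots, a^{(k)}_i) \in \{0,1\}^k$ is injective on $[n]$; for instance, let $a^{(j)}_i$ be the $j$-th binary digit of $i$. Since $\Stab_{\Sym_n}(a) = \Sym_{\supp(a)} \times \Sym_{[n] \setminus \supp(a)}$, any $\pi \in \bigcap_j \Stab_{\Sym_n}(a^{(j)})$ must preserve every $\supp(a^{(j)})$ setwise, and injectivity of the signature map forces $\pi = \mathrm{id}$.

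With those two pieces in place, I would invoke the elementary subgroup inequality $[G : A \cap B] \leq [G:A]\cdot[G:B]$ (proved via the injection $G/(A \cap B) \hookrightarrow G/A \times G/B$), iterated over the subgroups $A_j := \Stab_G(a^{(j)}) = G \cap \Stab_{\Sym_n}(a^{(j)})$, to deduce
\[
|G| \;=\; \Bigl[G : \bigcap_{j=1}^{k} A_j\Bigr] \;\leq\; \prod_{j=1}^{k} [G : A_j] \;\leq\; \Oo(n)^{\lceil \log_2 n \rceil} \;=\; 2^{\Oo(\log^2 n)},
\]
where $[G : A_j]$ is the size of the $G$-orbit of $a^{(j)}$. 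Combined with the reduction above, this gives exactly the required bound.

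The point requiring the most care -- rather than a genuine obstacle -- is the verification that nothing beyond the uniform $\Oo(n)$ bound on part sizes is used: the bitstrings $a^{(j)}$ depend only on $n$, not on $\Pp_n$, and their colour class membership automatically yields the orbit bound. Notably, this route entirely bypasses the supporting-partition machinery of Section~\ref{sec:support}; I expect that the author uses those tools to obtain a more refined structural picture or to handle variants (e.g.\ sets of bitstrings that do not already partition $\{0,1\}^n$), but for the super-polynomiality statement as phrased, the direct orbit-index argument sketched above suffices.
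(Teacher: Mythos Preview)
Your argument is correct and considerably more direct than the paper's. The only cosmetic slip is in the Stirling step: $\log_2(n!) = n\log_2 n - n\log_2 e + O(\log n)$, so you should write $|\Orbit_n(\Pp_n)| \geq 2^{\,n\log_2 n - O(n)}$ rather than $2^{\,n\log_2 n - O(\log^2 n)}$; the conclusion that this dominates $2^{cn}$ for every constant $c$ is unaffected. Everything else checks out: the injectivity of the signature map forces $\bigcap_j \Stab_{\Sym_n}(a^{(j)}) = \{\mathrm{id}\}$, the coset-product inequality $[G:\bigcap_j A_j]\le\prod_j[G:A_j]$ is standard, and the fact that $\Pp_n$ partitions \emph{all} of $\{0,1\}^n$ is exactly what bounds each $[G:A_j]$ by $\max_i|C_i|\in\Oo(n)$.

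The paper takes a very different route. It never exploits the index-product inequality; instead it bounds $|\Stab_n(\Pp_n)|$ via the supporting-partition machinery of Section~\ref{sec:support} and a case split on whether $\max_C|\Sp(C)|$ grows linearly or sublinearly in $n$ (Lemmas~\ref{lem:resultCaseSublinear} and~\ref{lem:LOG_caseLinearSupport}), with a further sub-case on the number of singleton parts. The resulting stabiliser bounds (e.g.\ Corollary~\ref{cor:LOG_finalCorollaryCase1} and Lemma~\ref{lem:LOG_stabilizerBound}) are of order roughly $n^{\Theta(n)}$, vastly weaker than your $2^{\Oo(\log^2 n)}$, yet still just small enough for the orbit to be super-polynomial in $2^n$. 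Your approach wins on simplicity and on the quantitative bound. The paper's approach, by contrast, tracks how individual colour classes constrain the stabiliser via their supports; this structural information is what the author leans on when discussing extensions (e.g.\ the remarks in Section~\ref{sec:conclusion} about the threshold colour-class size, and the observation after Lemma~\ref{lem:resultCaseSublinear} that covering all of $\{0,1\}^n$ is essential). Your argument uses that covering hypothesis just as essentially---without it, the chosen $a^{(j)}$ need not lie in any colour class and the orbit bound evaporates---so neither method obviously generalises to partial preorders, but yours gives the theorem as stated with far less work.
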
 
From this, \autoref{thm:mainResult} follows because 
any h.f.\ set that encodes a preorder with colour classes $\Pp_n$ then also has super-polynomial orbit size in $|H_n|$:
\begin{proposition}
	\label{prop:encodingsHaveLargeOrbit}
	If $|\Orbit_n(\Pp_n)|$ grows super-polynomially in $|H_n|$, then the orbit-size of every $x_n \in \HF(H_n)$ that encodes $\Pp_n$ over $H_n$ also grows super-polynomially in $|H_n|$.
\end{proposition}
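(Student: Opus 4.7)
The plan is to establish the subgroup inclusion $\Stab_n(x_n) \leq \Stab_n(\Pp_n)$ inside $\Sym_n$ (viewed as the subgroup of $\Aut(H_n)$ that acts on bit-positions only). Once this is in place, the Orbit--Stabiliser Theorem immediately gives
\[
|\Orbit_n(x_n)| \;=\; \frac{n!}{|\Stab_n(x_n)|} \;\geq\; \frac{n!}{|\Stab_n(\Pp_n)|} \;=\; |\Orbit_n(\Pp_n)|,
\]
and since the full $\Aut(H_n)$-orbit of $x_n$ contains its $\Sym_n$-orbit, any super-polynomial lower bound on $|\Orbit_n(\Pp_n)|$ carries over to the $\Aut(H_n)$-orbit of $x_n$.

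To obtain the inclusion, let $\Pi$ be the CPT-program witnessing that $x_n$ encodes $\Pp_n$; thus on input $(H_n, x_n)$ the program $\Pi$ outputs the canonical nested set
\[
N_n \;:=\; \{C_1, \{C_2, \{\ldots, \{C_{m_n-1}, \{C_{m_n}\}\} \ldots\}\}\}.
\]
For any $\pi \in \Stab_n(x_n)$, $\pi$ is an automorphism of the augmented input $(H_n, x_n)$; by isomorphism-invariance of CPT the output of $\Pi$ is then fixed by $\pi$, i.e.\ $\pi(N_n) = N_n$, so $\pi \in \Stab_n(N_n)$.

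It therefore remains to verify $\Stab_n(N_n) \leq \Stab_n(\Pp_n)$, which I would do by a straightforward induction on the nesting depth. At each level one faces a two-element set of the form $\{C_i, R_{i+1}\}$, where $R_{i+1}$ abbreviates the remaining tail. The two elements are $\in$-distinguishable because $C_i \subseteq V_n$ consists entirely of atoms, whereas every member of $R_{i+1}$ is itself a non-atomic set. Consequently any $\pi$ that setwise fixes the outer pair must individually fix both $C_i$ and $R_{i+1}$; iterating down the nesting yields $\pi(C_i) = C_i$ for every $i \in [m_n]$, which by definition means $\pi \in \Stab_n(\Pp_n)$.

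The only real subtlety is this last reconstructibility argument: one must confirm that the chosen nested-set encoding is rigid enough that head and tail are type-distinguishable at every level. Since each $C_i$ is a non-empty set of atoms while each tail $R_{i+1}$ contains at least one non-atomic set, the head/tail types can never coincide, and the induction goes through unimpeded.
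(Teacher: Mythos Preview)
Your argument is correct and follows the same line as the paper's own proof sketch: invoke isomorphism-invariance of the CPT decoding program to pass from $\Stab_n(x_n)$ to $\Stab_n(N_n)$, then observe that $N_n$ has the same stabiliser as $\Pp_n$. The paper compresses this into three sentences (``CPT-computations do not break symmetries'' and ``this set has exactly the symmetries of the ordered partition $\Pp_n$''), whereas you spell out the Orbit--Stabiliser comparison explicitly and verify the head/tail distinguishability at each nesting level; this extra care is appropriate, and your observation that every element of the tail $R_{i+1}$ is non-atomic while every element of $C_i$ is an atom is exactly the right way to justify what the paper leaves implicit.
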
	
\textit{Proof sketch.} Suppose for a contradiction that $x_n$ encodes $\Pp_n$ and has an orbit of polynomial size. Then by our notion of encoding (see Section \ref{sec:CPT}), there exists a CPT-program that computes the representation $\{ C_1, \{ C_2, \{ C_3, \{  ...  \} \} \}   \}$ on input $H_n$ and $x_n$. But this set has exactly the symmetries of the ordered partition $\Pp_n$. Since CPT-computations do not break symmetries, it would then follow that also $\Pp_n$ has an orbit of polynomial size. \hfill \qedsymbol \\

Any \emph{symmetric} (see \autoref{def:symmetry}) h.f.\ object that contains some $x_n$ encoding $\Pp_n$ must necessarily contain $\Orbit_n(x_n)$, too. And then, this h.f.\ object is not $p$-bounded with respect to $|H_n| = 2^n$ because of Theorem \ref{thm:mainTechnical} and Proposition \ref{prop:encodingsHaveLargeOrbit}. This proves \autoref{thm:mainResult}.\\

We start to explain the proof idea of \autoref{thm:mainTechnical} by stating the following summary of \autoref{prop:basicFact} and \autoref{lem:sandwichLemma}:
\begin{corollary}
\label{cor:basicFact}
\[
\Stab_n(\Pp_n) \leq \bigcap_{C \in \Pp_n} \Stab_n(C) \leq \bigcap_{C \in \Pp_n} \Stab_n(\Sp(C)).
\]
\end{corollary}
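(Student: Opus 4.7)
The corollary is a straightforward combination of the two facts that are explicitly cited just before it, so my plan is essentially to chain them together; no new ideas are needed.

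For the first inclusion $\Stab_n(\Pp_n) \leq \bigcap_{C \in \Pp_n} \Stab_n(C)$ I would simply invoke \autoref{prop:basicFact}. That proposition already states precisely this containment, its content being that any permutation which fixes the ordered tuple $\Pp_n = (C_1,\dots,C_{m_n})$ must in particular send each listed part $C_i$ to itself (setwise), since the ordering of the tuple forbids swapping distinct parts. So this step is just a restatement.

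For the second inclusion $\bigcap_{C \in \Pp_n} \Stab_n(C) \leq \bigcap_{C \in \Pp_n} \Stab_n(\Sp(C))$ I would apply \autoref{lem:sandwichLemma} pointwise: for each fixed $C \in \Pp_n$, setting $G := \Stab_n(C)$, the right-hand inequality of the sandwich lemma gives $G \leq \Stab_n(\Sp(G))$, i.e.\ $\Stab_n(C) \leq \Stab_n(\Sp(C))$ (recalling the convention that $\Sp(C)$ denotes $\Sp(\Stab_n(C))$). Intersecting both sides of this inclusion over all $C \in \Pp_n$ preserves the inclusion, which yields exactly the required second containment.

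Combining the two inclusions in a single chain of $\leq$'s then gives the statement of the corollary. I do not foresee any obstacle: the only point that might warrant a brief reminder to the reader is the notational convention that $\Sp(C)$ abbreviates $\Sp(\Stab_n(C))$, to make clear that \autoref{lem:sandwichLemma} is being applied with $G = \Stab_n(C)$ rather than with $G$ being some group naturally attached to the set $C$ itself. Because the argument is this short, I would likely present it as a one-line proof or simply remark that the corollary follows immediately from \autoref{prop:basicFact} and \autoref{lem:sandwichLemma}.
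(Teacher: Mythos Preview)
Your proposal is correct and matches the paper's approach exactly: the paper does not give a separate proof but simply introduces the corollary as ``the following summary of \autoref{prop:basicFact} and \autoref{lem:sandwichLemma}'', which is precisely the two-step chaining you describe.
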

We are going to employ the Orbit-Stabiliser Theorem in order to obtain our lower bound for the orbit size. Hence, we need to bound $|\Stab_n(\Pp_n)|$ from above, and \autoref{cor:basicFact} already indicates the basic principle of our proof: Splitting up $\Pp_n$ into its colour classes and analysing the stabilisers of their respective supporting partitions.\\
Our analysis of $|\Stab_n(\Pp_n)|$ is divided into two main cases that we treat separately. The distinction is with respect to the maximum size of the coarsest support of any part of $\Pp_n$, viewed as a function of $n$:\\
\\
Let $B_n \subseteq \{0,1\}^n$ denote the part $C$ of $\Pp_n$ such that $|\Sp(B_n)|$ (i.e.\ its number of parts) is maximal in $\{|\Sp(C)| \mid C \in \Pp_n \}$. Then the two cases we distinguish are:\\
\begin{enumerate}[(1)]
	\item The maximal level-support size grows sublinearly: $|\Sp(B_n) | \in o(n)$.
	\item The maximal level-support size grows linearly: $|\Sp(B_n) | \in \Theta(n)$.
\end{enumerate}

We deal with the two cases in the next two subsections. Their results are summarised in \autoref{lem:resultCaseSublinear} and \autoref{lem:LOG_caseLinearSupport}. Together they imply the theorem. In the following lemmas, we always refer to the setting of \autoref{thm:mainTechnical}, as well as to $B_n$ as just defined here.

\subsection{The case of sublinearly bounded supports}
 The result of this subsection is:
 
 \begin{lemma}
   \label{lem:resultCaseSublinear}
   	Assume the following three conditions hold:
   \begin{enumerate}
   	\item Every $v \in \{0,1\}^n$ occurs in at least one of the colour classes $C_i$ in $\Pp_n$.
   	\item The function $\max_{i \in [m_n]}{|C_i|}$ is in $\Oo(n)$.
   	\item $|\Sp(B_n)| \in o(n)$.
   \end{enumerate}
   Then the orbit size of $\Pp_n$ w.r.t.\ the action of $\Sym_n$ on $\{0,1\}^n$ grows faster than any polynomial in $2^n$.
 \end{lemma}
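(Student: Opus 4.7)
The plan is to upper-bound $|\Stab_n(\Pp_n)|$ by $2^{\Oo(\log^2 n)}$ by applying Orbit-Stabiliser to a cleverly chosen tuple of witness bitstrings, and then conclude the result from $|\Orbit_n(\Pp_n)| = n!/|\Stab_n(\Pp_n)|$.  The core observation is that condition~2 alone already controls the orbit of any single string: since each $\pi \in \Stab_n(\Pp_n)$ must fix every colour class $C$ setwise (\autoref{prop:basicFact}), one has $\pi(v) \in C(v)$ for every $v \in \{0,1\}^n$, whence $|\Stab_n(\Pp_n) \cdot v| \leq |C(v)| \in \Oo(n)$.

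First I would pick $t := \lceil \log_2 n \rceil$ witness strings $v_1, \ldots, v_t \in \{0,1\}^n$ so that the \emph{type map} $\tau \colon [n] \to \{0,1\}^t$, $\tau(j) := (v_1(j), \ldots, v_t(j))$, is injective; this is possible because $2^t \geq n$ (concretely, let $\tau(j)$ be the length-$t$ binary representation of $j-1$).  Then $\bigcap_{i=1}^{t} \Stab_n(v_i) = \{\mathrm{id}\}$: any $\pi$ fixing every $v_i$ preserves the type of each position, so by injectivity of $\tau$ it must fix every position.  Next I would apply Orbit-Stabiliser to $G := \Stab_n(\Pp_n)$ acting componentwise on $(v_1, \ldots, v_t)$: the tuple-orbit sits inside $\prod_i G \cdot v_i$ and so has size at most $\prod_i |G \cdot v_i| \leq \Oo(n)^{\lceil \log n \rceil}$, while the pointwise stabiliser of the tuple in $G$ is contained in $\bigcap_i \Stab_n(v_i) = \{\mathrm{id}\}$.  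This yields $|G| \leq \Oo(n)^{\log n} = 2^{\Oo(\log^2 n)}$, hence $|\Orbit_n(\Pp_n)| \geq n!/2^{\Oo(\log^2 n)} = 2^{n \log n (1 - o(1))}$, which is super-polynomial in $2^n$ since $\log n \to \infty$.

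The main technical point will be the witness-tuple construction of step one and verifying that it really collapses $\bigcap_i \Stab_n(v_i)$ to the identity; this is an elementary combinatorial gadget resting on $2^{\lceil \log_2 n \rceil} \geq n$ together with a small bookkeeping argument about type preservation.  A curious feature of this plan is that it does not appear to explicitly invoke condition~3 ($|\Sp(B_n)| \in o(n)$): conditions~1 and~2 alone seem to suffice for the super-polynomial orbit lower bound.  One possibility is that condition~3 is present mostly in order to partition the overall proof cleanly into two complementary cases aligning with the supporting-partitions framework of \autoref{sec:support} and \cite{anderson2017symmetric}; a supporting-partition-based proof might instead pick a colour class $C$, bound $|\Stab_n(C)| \leq |\Stab_n(\Sp(C))|$ via \autoref{lem:sandwichLemma}, and combine such bounds across many colour classes using a device analogous to the witness-tuple argument above.
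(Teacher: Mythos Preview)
Your argument is correct and, as you already note, does not use condition~3 at all: in fact it proves the full \autoref{thm:mainTechnical} directly, rendering the case split between \autoref{lem:resultCaseSublinear} and \autoref{lem:LOG_caseLinearSupport} unnecessary. The witness-tuple trick---choosing $t = \lceil \log_2 n\rceil$ strings $v_1,\dots,v_t$ so that the type map $j \mapsto (v_1(j),\dots,v_t(j))$ is injective, hence $\bigcap_i \Stab_n(v_i) = \{\mathrm{id}\}$, and then bounding the diagonal $G$-orbit of $(v_1,\dots,v_t)$ by $\prod_i |G\cdot v_i| \le (cn)^t$---is sound, and yields $|\Stab_n(\Pp_n)| \le 2^{\Oo(\log^2 n)}$, a bound far sharper than anything the paper derives. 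Condition~1 is used exactly once, to guarantee that each witness $v_i$ lies in some colour class so that $G\cdot v_i \subseteq C(v_i)$.

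The paper takes a genuinely different and substantially longer route through the supporting-partition machinery of \autoref{sec:support}. It incrementally builds a subset $J_n \subseteq \Pp_n$ of colour classes so that $\bigsqcap_{C\in J_n}\Sp(C)$ has many singleton parts (\autoref{lem:LOG_levelIntersection}), counts how many tuples $\bar\sigma \in \bigtimes_{C\in J_n}\Sym(\Sp(C))$ are realisable by some $\pi\in\Sym_n$, and multiplies by the number of $\pi$ realising each fixed $\bar\sigma$ (\autoref{lem:LOG_centralLemmaForMultipleSets}, \autoref{lem:nonSingletonsUpperBound}). The resulting bound is roughly $f(n)!^{\,n/(f(n)-1)}\cdot 2^n \cdot (8\log n)!$ (\autoref{cor:LOG_finalCorollaryCase1}), and condition~3 is precisely what keeps the first factor small enough. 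Your argument sidesteps all of this by working at the level of atoms rather than colour-class supports: each witness string $v_i$ has the trivial two-block partition $\Sp(v_i)$, and $\log_2 n$ of them already intersect to the discrete partition of $[n]$. What the paper's approach buys is that the same supporting-partition framework is reused in the linear-support case of Section~6.2; what yours buys is a dramatically shorter and more elementary proof of the entire theorem.
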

It may seem somewhat superfluous to state the first condition explicitly, since we are assuming that $\Pp_n$ is an ordered partition of $\{ 0,1\}^n$, so this is always satisfied. However, we would like to stress that our result crucially depends on this fact and thus, the proof will not directly go through if we replace the preorder by, say, a partial one that only partitions a small subset of $\{0,1\}^n$ into colour classes.\\
The second condition says that the colour classes of the preorder have logarithmic size (compared to the size of the hypercube) and is also required in Theorem \ref{thm:mainTechnical}. The third condition is specific to the case that we treat in this subsection.\\
\\	
By Corollary \ref{cor:basicFact}, we have for the stabiliser of $\Pp_n$:  
\[
\Stab(\Pp_n) \leq \bigcap_{C \in \Pp_n} \Stab(\Sp(C)).
\]
Let us briefly outline how we will use this fact to bound  $|\Stab(\Pp_n)|$. For each colour class $C \in \Pp_n$, $\Sym(\Sp(C))$ denotes the symmetric group on the parts of the supporting partition $\Sp(C)$ (in constrast, $\Sym_n$ is the symmetric group on the set $[n]$ that underlies this partition). Every $\pi \in \Sym_n$ that stabilises $\Sp(C)$ as a set \emph{induces} (or \emph{realises}) a $\sigma \in \Sym(\Sp(C))$ in the sense that $\sigma(P) = \{\pi(k) \mid k \in P \}$ for all $P \in \Sp_i(x_n)$. This can also be extended to a set $J_n \subseteq \Pp_n$ of several colour classes: Every $\pi \in \bigcap_{C \in J_n} \Stab(\Sp(C))$ induces a $\bar{\sigma} \in \bigtimes_{C \in J_n} \Sym(\Sp(C))$. Here, $\bar{\sigma}$ is the tuple of permutations that $\pi$ realises simultaneously on the parts of the respective supporting partitions.\\

Now in order to bound $|\Stab(\Pp_n)|$, we will choose a subset of colour classes $J_n \subseteq \Pp_n$ with certain properties that will enable us to bound two quantities: Firstly, there will be an upper bound on the number of $\bar{\sigma} \in \bigtimes_{C \in J_n} \Sym(\Sp(C))$ that can be realised by any $\pi \in \Stab(\Pp_n)$. Secondly, we observe that each such $\bar{\sigma}$ can only be realised by a small number of distinct $\pi \in \Stab(\Pp_n)$. The product of these two bounds is then an upper bound for $|\Stab(\Pp_n)|$.\\

We start the proof of Lemma \ref{lem:resultCaseSublinear} with the second part of the above proof sketch, namely we show how we can generally bound the number of realisations of a given $\bar{\sigma} \in \bigtimes_{i \in [m]} \Sym(\Sp(A_i))$, where $A_1,...,A_m$ are sets of length-$n$ bit-strings (in our case, the colour classes of $\Pp_n$). We refer back to Definition \ref{def:intersection} for the definition of $\sqcap$, the intersection of partitions. The following lemma shows that any $\pi \in \Sym_n$ that realises the mentioned tuple of permutations $\bar{\sigma}$ of the parts of the supports has to permute the parts of the intersection partition $\bigsqcap_{i = 1}^{m} \Sp(A_i)$ in the same way.
\begin{lemma}
	\label{lem:LOG_centralLemmaForMultipleSets}
	Let $A_1,...,A_m \subseteq \{0,1\}^n$ be a collection of sets of bit-strings. Fix any simultaneous permutation $\bar{\sigma}$ of the parts of the supports of the sets, i.e.\ $\bar{\sigma} \in \bigtimes_{i = 1}^{m} \Sym(\Sp(A_i))$.\\
	\\	
	There exists a $\theta_{\bar{\sigma}} \in \Sym( \bigsqcap_{i = 1}^{m} \Sp(A_i))$ such that every $\pi \in \Sym_n$ that realises $\bar{\sigma}$ also realises $\theta_{\bar{\sigma}}$.
\end{lemma}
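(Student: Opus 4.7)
The plan is to define $\theta_{\bar\sigma}$ directly and explicitly from $\bar\sigma$, exploiting the basic algebraic fact that any $\pi \in \Sym_n$ commutes with intersections of subsets of $[n]$, i.e.\ $\pi\bigl(\bigcap_i X_i\bigr) = \bigcap_i \pi(X_i)$. Set $\Qq := \bigsqcap_{i=1}^{m} \Sp(A_i)$. By Definition \ref{def:intersection}, every part $Q \in \Qq$ has the form $Q = \bigcap_{i=1}^{m} P_i(Q)$, where $P_i(Q)$ denotes the unique part of $\Sp(A_i)$ containing $Q$ (equivalently, containing any chosen $k \in Q$). This decomposition of each $Q$ into a canonical tuple of ``coordinates'' $(P_1(Q), \dots, P_m(Q))$ is the combinatorial hook we use.

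Given $\bar\sigma = (\sigma_1, \dots, \sigma_m) \in \bigtimes_{i=1}^m \Sym(\Sp(A_i))$, I define
\[
\theta_{\bar\sigma}(Q) \;:=\; \bigcap_{i=1}^{m} \sigma_i\bigl(P_i(Q)\bigr)
\]
for every $Q \in \Qq$. The verification that every realiser of $\bar\sigma$ also realises $\theta_{\bar\sigma}$ is then a one-line calculation: for any $\pi \in \Sym_n$ with $\pi(P) = \sigma_i(P)$ for every $P \in \Sp(A_i)$ and every $i$, the commutation of $\pi$ with intersection yields
\[
\pi(Q) \;=\; \pi\Bigl(\bigcap_{i=1}^{m} P_i(Q)\Bigr) \;=\; \bigcap_{i=1}^{m} \pi(P_i(Q)) \;=\; \bigcap_{i=1}^{m} \sigma_i(P_i(Q)) \;=\; \theta_{\bar\sigma}(Q).
\]

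The only point that requires care is to confirm that $\theta_{\bar\sigma}$ is genuinely an element of $\Sym(\Qq)$, i.e.\ that the right-hand side of the definition is always a non-empty part of $\Qq$ and that the resulting map is a bijection on $\Qq$. I handle this by a case split. If at least one $\pi \in \Sym_n$ realises $\bar\sigma$, then the displayed calculation shows $\theta_{\bar\sigma}(Q) = \pi(Q)$, which is automatically a part of $\Qq$ (because $\pi$ stabilises each $\Sp(A_i)$ setwise, it permutes the parts of their common refinement $\Qq$), and bijectivity follows because $(\sigma_1^{-1}, \dots, \sigma_m^{-1})$ is realised by $\pi^{-1}$ and produces the inverse map by the same argument. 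If no $\pi$ realises $\bar\sigma$, the conclusion of the lemma is vacuous and any element of $\Sym(\Qq)$ may be chosen. Hence the lemma reduces to an elementary observation about how permutations interact with intersection partitions, and I expect the real combinatorial work to happen only when this lemma is subsequently used to count the realisations of tuples $\bar\sigma$ in the proof of \autoref{lem:resultCaseSublinear}.
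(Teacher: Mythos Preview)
Your proof is correct and rests on the same idea as the paper's: define $\theta_{\bar\sigma}$ on each block of the intersection partition by intersecting the $\sigma_i$-images of the containing blocks, and verify that any realiser $\pi$ must act this way because bijections commute with intersection. The only difference is presentational: the paper builds $\theta_{\bar\sigma}$ by induction on $m$ (adding one $\Sp(A_i)$ at a time and intersecting with the previously constructed $\theta_{\bar\sigma'}$), whereas you write down the closed form $\theta_{\bar\sigma}(Q)=\bigcap_i \sigma_i(P_i(Q))$ in one shot. Your version is a touch cleaner, and you are more explicit than the paper about the vacuous case where no $\pi$ realises $\bar\sigma$; the paper simply asserts that $\theta_{\bar\sigma}\in\Sym(\Qq)$ without separating out that case.
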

\begin{proof}
	We show the statement via induction on $m$. For $m = 1$, there is only one set $A_1$, so $\bar{\sigma} \in \Sym(\Sp(A_1))$. Hence, $\theta_{\bar{\sigma}} := \bar{\sigma}$ is the desired permutation.\\
	For the inductive step, assume the statement holds for a fixed $m$. Consider now a collection $A_1,...,A_{m+1} \subseteq \{0,1\}^n$, and $\bar{\sigma} \in \bigtimes_{i = 1}^{m+1} \Sym(\Sp(A_i))$. Let $\bar{\sigma}'$ be the $m$-tuple that is the restriction of $\bar{\sigma}$ to its first $m$ components. By the induction hypothesis there exists a $\theta_{\bar{\sigma}'} \in \Sym( \bigsqcap_{i = 1}^{m} \Sp(A_i))$ that is induced by every $\pi \in \Sym_n$ realising $\bar{\sigma}'$, and therefore also by every $\pi$ realising $\bar{\sigma}$. Furthermore, every $\pi \in \Sym_n$ that realises $\bar{\sigma}$ has to realise $\bar{\sigma}_{m+1} \in \Sym(\Sp(A_{m+1}))$. So we know the following constraints for $\pi$: For every $P \in \Sp(A_{m+1})$, $\pi(P) = \bar{\sigma}_{m+1}(P)$, and also, for every $P \in \bigsqcap_{i = 1}^{m} \Sp(A_i)$, $\pi(P) = \theta_{\bar{\sigma}'}(P)$. Hence, the desired $\theta_{\bar{\sigma}} \in \Sym( \bigsqcap_{i = 1}^{m+1} \Sp(A_i))$ is defined as follows: Let $P \in \bigsqcap_{i = 1}^{m+1} \Sp(A_i)$, and call $Q_P \in \bigsqcap_{i = 1}^{m} \Sp(A_i)$ the part such that $P \subseteq Q_P$, and let $Q'_P \in \Sp(A_{m+1})$ be such that $P \subseteq Q'_P$ (the parts $Q_P, Q'_P$ must exist because the partition $\bigsqcap_{i = 1}^{m+1} \Sp(A_i)$ refines both $\bigsqcap_{i = 1}^{m} \Sp(A_i)$ and $\Sp(A_{m+1}$)). Then:
	\[
	\theta_{\bar{\sigma}}(P) := \theta_{\bar{\sigma}'}(Q_P) \cap \bar{\sigma}_{m+1}(Q'_P).
	\]
	One can easily check that indeed, $\theta_{\bar{\sigma}} \in \Sym(\bigsqcap_{i = 1}^{m+1} \Sp(A_i))$, and, as we argued already, that every $\pi \in \Sym_n$ that realises $\bar{\sigma}$ must realise $\theta_{\bar{\sigma}}$ as well. Below is a visualisation for the first inductive step, that adds $A_2$ to $A_1$.
\end{proof}

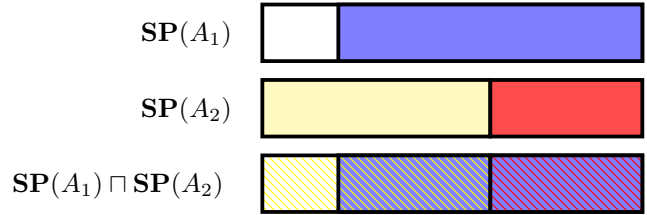
\begin{figure}[H]
	\centering
	\begin{tikzpicture}[line width = 0.5mm]
		\node at (-1,-0.4) {$\Sp(A_1)$};
		\node at (-1,-1.4) {$\Sp(A_2)$};
		\node at (-1.9,-2.4) {$\Sp(A_1) \sqcap \Sp(A_2)$};
		\draw (0,0) rectangle (5,-0.75);
		\draw[fill = blue!50] (1,0) rectangle (5,-0.75);
		
		\draw[fill = yellow!30] (0,-1) rectangle (3,-1.75);
		\draw[fill = red!70] (3,-1) rectangle (5,-1.75);
		
		\draw[pattern = north west lines, pattern color = yellow]  (0,-2) rectangle (5,-2.75);	
		\draw[preaction={fill = blue!50}, pattern = north west lines, pattern color = yellow] (1,-2) rectangle (3,-2.75);
		\draw[preaction={fill = blue!50}, pattern = north west lines, pattern color = red] (3,-2) rectangle (5,-2.75);
	\end{tikzpicture}
	\caption{Suppose $\bar{\sigma}$ is the identity permutation in both $\Sp(A_1)$ and $\Sp(A_2)$. Then any $\pi \in \Sym_n$ realising $\bar{\sigma}$ must stabilise each of the three parts of $\Sp(A_1) \sqcap \Sp(A_2)$ setwise. }
\end{figure}

So, intuitively speaking, the finer the partition $\bigsqcap_{i = 1}^m A_i$ is, the fewer realisations there are for any $\bar{\sigma} \in \bigtimes_{i = 1}^{m} \Sym(\Sp(A_i))$. Therefore, we will aim to select a subset of the colour classes in $\Pp_n$ such that the intersection over the supports is as fine as possible. More precisely, we would like it to consist of many singleton parts.\\
For the rest of this subsection we denote by $S_n \subseteq [n]$ the set of positions which are in singleton parts in  $\bigsqcap_{C \in \Pp_n} \Sp(C)$, i.e.\
\[
S_n := \{k \in [n] \mid \{k\} \in \bigsqcap_{C \in \Pp_n} \Sp(C)  \}.
\]
It turns out that there can only be few positions which are \emph{not} in singleton parts in $\bigsqcap_{C \in \Pp_n} \Sp(C)$; this is a consequence of the assumption that every element of $\{0,1\}^n$ occurs in a colour class of $\Pp_n$, together with the size bound on the colour classes (the first and second condition of Lemma \ref{lem:resultCaseSublinear}):
\begin{lemma}
	\label{lem:nonSingletonsUpperBound}
	Let $c$ be a constant such that for every colour class $C \in \Pp_n$, for large enough $n \in \bbN$, it holds: $|C| \leq c \cdot n$. Further, assume that every element of $\{0,1\}^n$ occurs in at least one $C \in \Pp_n$. Then, for all large enough $n$:
	\[
	|[n] \setminus S_n | <  8\log n.
	\]	 
\end{lemma}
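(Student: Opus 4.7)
The plan is to identify a permutation group that simultaneously acts on every colour class, and to force a lower bound on some colour-class size whenever $T_n := [n] \setminus S_n$ is large. First I would introduce the equivalence relation $\sim$ on $[n]$ defined by $k \sim \ell$ iff $k$ and $\ell$ lie in the same part of $\Sp(C)$ for every $C \in \Pp_n$. A position $k$ fails to be in $S_n$ precisely when its $\sim$-class is non-singleton, so letting $E_1, \dots, E_r$ denote the non-singleton $\sim$-classes gives $|T_n| = \sum_{i=1}^{r} |E_i|$.

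The key observation is then: for each $i$ and each $C \in \Pp_n$, all elements of $E_i$ share a common part of $\Sp(C)$, so any permutation of $[n]$ supported on $E_i$ fixes every part of $\Sp(C)$ setwise and therefore lies in $\StabP_n(\Sp(C))$. By \autoref{lem:sandwichLemma}, $\StabP_n(\Sp(C)) \leq \Stab_n(C)$, so the direct product $G := \prod_{i=1}^{r} \Sym(E_i) \leq \Sym_n$ embeds into $\bigcap_{C \in \Pp_n} \Stab_n(C)$. In particular, every colour class is closed under the action of $G$.

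To extract the bound, I would pick any $v \in \{0,1\}^n$ whose restriction to each $E_i$ has Hamming weight $\lfloor |E_i|/2 \rfloor$. By assumption~(1) of the lemma, $v$ lies in some colour class $C$, and the $G$-invariance of $C$ yields $G \cdot v \subseteq C$. Hence
\[
|C| \geq |G \cdot v| = \prod_{i=1}^{r} \binom{|E_i|}{\lfloor |E_i|/2 \rfloor} \geq 2^{|T_n|/2},
\]
using the elementary bound $\binom{s}{\lfloor s/2 \rfloor} \geq 2^{s/2}$ valid for $s \geq 2$ (with equality at $s = 2$). Combining with $|C| \leq c \cdot n$ from assumption~(2) gives $2^{|T_n|/2} \leq cn$, so $|T_n| \leq 2 \log(cn) < 8 \log n$ for all sufficiently large $n$.

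The main obstacle is conceptual rather than technical: one has to recognise that the equivalence relation $\sim$ gives rise not just to a handful of transpositions but to the \emph{whole} direct product of symmetric groups on the non-singleton classes, acting simultaneously on every colour class, and crucially that assumption~(1) forces the chosen balanced $v$ to land inside \emph{some} colour class, so the large orbit actually fits into a colour class of bounded size. Once these two observations are in place, the sandwich lemma delivers the rest as a one-shot orbit count, with only the small check of the central-binomial estimate at $s = 2$.
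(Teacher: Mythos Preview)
Your proof is correct and follows essentially the same approach as the paper's: identify the group $G = \prod_i \Sym(E_i)$ acting inside every colour-class stabiliser (via the sandwich lemma), pick a balanced string, and bound the size of its $G$-orbit inside whichever colour class contains it. Your elementary estimate $\binom{s}{\lfloor s/2\rfloor} \ge 2^{s/2}$ is in fact sharper and more direct than the paper's route, which passes through Stirling's approximation and a weight-redistribution-to-doubletons argument to obtain only roughly $2^{|T_n|/4}$ (hence the constant $8$ in the stated bound, where your argument would yield $|T_n| \le 2\log(cn)$).
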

\begin{proof}
	Assume for a contradiction:  $|[n] \setminus S_n | \geq 8\log n$. Let 
	\[
	\mathbf{P_2} := \{ P \in \bigsqcap_{C \in \Pp_n} \Sp(C) \mid |P| \geq 2 \}.
	\]
	Choose a string $a \in \{0,1\}^n$ such that its substring at the positions in $P$, denoted $a[P]$, contains an equal number of zeros and ones (or almost equal if $|P|$ is odd) for each $P \in \mathbf{P_2}$. Let $A_n$ denote the colour class $C \in \Pp_n$ such that $a \in C$. This exists by the assumptions of the lemma. For each $P \in\mathbf{P_2}$, a superset of $P$ (or $P$ itself) must occur as a part of $\Sp(A_n)$. We conclude that $\Stab(A_n)$ must contain all permutations $\pi \in \Sym_n$ which are the identity on $S_n$, and arbitrarily permute the elements within each part $P \in \mathbf{P_2}$ (due to Lemma \ref{lem:sandwichLemma}). So let 
	\[
	\Gamma_{\mathbf{P_2} } := \{ \pi \in \Sym_n \mid \pi(s) = s \text{ for all } s \in S_n \text{ and } \pi(P) = P \text{ for all } P \in \mathbf{P_2}\} \leq \Stab(A_n)
	\]
	be this subgroup of $\Sym_n$.\\
	All images of $a$ under the permutations in $\Gamma_{\mathbf{P_2} }$  must also be in $A_n$. This entails a violation of the size bound on $|A_n|$, as we show now. It is easy to see (observing that within every $P \in \mathbf{P_2}$, the $|P|/2$ many ones in $a[P]$ can be moved to an arbitrary subset of the positions $P$) that the orbit of $a$ with respect to $\Gamma_{\mathbf{P_2}}$ has size at least
	\begin{align*}
		\Orbit_{\Gamma_{\mathbf{P_2}}}(a) &\geq \prod_{P \in \mathbf{P_2}} \binom{|P|}{|P|/2} \geq \prod_{P \in \mathbf{P_2}} \delta \cdot \frac{2^{|P|}}{\sqrt{|P|}} \\
		&= 2^{|[n] \setminus S_n|} \cdot \prod_{P \in \mathbf{P_2}}  \frac{\delta}{\sqrt{|P|}} \tag{$\star$}
	\end{align*}
	Here, $0.6 \leq \delta < 1$. The inequality is quite well-known and can be computed with Stirling's approximation. The equality is clear because the parts in $\mathbf{P_2}$ cover exactly the positions $[n] \setminus S_n$.\\
	As for the large product, we can check that its value becomes smallest possible if all parts $P \in \mathbf{P_2}$ are doubleton parts (plus potentially one part with three elements). To see this, take any part $P \in \mathbf{P_2}$ with $|P| \geq 4$ and split off two elements, such that $P = P_1 \dot{\cup} P_2$, $|P_1| = 2$. Then the contribution of $P$ changes from $\frac{\delta}{\sqrt{|P|}}$ to $\frac{\delta^2}{\sqrt{2(|P|-2)}}$. The latter is strictly smaller because $\delta < 1$. Repeating this argument shows that we get the minimal value of ($\star$) if we assume that all parts in $\mathbf{P_2}$ are doubletons. In that case, ($\star$) becomes this:
	\begin{align*} \Orbit_{\Gamma_{\mathbf{P_2}}}(a) \geq \left(\frac{4\delta}{\sqrt{2}}\right)^{|[n] \setminus S_n|/2} \geq \sqrt{2}^{|[n] \setminus S_n|/2}.
	\end{align*}
	Now plugging in our initial assumption $|[n] \setminus S_n| \geq 8 \log n$, this expression is $\geq n^2$. This is a contradiction to the fact that $|A_n| \leq cn$, for some constant $c$.
\end{proof}
Now that we know that the set of singleton positions $S_n$ is almost $[n]$ itself, we proceed to construct our subset of the colour classes $J_n \subseteq \Pp_n$ such that the intersection of its supports already individualises all positions in $S_n$. Furthermore, we make sure that there are not too many ways how the supports of the colour classes in $J_n$ can be permuted simultaneously; this will show that $\bigcap_{C \in \Pp_n} \Stab(\Sp(C))$ cannot be too large, which is precisely our goal.
%TODO: replace f by h
\begin{lemma}
	\label{lem:LOG_levelIntersection}
	Let $f(n) \in o(n)$ such that (for large enough $n$) for all colour classes $C \in \Pp_n$, $|\Sp(C)| \leq f(n)$.\\
	For large enough $n$, there exists a subset $J_n \subseteq \Pp_n$ of colour classes with the following two properties:
	\begin{enumerate}[(1)]
		\item Every position in $S_n$ is in a singleton part of $\bigsqcap_{C \in J_n} \Sp(C)$.
		\item The following bound for the number of realisable simultaneous permutations of the supporting partitions holds:
		\begin{align*}
			| \{\bar{\sigma} \in \bigtimes_{C \in J_n} \Sym(\Sp(C)) &\mid \text{there is a } \pi \in \Sym_n \text{ that realises } \bar{\sigma} \} |\\
			&\leq 
			\left( f(n)!  \right)^{\lceil n / (f(n)-1)\rceil} \cdot 2^n
		\end{align*}
	\end{enumerate}
\end{lemma}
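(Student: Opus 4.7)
The plan is to build $J_n$ greedily, handling property (1) through the construction itself and property (2) by bounding $|J_n|$ and applying a trivial count to the Cartesian product of permutation groups.

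I begin with $J_n = \emptyset$ and the trivial partition $\Pi_0 = \{[n]\}$. At each step, I examine the current intersection $\Pi_t = \bigsqcap_{C \in J_n} \Sp(C)$. If some position $k \in S_n$ is not yet a singleton in $\Pi_t$, then by the definition of $S_n$ (all positions in $S_n$ are singletons in $\bigsqcap_{C \in \Pp_n} \Sp(C)$), there must exist a colour class $C \in \Pp_n$ whose support $\Sp(C)$ strictly refines the part $\Pi_t(k)$. I add such a strategically chosen $C$ to $J_n$ and iterate. Since each added $C$ strictly refines $\Pi_t$, the partition grows monotonically and the process terminates once every $S_n$-position is a singleton, establishing property (1).

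For the size bound on $|J_n|$ underlying property (2), I would argue as follows. Each support $\Sp(C)$ consists of at most $f(n)$ parts, so refining $\Pi_t$ with $\Sp(C)$ can split a single existing part of $\Pi_t$ into up to $f(n)$ subparts, a net gain of up to $f(n) - 1$ new parts. The key step is an amortised choice that guarantees on average this much gain per iteration, so that the procedure terminates after at most (roughly) $\lceil n / (f(n) - 1) \rceil$ steps. Combined with the trivial bound $|\Sym(\Sp(C))| \leq f(n)!$ for each $C \in J_n$, this gives the main factor $(f(n)!)^{\lceil n / (f(n)-1)\rceil}$ in the number of tuples $\bar\sigma$; the extra $2^n$ acts as slack to absorb any deviation from the ideal amortised gain, e.g.\ an additive correction of up to $n / \log_2(f(n)!)$ extra iterations contributes at most a factor of $2^n$ to the product.

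The main obstacle will be rigorously establishing the amortisation: in adversarial configurations, a greedy step may only increase $|\Pi_t|$ by one or two, well below the target gain of $f(n)-1$. I expect this to be handled by a careful charging scheme that exploits the global fact that $\bigsqcap_{C \in \Pp_n} \Sp(C)$ individualises all of $S_n$, so enough splitting power must be distributed across the colour classes of $\Pp_n$ to support a bound close to $\lceil n/(f(n)-1)\rceil$. The $2^n$ factor in the stated bound is precisely what allows this analysis to tolerate residual slack from steps where the per-iteration gain falls short of the optimum.
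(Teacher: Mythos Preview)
Your construction of $J_n$ matches the paper's, and you correctly identify that property~(1) comes for free from the greedy process. The gap is in property~(2): your plan is to bound the number of realisable $\bar\sigma$ by the crude product $\prod_{C\in J_n}|\Sym(\Sp(C))|\le (f(n)!)^{|J_n|}$ and then argue that $|J_n|$ is close to $\lceil n/(f(n)-1)\rceil$ via amortisation. That amortisation claim is not merely hard to make rigorous --- it is false. Take $f(n)=\sqrt{n}$ but suppose only one colour class has support with $\sqrt{n}$ parts while all other supports have exactly two parts. Then after the first step you still need roughly $n-\sqrt{n}$ further steps (each gaining one part) to individualise $S_n$, so $|J_n|\approx n$. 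Your product would then be $(\sqrt{n}\,!)^{\,n}$, far exceeding the stated bound, and no $2^n$ slack can repair this.

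The paper avoids this entirely by \emph{not} bounding $|J_n|$. Instead it exploits the correlations between the components of $\bar\sigma$ that arise from the requirement that a single $\pi\in\Sym_n$ realise all of them simultaneously. Using Lemma~\ref{lem:LOG_centralLemmaForMultipleSets}, once $\bar\sigma$ is fixed on $J_n^{i}$, any compatible extension to the next colour class $C_{j_{i+1}}$ must respect the induced permutation $\theta_{\bar\sigma}$ on $\bigsqcap_{C\in J_n^{i}}\Sp(C)$; an equivalence-class argument on $\Sp(C_{j_{i+1}})$ then shows there are at most $(k_{i+1}+1)!$ such extensions, where $k_{i+1}$ is the actual gain in the number of parts at step $i{+}1$. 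Thus the realisable set has size at most $\prod_i \min\{(k_i+1)!,\,f(n)!\}$. A weight-redistribution argument over the sequence $(k_i)$ with $\sum_i k_i\le n$ then yields the stated bound, with the $2^n$ factor coming from the many steps with $k_i=1$. In the example above this gives roughly $f(n)!\cdot 2^{\,n-f(n)}$, well within the bound. So the missing idea is the per-step factor $(k_i+1)!$ instead of $f(n)!$; without it the argument does not close.
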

\begin{proof}
	We construct $J_n$ stepwise, starting with $J_n^0 := \emptyset$ and adding one new colour class $C_{j_i} \in \Pp_n$ in each step $i \geq 1$ in such a way that 
	\[
	\Bigl| \bigsqcap_{C \in J_n^{i-1}} \Sp(C) \sqcap \Sp(C_{j_i}) \Bigl| > \Bigl| \bigsqcap_{C \in J_n^{i-1}} \Sp(C) \Bigl| .
	\]
	Let $s$ be the number of construction steps needed, i.e.\ $J_n := J_n^s$ is such that property (1) of the lemma holds for this subset of $\Pp_n$. By definition of $S_n$, it is clear that such a subset exists because $\Pp_n$ itself satisfies property (1).\\
	For each construction step $i$, we let
	\[
	\Gamma_{i} := \{ \bar{\sigma} \in \bigtimes_{C \in J_n^i} \Sym(\Sp(C)) \mid \text{ there is a } \pi \in \Sym_n \text{ that realises } \bar{\sigma} \}.
	\]
	Furthermore, for each step $i$ we let $k_i$ be the increase in the number of parts in the intersection that is achieved in this step:
	\[
	k_i := | \bigsqcap_{C \in J_n^{i}} \Sp(C)   | - | \bigsqcap_{C \in J_n^{i-1}} \Sp(C)|.
	\]
	\\
	The main part of the proof consists in showing the following
	\begin{claim}
		For each step $i$, the size of $|\Gamma_{i}|$ is bounded by 
		\[
		|\Gamma_{i}| \leq \prod_{j = 1}^{i} (\min\{(k_j+1), f(n)\})!
		\]
	\end{claim}
	\noindent 
	\textit{Proof of claim.}
	Via induction on $i$. For $i = 1$, we have $k_1 = |\Sp(C_{j_1}) | \leq f(n)$. The group $\Gamma_1$ is a subgroup of $\Sym(\Sp(C_{j_1}))$, whose size is bounded by $|\Sp(C_{j_1})|!$. Therefore, the claim holds.\\
	\\
	For the inductive step, consider the step $i+1$ of the construction. Let $j_{i+1}$ be the colour class of the preorder that is added in this step. In order to bound the size of $\Gamma_{i+1}$, we consider for each $\bar{\sigma} \in \Gamma_i$ the following set:
	\[
	\Gamma_{i+1}^{\bar{\sigma}} :=  \{ \sigma \in \Sym(\Sp(C_{j_{i+1}})) \mid \text{there is a } \pi \in \Sym_n \text{ that realises } \sigma \text{ and } \bar{\sigma} \}. 
	\]
	We need to show that for each $\bar{\sigma} \in \Gamma_i$, it holds $|\Gamma_{i+1}^{\bar{\sigma}}| \leq (\min\{(k_{i+1}+1), f(n)\})!$.\\
	Since $|\Sp(C_{j_{i+1}})| \leq f(n)$, the bound $|\Gamma_{i+1}^{\bar{\sigma}}| \leq f(n)!$ is clear. It remains to show that for an arbitrary fixed $\bar{\sigma} \in \Gamma_i$, it holds $|\Gamma_{i+1}^{\bar{\sigma}}| \leq (k_{i+1}+1)!$.\\
	\\
	For a part $P \in \Sp(C_{j_{i+1}})$, let 
	\[
	\mathbf{Q}(P) :=  \{Q \in \bigsqcap_{C \in J_n^{i}} \Sp(C) \mid Q \cap P \neq \emptyset\}.
	\]
	Then we define an equivalence relation $\sim \subseteq \Sp(C_{j_{i+1}})^2$: For parts $P,P' \in \Sp(C_{j_{i+1}})$, we let
	\[
	P \sim P' \text{ iff } \mathbf{Q}(P) = \mathbf{Q}(P').
	\]
	Now we show how the equivalence classes of $\sim$ can be used to approximate the images of the parts in $\Sp(C_{j_{i+1}})$ under permutations in $\Gamma_{i+1}^{\bar{\sigma}}$:
	Every $\pi \in \Sym_n$ that realises any $\sigma \in  \Gamma_{i+1}^{\bar{\sigma}}$ also realises $\bar{\sigma} \in \Gamma_i$. Hence, by Lemma \ref{lem:LOG_centralLemmaForMultipleSets}, all such $\pi$ induce the same $\theta_{\bar{\sigma}} \in \Sym(\bigsqcap_{C\in J_n^i} \Sp(C))$. This means that for any $\sigma \in \Gamma_{i+1}^{\bar{\sigma}}$, and every part $P \in \Sp(C_{j_{i+1}})$, 
	\[
	\sigma(P) \subseteq  \bigcup_{Q \in \mathbf{Q}(P)} \theta_{\bar{\sigma}}(Q).
	\]
	
	The situation is visualised in the figure below. 
	
	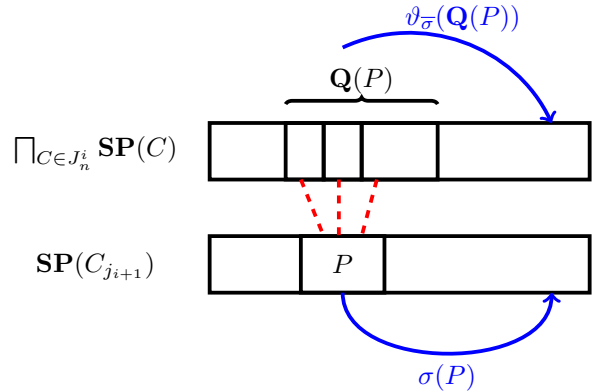
\begin{figure}[H]
		\centering
		\begin{tikzpicture}[line width = 0.5mm]
			\node at (-1.5,-0.4) {$\bigsqcap_{C \in J_n^i} \Sp(C)$};
			\node at (-1.5,-1.9) {$\Sp(C_{j_{i+1}})$};
			\draw (0,0) rectangle (5,-0.75);
			\draw (1,0) rectangle (1.5,-0.75);
			\draw (1.5,0) rectangle (2,-0.75);	
			\draw (2,0) rectangle (3,-0.75);	
			
			\draw (0,-1.5) rectangle (5,-2.25);
			\draw (1.2,-1.5) rectangle (2.3,-2.25) node[pos=.5] {$P$};
			
			\draw[decorate, decoration={brace}] (1,0.25) -- ( 3 ,0.25 ) node[pos=0.5, above] {$\mathbf{Q}(P)$};
			
			\draw[red,dashed] (1.2,-0.75) -- (1.5,-1.5);
			\draw[red,dashed] (1.7,-0.75) -- (1.7,-1.5);				
			\draw[red,dashed] (2.2,-0.75) -- (2,-1.5);				
			
			\draw (1.75,-2.25) edge[blue,->, bend right=90] node [below] {$\sigma(P)$} (4.5,-2.25);
			\draw (1.75,1) edge[blue,->, bend left=45] node [above] {$\theta_{\bar{\sigma}}(\mathbf{Q}(P))$} (4.5,0);
			%\draw[pattern = north west lines, pattern color = yellow]  (0,-2) rectangle (5,-2.75);	
			%\draw[preaction={fill = blue!50}, pattern = north west lines, pattern color = yellow] (1,-2) rectangle (3,-2.75);
			%\draw[preaction={fill = blue!50}, pattern = north west lines, pattern color = red] (3,-2) rectangle (5,-2.75);
		\end{tikzpicture}
		\caption{A part $P \in \Sp(C_{j_{i+1}})$ and the associated parts $\mathbf{Q}(P)$ in  $\bigsqcap_{C \in J_n^i} \Sp(C)$ that it intersects. Any $\sigma \in \Sym(\Sp(C_{j_{i+1}}))$ compatible with $\bar{\sigma}$ must map $P$ to the same position that $\mathbf{Q}(P)$ is mapped to by $\theta_{\bar{\sigma}}$.}
	\end{figure}	
	Therefore, we can fix any $\widehat{\sigma} \in  \Gamma_{i+1}^{\bar{\sigma}}$ and obtain:
	\[
	\{ \sigma(P) \mid \sigma \in \Gamma_{i+1}^{\bar{\sigma}} \} \subseteq [\widehat{\sigma}(P)]_\sim.
	\]
	Note that the class $[\widehat{\sigma}(P)]_\sim$ is independent of the choice of $\widehat{\sigma}$. Consequently, we can bound $|\Gamma_{i+1}^{\bar{\sigma}}|$ as follows: Let $m$ be the number of equivalence classes of $\sim$ and let $\ell_1,..,\ell_m$ denote the sizes of the respective classes. Then from our observations so far it follows:
	\begin{equation*}
		|\Gamma_{i+1}^{\bar{\sigma}}| \leq \prod_{t \in [m]} \ell_t! \tag{$\star$}
	\end{equation*}
	Next, we establish a relationship between the properties of $\sim$ and the number $k_{i+1}$:
	\begin{align*}
		k_{i+1} &=  \Bigl| \Sp(C_{j_{i+1}}) \sqcap \bigsqcap_{C \in J_n^{i}} \Sp(C)  \Bigl| - \Bigl| \bigsqcap_{C \in J_n^{i}} \Sp(C)  \Bigl|\\
		&= \sum_{Q \in \bigsqcap_{C \in J_n^{i}} \Sp(C)} (|\{P \in \Sp(C_{j_{i+1}}) \mid Q \in \textbf{Q}(P) \}| - 1)\\
		&\geq \sum_{\stackrel{[P]_\sim,}{P \in \Sp(C_{j_{i+1}})}} (| [P]_\sim | - 1)\\
		&=|\Sp(C_{j_{i+1}})| - m. 	
	\end{align*}
	The inequality in this chain requires some explanation:
	Fix a choice function $g$ that maps each equivalence class $[P]_\sim$ to a part $Q \in \mathbf{Q}(P)$. By definition of $\sim$, we have $g([P]_\sim) \in \mathbf{Q}(P')$ for every $P' \in [P]_\sim$. Hence, for every $Q \in \bigsqcap_{C \in J_n^{i}} \Sp(C)$, it holds:\\ $|\{P \in \Sp(C_{j_{i+1}}) \mid Q \in \textbf{Q}(P) \}| - 1 \geq \sum_{[P]_\sim \in g^{-1}(Q)} (|[P]_\sim| - 1)$.\\  
	We can sum up the result of these considerations like this:
	\begin{equation*}
		m \geq | \Sp(C_{j_{i+1}}) | - k_{i+1}.\tag{$\star\star$}
	\end{equation*} 
	This means that the relation $\sim$ has rather many equivalence classes if $k_{i+1}$ is small. Let us now finish the proof of the claim:\\ 
	
	We have already established the upper bound ($\star$) for $|\Gamma_{i+1}^{\bar{\sigma}}|$. Let $p \in [m]$ be such that $\ell_p \geq \ell_t$ for all $t \in [m]$. A consequence of ($\star \star$) is: $\ell_p \leq k_{i+1}+1$. It can be checked that the values $\ell_1,...,\ell_m$ that maximise the bound in ($\star$) and satisfy ($\star \star$) are such that $\ell_t = 1$ for all $t \neq p$: If there is some $t \neq p$ with $\ell_t \geq 2$, then decrease $\ell_t$ by one and increase $\ell_p$ by one. This does not change the number $m$ of equivalence classes, so it still satisfies ($\star \star$), but the value of $\prod_{t \in [m]} \ell_t!$ is strictly increased.\\
	Therefore, ($\star$) becomes:
	\[
	|\Gamma_{i+1}^{\bar{\sigma}}| \leq \ell_p! \leq (k_{i+1}+1)!
	\]
	This concludes the proof of the claim. \qedsymbol\\
	\\
	Hence, in order to finish the proof of the lemma, we have to bound
	\[
	|\Gamma_{s}| \leq \prod_{i = 1}^{s} (\min\{(k_i+1), f(n)\})!
	\]
	from above (recall that $s$ is the number of steps needed to construct $J_n$ satisfying property (1)). We know that $\sum_{i=1}^{s} k_i$ is some fixed value $\leq n$. The values of the above product and of the sum solely depend on the sequence $(k_i)_{i \in [s]}$. Now we can make a ``redistribute-weight argument'' again: Let $k_j$ be such that $k_j + 1 < f(n)$, and such that there is a $k_i \leq k_j$ with $k_i > 1$. We can decrease $k_i$ by one and increase $k_j$ by one. This does not change the value of the sum, and the value of the product of factorials can only get larger (as $k_j+1$ does not exceed $f(n)$). If we iterate this process, we see that the value of the product is maximised for a sequence $(k_i)_{i \in [s]}$, where every $k_i$ is either $1$ or $k_i = f(n)-1$ (and there may be exactly one $k_i$ with $1 < k_i < f(n) - 1$).\\
	For such a sequence of $k_i$, the value of the product is at most
	\[
	|\Gamma_{s}| \leq \prod_{i = 1}^{s} (\min\{(k_i+1), f(n)\})! \leq f(n)!^{ n /(f(n)-1) } \cdot 2^n
	\]
\end{proof}

\begin{corollary}
	\label{cor:LOG_finalCorollaryCase1}
	Assume the following three conditions hold:
	\begin{enumerate}
		\item Every $v \in \{0,1\}^n$ occurs in at least one of the colour classes $C$ in $\Pp_n$.
		\item The function $\max_{C \in \Pp_n}{|C|}$ is in $\Oo(n)$.
		\item $|\Sp(B_n)| \in o(n)$.
	\end{enumerate}
	Then, for sufficiently large $n$:
	\[
	|\Stab(\Pp_n)| \leq \left( f(n)!  \right)^{n / (f(n)-1)} \cdot 2^n \cdot (8 \log n)!
	\]
\end{corollary}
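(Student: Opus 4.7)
The plan is to combine Corollary~\ref{cor:basicFact} with Lemmas~\ref{lem:nonSingletonsUpperBound}, \ref{lem:LOG_centralLemmaForMultipleSets}, and \ref{lem:LOG_levelIntersection} by a standard ``count the tuples, then count the realisations of each tuple'' strategy. First I would choose $f(n) \in o(n)$ that upper-bounds $|\Sp(C)|$ for every $C \in \Pp_n$ (possible by condition (3), since $B_n$ has the maximum support size). Invoke Lemma~\ref{lem:LOG_levelIntersection} to obtain $J_n \subseteq \Pp_n$ satisfying its properties (1) and (2). Since $J_n \subseteq \Pp_n$, Corollary~\ref{cor:basicFact} yields
\[
\Stab(\Pp_n) \;\leq\; \bigcap_{C \in \Pp_n}\Stab(\Sp(C)) \;\leq\; \bigcap_{C \in J_n}\Stab(\Sp(C)),
\]
so every $\pi \in \Stab(\Pp_n)$ induces a tuple $\bar{\sigma} \in \bigtimes_{C \in J_n} \Sym(\Sp(C))$.

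Next I would stratify $\Stab(\Pp_n)$ by this induced tuple. Lemma~\ref{lem:LOG_levelIntersection}(2) directly bounds the number of realisable $\bar{\sigma}$ by $(f(n)!)^{n/(f(n)-1)} \cdot 2^n$. For each fixed realisable $\bar{\sigma}$, Lemma~\ref{lem:LOG_centralLemmaForMultipleSets} supplies a uniquely determined $\theta_{\bar{\sigma}} \in \Sym(\bigsqcap_{C\in J_n}\Sp(C))$ which every realising $\pi$ must also induce. Hence the constraint $\pi(P) = \theta_{\bar{\sigma}}(P)$ holds for every part $P$ of the intersection partition. Combining this with property (1) of $J_n$, every position $k \in S_n$ lies in a singleton part of $\bigsqcap_{C \in J_n}\Sp(C)$, forcing $\pi(k)$ to be completely determined on $S_n$.

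It then remains to count the freedom of $\pi$ on $[n]\setminus S_n$. The non-singleton parts of $\bigsqcap_{C\in J_n}\Sp(C)$ partition $[n]\setminus S_n$ into disjoint blocks $P_1,\dots,P_r$, and on each block $\pi$ may be any bijection onto $\theta_{\bar{\sigma}}(P_j)$. Using Lemma~\ref{lem:nonSingletonsUpperBound}, $|[n]\setminus S_n| < 8\log n$, so the number of such extensions is
\[
\prod_{j=1}^r |P_j|! \;\leq\; \Bigl(\sum_{j=1}^r |P_j|\Bigr)! \;\leq\; (8\log n)!.
\]
Multiplying the two bounds (realisable tuples times realisations per tuple) gives the claimed inequality for all sufficiently large $n$.

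Since Lemmas~\ref{lem:LOG_centralLemmaForMultipleSets}, \ref{lem:nonSingletonsUpperBound}, and \ref{lem:LOG_levelIntersection} are already in hand, no serious obstacle remains; the only care required is the bookkeeping that turns ``$\pi$ is determined on $S_n$ and free within non-singleton parts'' into the factor $(8\log n)!$, for which the crude inequality $\prod |P_j|! \leq (\sum |P_j|)!$ is tight enough. The conceptual work --- namely that the set $S_n$ of singleton positions is ``almost all'' of $[n]$, and that relatively few simultaneous support permutations can be jointly realised --- has been done in the preceding lemmas.
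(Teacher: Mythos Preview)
Your proposal is correct and follows essentially the same route as the paper: invoke Lemma~\ref{lem:LOG_levelIntersection} to obtain $J_n$, stratify $\Stab(\Pp_n)$ by the induced tuple $\bar\sigma$, bound the number of tuples via Lemma~\ref{lem:LOG_levelIntersection}(2), and bound the realisations per tuple via Lemma~\ref{lem:LOG_centralLemmaForMultipleSets} together with Lemma~\ref{lem:nonSingletonsUpperBound}. Your treatment of the $(8\log n)!$ factor via $\prod_j |P_j|! \le (\sum_j |P_j|)!$ is in fact slightly more explicit than the paper's, which simply observes that the at most $8\log n$ positions outside $S_n$ may be permuted arbitrarily.
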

\begin{proof}	
	Consider the set of colour classes $J_n \subseteq \Pp_n$ that exists by Lemma \ref{lem:LOG_levelIntersection}. Since every $\pi \in \Stab(\Pp_n)$ fixes every $C \in \Pp_n$, by Lemma \ref{lem:sandwichLemma}, it induces a tuple of permutations on the supporting partitions $\bar{\sigma} \in \bigtimes_{C \in \Pp_n} \Sym(\Sp(C))$. In particular it also induces a $\bar{\sigma} \in \bigtimes_{C \in J_n} \Sym(\Sp(C))$. By Lemma \ref{lem:LOG_levelIntersection}, there are at most $\left( f(n)!  \right)^{ n / (f(n)-1)} \cdot 2^n$ possibilities for such a $\bar{\sigma}$. Furthermore, each such $\bar{\sigma}$ can be realised by at most $(8 \log n)!$ distinct permutations $\pi \in \Stab(\Pp_n)$: Due to Lemma \ref{lem:LOG_centralLemmaForMultipleSets} and property (1) of $J_n$ (see Lemma \ref{lem:LOG_levelIntersection}), every $\pi$ realising $\bar{\sigma}$ permutes the positions in $S_n$ in the same way, and according to Lemma \ref{lem:nonSingletonsUpperBound}, there remain at most $8 \log n$ positions which may be permuted arbitrarily by $\pi$.
\end{proof}

With this, we prove our final lemma of this subsection, which is just a more precise formulation of Lemma \ref{lem:resultCaseSublinear}.

\begin{lemma}
	Assume again the three conditions from Corollary \ref{cor:LOG_finalCorollaryCase1}.
	Then for any $k \in \bbN$, the limit 
	\[
	\lim_{n\to\infty} \frac{|\Orbit(\Pp_n)|}{2^{kn}} = \lim_{n\to\infty} \frac{n!}{|\Stab(\Pp_n)| \cdot 2^{kn}} 
	\]
	does not exist. That is to say, the orbit of $\Pp_n$ w.r.t.\ the action of $\Sym_n$ on $\{0,1\}^n$ grows super-polynomially in $2^n$.
\end{lemma}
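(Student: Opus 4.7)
The plan is to pass to logarithms in the Orbit--Stabiliser identity $|\Orbit(\Pp_n)| = n!/|\Stab(\Pp_n)|$, substitute the upper bound on $|\Stab(\Pp_n)|$ from Corollary~\ref{cor:LOG_finalCorollaryCase1}, and apply Stirling's approximation throughout. It suffices to show that $\log\bigl(n!/|\Stab(\Pp_n)|\bigr) - kn \log 2 \to +\infty$ for every fixed $k \in \bbN$.

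\textbf{Estimating the stabiliser.} I would choose $f(n) := |\Sp(B_n)|$, which lies in $o(n)$ by hypothesis and is therefore a valid choice for the sublinear bound $f$ in Lemma~\ref{lem:LOG_levelIntersection}. The corollary then yields
\[
\log |\Stab(\Pp_n)| \leq \frac{n}{f(n)-1}\log(f(n)!) + n \log 2 + \log\!\bigl((8 \log n)!\bigr).
\]
By Stirling, $\log(f(n)!) = f(n)\log f(n) - f(n) + O(\log f(n))$, so
\[
\frac{n}{f(n)-1}\log(f(n)!) = n \log f(n) + O(n),
\]
while $\log\bigl((8\log n)!\bigr) = O(\log n \cdot \log\log n) = o(n)$. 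Combining, $\log |\Stab(\Pp_n)| \leq n \log f(n) + O(n)$.

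\textbf{Concluding the super-polynomial growth.} Stirling also gives $\log n! = n \log n - O(n)$, whence
\[
\log\frac{|\Orbit(\Pp_n)|}{2^{kn}} \;\geq\; n \log n - n \log f(n) - O(n) - kn \log 2 \;=\; n \log\!\bigl(n/f(n)\bigr) - O(n),
\]
where the hidden constant in $O(n)$ depends on $k$ but not on $n$. Since $f(n) \in o(n)$, the ratio $n/f(n) \to \infty$ and hence $\log(n/f(n)) \to \infty$. Therefore the right-hand side tends to $+\infty$, and $|\Orbit(\Pp_n)|/2^{kn} \to \infty$ for every fixed $k$, as claimed.

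\textbf{Main obstacle.} The argument is essentially Stirling bookkeeping, so the only delicate step will be verifying that the dominant factor $(f(n)!)^{n/(f(n)-1)}$ in the stabiliser bound contributes only $n \log f(n) + O(n)$ to the logarithm and not something of order $n \log n$. This is precisely the place where the sublinearity assumption $|\Sp(B_n)| \in o(n)$ is used: it is exactly the condition ensuring $\log(n/f(n)) \to \infty$, which in turn lets the $n \log n$ term from $n!$ outpace everything else by an unbounded margin and swallow any constant multiple of $n$ coming from the factor $2^{kn}$.
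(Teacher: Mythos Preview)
Your proof is correct and follows the same strategy as the paper: plug the stabiliser bound from Corollary~\ref{cor:LOG_finalCorollaryCase1} into the Orbit--Stabiliser formula and compare growth rates via Stirling. Your choice to work throughout in logarithms yields a considerably cleaner argument than the paper's direct algebraic manipulation of the same expressions; the paper arrives at the same dominant comparison $n$ versus $f(n)^{h(n)} \sim f(n)$ (your $n\log(n/f(n))$ term) after a page of Stirling bookkeeping and two auxiliary asymptotic claims.
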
	
\begin{proof}
	Plugging in the stabiliser bound from Corollary \ref{cor:LOG_finalCorollaryCase1}, we get for the fraction that we are taking the limit of:
	\[
	\frac{n!}{|\Stab(\Pp_n)| \cdot 2^{kn}} \geq \frac{n!}{\left( f(n)! \right)^{ n/(f(n)-1) } \cdot (8 \log n)! \cdot 2^{(k+1)n}}
	\]
	According to Stirling's Formula, factorials can be approximated as follows: 
	\[
	0.5 \cdot n! \leq \sqrt{2\pi n} \cdot \left(\frac{n}{e}\right)^n \leq n!
	\] 
	\noindent
	With this, we obtain the following chain of expressions, where we use the abbreviation $h(n) := \frac{f(n)+0.5}{f(n)-1}$.
	\begin{align*}
		&\frac{n!}{\left( f(n)! \right)^{ n/(f(n)-1)} \cdot (8 \log n)! \cdot 2^{(k+1)n}}\\
		\geq
		&\frac{1}{2} \cdot \sqrt{\frac{n}{8\log n}} \cdot \frac{1}{(8\pi f(n))^{0.5n/(f(n)-1)}} \cdot \left(\frac{n}{2^{k+1}e}\right)^{n} \cdot  \left(\frac{e}{8 \log n}\right)^{8 \log n} \cdot \left(\frac{e}{f(n)}\right)^{f(n) \cdot (n/(f(n)-1))}\\
		\geq & \frac{1}{2} \cdot \sqrt{\frac{n}{8\log n}} \cdot 
		\left(\frac{n}{(8\pi)^{0.5/(f(n)-1)} \cdot 2^{k+1}e \cdot f(n)^{h(n)}} \right)^{n} \cdot  \left(\frac{e}{8 \log n}\right)^{8 \log n}\\
		= & \frac{1}{2} \cdot\sqrt{\frac{n}{8\log n}} \cdot 
		\left(\frac{ n^{(n- 8\log \log n)/n} }{(8\pi)^{0.5/(f(n)-1)} \cdot 2^{k+1}e \cdot f(n)^{h(n)}} \right)^{n} \cdot \left(\frac{\log n \cdot e}{8 \log n}\right)^{8 \log n}\\
		= & \frac{1}{2} \cdot\sqrt{\frac{n}{8\log n}} \cdot 
		\left(\frac{ n^{(n- 8\log \log n)/n} }{(8\pi)^{0.5/(f(n)-1)} \cdot 2^{k+1}e \cdot f(n)^{h(n)}} \right)^{n} \cdot \left(\frac{e}{8}\right)^{8 \log n}\\
	\end{align*}
	In this calculation we used that $n^{8 \log \log n} = (\log n)^{8 \log n}.$ Now we need the following claims:\\
	\begin{claim} 
		$
		f(n)^{h(n)} \in \Theta(f(n)).
		$
	\end{claim}
	\begin{proof}
		\begin{align*}
			\lim_{n\to\infty} \frac{f(n)^{h(n)}}{f(n)} = \lim_{n\to\infty} f(n)^{\frac{1.5}{f(n)-1}} = \lim_{n\to\infty} \exp\Bigl(\frac{1.5}{f(n)-1} \cdot \ln(f(n))\Bigr) = 1.
		\end{align*} 	
	\end{proof}

	\begin{claim}
		$
		n^{(n - 8 \log \log n)/n} \in \Theta(n).
		$
	\end{claim}
	\begin{proof}
		\begin{align*}
			\lim_{n\to\infty} \frac{n^{(n - 8 \log \log n)/n}}{n} = \lim_{n\to\infty} n^{(-8\log \log n)/n} = \lim_{n\to\infty} \exp\Bigl(\frac{-8\log \log n}{n} \cdot \ln(n)\Bigr) = 1.
		\end{align*}  
	\end{proof}
	\noindent 
	From these claims and the fact that $f(n) \in o(n)$ it follows that there is no constant $c$ such that
	\[
	\frac{n^{(n - 8 \log \log n)/n}}{f(n)^{h(n)}} \leq c,
	\]
	for all $n$. We can conclude that for large enough $n$,
	\[
	\left(\frac{n^{(n - 8\log \log n)/n}}{(8\pi)^{0.5/(f(n)-1)} \cdot 2^{k+1}e \cdot f(n)^{h(n)}}\right)^{n} \geq 2^n.
	\]
	Furthermore,
	\[
	\left(\frac{e}{8}\right)^{8 \log n} \geq 	\left(\frac{n^8}{n^{24}}\right) = n^{-16}.
	\]
	So, the dominating factor is $2^n$, which means that the limit of the whole product does not exist.
\end{proof}
\noindent With this, we have proven the super-polynomial orbit theorem for preorders with log-sized colour classes in case that the supporting partitions of the colour classes have at most sublinearly many parts. We now move on to the remaining case.\\

\subsection{The case of linearly-sized supports} 
This subsection is dedicated to proving the result for the case that $|\Sp(B_n)| \in \Theta(n)$. Recall that $B_n$ denotes the colour class $C \in \Pp_n$ whose supporting partition has the most parts. 
\begin{lemma}
	\label{lem:LOG_caseLinearSupport}
	Assume that the following conditions hold for $B_n$:
	\begin{enumerate}
		\item $|B_n| \in \Oo(n)$.
		\item $|\SpC(B_n)| \in \Theta(n)$
	\end{enumerate}
	Then the orbit size of $B_n$ (and therefore also of $\Pp_n$) w.r.t.\ the action of $\Sym_n$ on $\{0,1\}^n$ grows faster than any polynomial in $2^n$.
\end{lemma}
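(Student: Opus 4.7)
The plan mirrors the structure of Lemma~\ref{lem:resultCaseSublinear}: apply the Orbit-Stabiliser theorem, then upper-bound $|\Stab_n(B_n)|$ using the sandwich lemma (Lemma~\ref{lem:sandwichLemma}), which gives $|\Stab_n(B_n)| \leq |\Stab_n(\Sp(B_n))| = \prod_i s_i! \cdot \prod_t k_t!$, where $s_1 \geq \cdots \geq s_m$ are the sizes of the parts of $\Sp(B_n)$ (with $m \in \Theta(n)$) and $k_t$ counts parts of size $t$. The size assumption $|B_n| \in \Oo(n)$ enters via $\StabP_n(\Sp(B_n)) \leq \Stab_n(B_n)$: each $\Sym(P_i)$-orbit of $v \in B_n$ inside $B_n$ has size $\binom{s_i}{a_i(v)} \leq cn$, which forces the weight $a_i(v)$ to lie in an extreme range (close to $0$ or to $s_i$) on any part $P_i$ whose size exceeds roughly $\log n$.

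I would then case-split on the maximum part size $s_1$. In the easy subcase, $s_1$ is bounded by a constant $C$: a direct Stirling calculation on $n!/((C!)^m \cdot m!)$, combined with $m \in \Theta(n)$ and $\sum_i s_i = n$, gives $\log|\Orbit_n(B_n)| \in \Omega(n \log n)$, which is super-polynomial in $2^n$. In the hard subcase, $s_1$ is large, possibly up to $\Theta(n)$, and the sandwich bound alone is only exponential in $n$, so a further argument is needed. Here I exploit that (i) $B_n$ is ``almost constant'' on each large part, so its restriction to the union $L$ of large parts lives in a very small combinatorial set, and (ii) $|\Sp(B_n)| \in \Theta(n)$ forces $\Theta(n)$ further (small or singleton) parts that are individually distinguished by $\Stab_n(B_n)$, because no transposition of two equal-sized small parts can lie in $\Stab_n(B_n)$ without the coarsest-support property being violated by merging them. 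Decomposing $\Stab_n(B_n) \leq \prod_{P \subseteq L}\Sym(P) \times G_S$ where $G_S$ acts on the complement $S = [n] \setminus L$, I would bound $|G_S|$ by applying the supporting-partitions machinery of Section~\ref{sec:support} to the sub-preorder on $\{0,1\}^S$ induced by $B_n$; since the parts of $\Sp(B_n)$ lying in $S$ are of sublinear size in $|S|$, this essentially reuses Lemma~\ref{lem:LOG_levelIntersection} one level down.

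The main obstacle is the hard subcase: one must extract a saving of $2^{\omega(n)}$, not merely $2^{\Theta(n)}$, on top of the sandwich bound, since a linear-sized large part contributes a factor $s_1! \approx n^n$ that absorbs almost all of $n!$. This super-linear saving has to come entirely from the $\Theta(n)$ small parts through the induced sub-preorder on $S$, and a Stirling-type computation analogous to the one closing the sublinear case is needed to combine the two bounds cleanly. Care must also be taken with the limiting case $m = n$ (all singletons in $\Sp(B_n)$), where the argument reduces to showing that $\Stab_n(B_n)$ is substantially smaller than the transposition-free bound $|A_n| = n!/2$, because $\Stab_n(B_n)$ must additionally stabilise the specific small set $B_n$ — an $A_n$-style stabiliser would extend to the full $\Sym_n$ on $B_n$ and contradict the coarsest-support assumption.
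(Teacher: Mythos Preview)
Your overall shape (Orbit--Stabiliser plus the sandwich of Lemma~\ref{lem:sandwichLemma}, followed by a case split) matches the paper, but the case split itself and the treatment of the hard case both miss the point, and this is a genuine gap rather than a stylistic difference.

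First, your ``easy'' subcase $s_1\le C$ is not actually easy. If $\Sp(B_n)$ consists of $\Theta(n)$ singletons (which is perfectly compatible with $s_1\le C$), then $m$ is close to $n$ and your bound $n!/((C!)^m\cdot m!)$ degenerates to a constant. You notice this in your last paragraph, but your suggested fix --- that no transposition of two singleton parts lies in $\Stab_n(B_n)$ --- only buys a factor of $2$: a transposition-free subgroup of $\Sym_n$ can still have order $n!/2$. The claim that equal-sized parts are ``individually distinguished'' is false in general; for parts of size $\ge 2$, a permutation swapping two parts setwise does \emph{not} let you merge them (merging requires the full symmetric group on the union to stabilise $B_n$, not just one involution). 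So the sandwich lemma really gives nothing on the singleton positions, and this is exactly the case the paper treats as hard.

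Second, your proposal to handle the hard subcase by ``reusing Lemma~\ref{lem:LOG_levelIntersection} one level down'' on a ``sub-preorder induced by $B_n$'' does not type-check: that lemma consumes a \emph{family} of colour classes and exploits the intersection of their supports, whereas here you have a single set $B_n$. The paper's route is entirely different and is the technical heart of this lemma. It splits on the number $|S_n|$ of singleton parts in $\Sp(B_n)$. When $|S_n|\in o(n)$ the sandwich bound suffices (your kind of Stirling calculation). When $|S_n|\in\Theta(n)$, the paper abandons the sandwich bound on $\Sp(B_n)$ and instead works \emph{inside} $B_n$: it constructs a subset $A_n\subseteq B_n$ of size at most $|S_n|/2$ such that the intersection partition $\bigsqcap_{a\in A_n}\Sp(a)$ already cuts $S_n$ into pieces of size $\le 2$ except on parts where every remaining string of $B_n$ is constant or imbalanced (Lemma~\ref{lem:LOG_existenceOfSubsetA}). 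A delicate argument (Lemma~\ref{lem:LOG_pi}) then shows that once you fix where $\pi$ sends the elements of $A_n$ inside $B_n$ --- there are at most $|B_n|^{|A_n|}\le (cn)^{|S_n|/2}$ such choices --- the action of $\pi$ on \emph{all} of $S_n$ is determined up to a factor $2^{|S_n|/2}$, yielding $|\Stab(B_n)|\le (2cn)^{|S_n|/2}\cdot (n-|S_n|)!$. This is the saving of order $n^{\Theta(n)}$ that your outline needs but does not supply; the observation that strings in $B_n$ are ``almost constant'' on large parts is one ingredient, but the construction of $A_n$ and the rigidity lemma for the remaining positions are the essential missing pieces.
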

Proving this lemma requires several steps, and again, a case distinction. The relevant measure here is the number of \emph{singleton} parts in $\Sp(B_n)$. Firstly, we show that if the number of singleton parts in $\SpC(B_n)$ grows sublinearly in $n$, while the total number of parts $|\SpC(B_n)|$ is linear, the stabiliser of $\SpC(B_n)$ is small enough such that \autoref{lem:LOG_caseLinearSupport} is true. This is a relatively straightforward calculation.\\
\\
\textbf{Subcase 1: Sublinear number of singleton parts}\\
Let us begin with the easier case, where the number of singleton parts in $\Sp(B_n)$ grows sublinearly. We denote by $S_n \subseteq [n]$ the set of positions that are in singleton parts, i.e.\
%TODO: mention that this is different from S_n in the previous setion
\[
S_n := \{k \in [n] \mid \{k\} \in \Sp(B_n)  \}.
\]
Thus, the meaning of $S_n$ is now slightly different as in the previous section.
The size of $\Stab(\Sp(B_n))$ can be bounded as follows:
\begin{lemma}
	\label{lem:LOG_stabBoundSingletons}
	Let $s_n := |S_n|$, and $t_n := |\Sp(B_n)| - s_n$.
	\[
	|\Stab(\Sp(B_n))| \leq s_n! \cdot t_n! \cdot \left((n/t_n)!)\right)^{t_n}.
	\]  
\end{lemma}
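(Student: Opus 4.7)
The plan is to exploit the structure of the setwise stabiliser of a set partition. Any $\pi \in \Stab(\Sp(B_n))$ must send each part of $\Sp(B_n)$ to a part of the same size, so it permutes the $s_n$ singleton parts among themselves and the $t_n$ non-singleton parts among themselves, independently; these two factors can be bounded separately.

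For the singleton parts, permuting them is the same as permuting their (unique) elements, contributing exactly $s_n!$ choices. For the non-singleton parts, I would enumerate them with sizes $q_1, \ldots, q_{t_n}$ (summing to $n - s_n$) and count by first choosing, for each part, the target part it is mapped to --- at most $t_n!$ choices (a slight overcount when two part sizes differ, but still a valid upper bound) --- and then choosing, for each non-singleton part, how its $q_i$ elements are permuted inside the image part, contributing $q_i!$ per part. Multiplying together gives
\[
|\Stab(\Sp(B_n))| \leq s_n! \cdot t_n! \cdot \prod_{i=1}^{t_n} q_i!.
\]

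It then remains to bound $\prod_i q_i!$ by $((n/t_n)!)^{t_n}$ using the constraint $\sum_i q_i \leq n$. The plan is a redistribution/averaging argument in the spirit of the one already used in the proof of Lemma~\ref{lem:LOG_levelIntersection}: starting from any sequence $(q_i)$ with $t_n$ terms summing to at most $n$, one compares the value of the product to the balanced configuration $q_i = n/t_n$ and shows that the stated inequality holds in either direction. The main obstacle I anticipate is exactly this final estimate: since $\log(q!)$ is convex in $q$, one cannot simply say "the product is maximised at the balanced configuration", and the redistribution has to be carried out jointly with the overcounting slack in the $t_n!$ factor (which tightens when parts are unequal in size) in order for the two effects to combine into the claimed bound.
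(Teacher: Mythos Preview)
Your decomposition --- $s_n!$ for permuting the singleton parts, at most $t_n!$ for permuting the non-singleton parts among themselves, and $\prod_{i} q_i!$ for the internal permutations --- is exactly the paper's argument, only spelled out more carefully. The divergence is at the final step, and your instinct there is correct.

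The paper disposes of $\prod_i q_i! \le \bigl((n/t_n)!\bigr)^{t_n}$ in one line by asserting that ``every non-singleton part can have size at most $n/t_n$''. That claim is false: $n/t_n$ is (an overestimate of) the \emph{average} non-singleton part size, not a bound on the maximum. Your proposed repair --- offsetting the convexity loss against the slack in the $t_n!$ factor, which is loose precisely when the non-singleton parts have unequal sizes --- cannot save the inequality either, because the inequality already fails for the \emph{exact} setwise stabiliser. For $n=12$ with parts $\{1,2\},\{3,4\},\{5,\dots,12\}$ one has $s_n=0$, $t_n=3$, $n/t_n=4$; the stabiliser has order $2!\cdot(2!)^2\cdot 8! = 322560$, while the stated bound gives $0!\cdot 3!\cdot(4!)^3 = 82944$. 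Since even the true stabiliser exceeds the right-hand side, no redistribution between the $t_n!$ and $\prod_i q_i!$ factors can close the gap.

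So the displayed inequality is not provable as written. The honest version of the lemma stops at $|\Stab(\Sp(B_n))|\le s_n!\cdot t_n!\cdot\prod_i q_i!$; carrying this through the downstream orbit estimate then needs a separate (and more careful) treatment of $\prod_i q_i!$ under the constraints $q_i\ge 2$, $\sum_i q_i = n-s_n$ and $t_n\in\Theta(n)$, rather than the balanced-configuration bound.
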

\begin{proof}
	Recall that $\Stab(\Sp(B_n))$ is the setwise stabiliser of the support, so it also involves permutations that map parts to other parts. The factors $s_n!$ and $t_n!$ account for these possible permutations of the parts: All the singleton parts of $\Sp(B_n)$ can be mapped to each other, and every non-singleton part can at most be mapped to every other non-singleton part. The factor $\left((n/t_n)!)\right)^{t_n}$ is an upper bound for the number of possible permutations within all non-singleton parts, as every non-singleton part can have size at most $(n/t_n)$. 
\end{proof}

\begin{corollary}
	\label{cor:LOG_stabBoundSingletonsConcrete}
	Let $f(n) \in o(n)$ be a function such that $s_n \leq f(n)$ and assume $|\Sp(B_n)| \in \Theta(n)$, i.e.\ there exists a constant $0 < c \leq 1$, such that for all large enough $n$, $|\Sp(B_n)| \geq c \cdot n$. Then, for all large enough $n$, the following bound holds:
	\[
	|\Stab(\Sp(B_n))| \leq f(n)! \cdot (n/2)! \cdot  \left(d! \right)^{(n/2)}.
	\]
	Here, $d$ is some positive constant.
\end{corollary}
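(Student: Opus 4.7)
The plan is to apply Lemma \ref{lem:LOG_stabBoundSingletons} and bound each of its three factors separately by the corresponding factors in the target inequality.

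First, the bound on $s_n!$ is trivial: since $s_n \leq f(n)$ by assumption, we immediately get $s_n! \leq f(n)!$.

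Next, I would upper-bound $t_n$. By definition, the $t_n$ parts counted here are the \emph{non-singleton} parts of $\Sp(B_n)$, so each of them contains at least $2$ positions of $[n]$. These parts together cover at most $n - s_n \leq n$ positions, so $2 t_n \leq n$, i.e.\ $t_n \leq n/2$, and hence $t_n! \leq (n/2)!$.

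For the remaining factor $((n/t_n)!)^{t_n}$, I need to upper-bound $n/t_n$ by a constant. Here is where the two assumptions interact: from $|\Sp(B_n)| = s_n + t_n \geq c n$ and $s_n \leq f(n) \in o(n)$, for all sufficiently large $n$ we have $f(n) \leq (c/2) n$, and therefore $t_n \geq c n - f(n) \geq (c/2) n$. This yields $n/t_n \leq 2/c$, so setting $d := \lceil 2/c \rceil$ (a positive constant depending only on $c$), we obtain $(n/t_n)! \leq d!$. Using the previously established bound $t_n \leq n/2$ then gives $((n/t_n)!)^{t_n} \leq (d!)^{t_n} \leq (d!)^{n/2}$.

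Multiplying the three bounds yields the claimed inequality
\[
|\Stab(\Sp(B_n))| \leq f(n)! \cdot (n/2)! \cdot (d!)^{n/2}.
\]
There is no real obstacle here; the only care needed is to make the ``sufficiently large $n$'' quantifier explicit when invoking $f(n) \leq (c/2) n$ to lower-bound $t_n$.
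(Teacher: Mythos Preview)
Your proof is correct and follows essentially the same route as the paper: both apply Lemma~\ref{lem:LOG_stabBoundSingletons}, use $s_n\le f(n)$, bound $t_n\le n/2$ via the fact that non-singleton parts have at least two elements, and lower-bound $t_n$ by a constant fraction of $n$ using $|\Sp(B_n)|\ge cn$ and $f(n)\in o(n)$ to make $n/t_n$ constant. The only cosmetic difference is the choice of constant ($d=\lceil 2/c\rceil$ for you versus $d=1/(c-\epsilon)$ in the paper), which is immaterial.
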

\begin{proof}
	We plug in the right values for $s_n$ and $t_n = |\Sp(B_n)| - n$ into Lemma \ref{lem:LOG_stabBoundSingletons}. We have $s_n \leq f(n)$ by assumption. Further, using $|\Sp(B_n)| \geq c \cdot n$, and the fact that every non-singleton part consists of at least two elements, we can bound $t_n$ as follows:
	\[
	c \cdot n - f(n) \leq t_n \leq \frac{n}{2}.
	\]
	Now plug in the upper and lower bounds for $t_n$ and $s_n$ in the right places in the bound from Lemma \ref{lem:LOG_stabBoundSingletons}. Note that the expression $\frac{1}{c - f(n)/n}$ that occurs in this bound can be upper-bounded by some $\frac{1}{c - \epsilon}$ (for large enough $n$) because $f(n) \in o(n)$. Then set $d := \frac{1}{c - \epsilon}$.
\end{proof}

\begin{lemma}
	\label{lem:LOG_orbitCase11}
	Let $f(n) \in o(n)$ be a function such that $s_n \leq f(n)$ and assume $|\Sp(B_n)| \in \Theta(n)$. Then for any $k \in \bbN$, the limit 
	\[
	\lim_{n\to\infty} \frac{|\Orbit(B_n)|}{2^{kn}} \geq \lim_{n\to\infty} \frac{n!}{|\Stab(\Sp(B_n))| \cdot 2^{kn}} 
	\]
	does not exist. That is to say, the orbit of $B_n$ w.r.t.\ the action of $\Sym_n$ on $\{0,1\}^n$ grows super-polynomially in $2^n$.
\end{lemma}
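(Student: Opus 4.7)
The plan is to combine the Orbit--Stabiliser Theorem with the explicit stabiliser bound of Corollary \ref{cor:LOG_stabBoundSingletonsConcrete}, and then verify by a standard Stirling-style calculation that the resulting expression grows faster than $2^{kn}$ for every constant $k$. Since Corollary \ref{cor:LOG_stabBoundSingletonsConcrete} already encapsulates the combinatorial content, what remains is pure factorial bookkeeping.

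First I would apply Lemma \ref{lem:sandwichLemma}, which gives $\Stab(B_n) \leq \Stab(\Sp(B_n))$, together with Orbit--Stabiliser, to obtain
\[
|\Orbit(B_n)| \;=\; \frac{n!}{|\Stab(B_n)|} \;\geq\; \frac{n!}{|\Stab(\Sp(B_n))|} \;\geq\; \frac{n!}{f(n)! \cdot (n/2)! \cdot (d!)^{n/2}},
\]
where the last inequality is Corollary \ref{cor:LOG_stabBoundSingletonsConcrete}. It therefore suffices to show that the right-hand side, divided by $2^{kn}$, tends to infinity for every fixed $k \in \bbN$.

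The dominant factor is extracted via the elementary identity
\[
\frac{n!}{(n/2)!} \;=\; \prod_{j = n/2 + 1}^{n} j \;\geq\; (n/2)^{n/2},
\]
so after cancellation and division by $(d!)^{n/2}$, we get a lower bound of the form $\bigl(n/(2 \cdot d!)\bigr)^{n/2}$, whose logarithm is $(n/2)\log_2(n) - O(n)$, i.e.\ of order $n \log n$. In the denominator, $\log_2(f(n)!) \leq f(n) \log_2 f(n)$, and this is in $o(n \log n)$ because $f(n) \in o(n)$ (indeed $f(n)/n \to 0$ and $\log f(n) \leq \log n$). The remaining factor $2^{kn}$ contributes only a linear term $kn$ in the logarithm. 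Combining,
\[
\log_2 \frac{n!}{f(n)! \cdot (n/2)! \cdot (d!)^{n/2} \cdot 2^{kn}} \;=\; \tfrac{n}{2} \log_2 n \,-\, o(n \log n) \,-\, O(n),
\]
which tends to $+\infty$ for every fixed $k$. Hence the limit in the statement does not exist, proving the lemma.

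I do not expect a genuine obstacle: the only mildly delicate step is checking $f(n) \log f(n) \in o(n \log n)$ whenever $f(n) \in o(n)$, which is immediate from $\log f(n) \leq \log n$ together with $f(n)/n \to 0$. The rest is a transparent Stirling calculation analogous to (but simpler than) the one at the end of the sublinear-support subsection, since here the factor $(n/2)!$ in the denominator is straightforwardly dominated by $n!$.
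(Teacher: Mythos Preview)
Your proposal is correct and follows essentially the same approach as the paper: apply Lemma~\ref{lem:sandwichLemma} and Orbit--Stabiliser, invoke the stabiliser bound from Corollary~\ref{cor:LOG_stabBoundSingletonsConcrete}, and then show by an asymptotic estimate that $n!/\bigl(f(n)!\,(n/2)!\,(d!)^{n/2}\,2^{kn}\bigr)\to\infty$. The only difference is cosmetic: the paper carries out the asymptotics by substituting Stirling's formula into each factorial and simplifying the resulting product explicitly, whereas you pass to logarithms and use the elementary bound $n!/(n/2)!\geq (n/2)^{n/2}$ together with $\log f(n)!\leq f(n)\log f(n)\in o(n\log n)$, which is arguably cleaner and avoids tracking constants.
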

\begin{proof}
	We bound $|\Stab(\Sp(B_n))|$ according to the preceding corollary. Factorials can be approximated by the Stirling Formula (in fact, the approximation is much closer to $n!$ than what we state here, but this is sufficient and makes the calculations nicer):
	\[
	0.5 \cdot n! \leq \sqrt{2\pi n} \cdot \left(\frac{n}{e}\right)^n \leq n!
	\] 
	Using the upper bound for $|\Stab(\SpC(B_n))|$ from Corollary \ref{cor:LOG_stabBoundSingletonsConcrete}, and the Stirling Formula for the factorials $n!, f(n)!$, and $(n/2)!$, we get:
	\begin{align*}
		\frac{n!}{|\Stab(\Sp(B_n))| \cdot 2^{kn}} &\geq 
		\frac{1}{4} \cdot \sqrt{\frac{n}{2\pi \cdot f(n) \cdot (n/2)}} \cdot \left(\frac{n}{e}\right)^n \\
		&\cdot \left(\frac{e}{f(n)}\right)^{f(n)} \cdot \left(\frac{e}{n/2}\right)^{(n/2)} \cdot \frac{1}{(d!)^{(n/2)} \cdot 2^{kn+2}}\\
		&\geq \frac{1}{4} \cdot \frac{1}{\sqrt{\pi \cdot f(n)}} \cdot \left(\frac{n}{e} \right)^{(n/2)-f(n)} \cdot \frac{1}{(d!)^{(n/2)} \cdot 2^{kn}}.
	\end{align*}
	For the last inequality, we cancelled some of the factors $\frac{n}{e}$ with the other factors, leaving a factor $> 1$ each time.\\
	The factor $\frac{1}{(d!)^{(n/2)} \cdot 2^{kn}}$ can be written as $\epsilon^{(n/2)}$, for a small enough constant $\epsilon > 0$. For large enough $n$, it holds that $n-2f(n) \geq n/2$, since $f(n) \in o(n)$. Hence, we get in total:
	\begin{align*}
		\lim_{n\to\infty} \frac{n!}{|\Stab(\SpC(B_n))| \cdot 2^{kn}} &\geq \lim_{n\to\infty} \frac{1}{4\sqrt{\pi \cdot f(n)}} \cdot \left(\frac{\epsilon^2 \cdot n}{e} \right)^{(n/2)-f(n)}
	\end{align*}
	Again, observing that $f(n) \in o(n)$, it can be seen that this limit does not exist.
\end{proof}
\noindent
This proves Lemma \ref{lem:LOG_caseLinearSupport} under the assumption that $S_n \in o(n)$. Now we deal with the remaining case.\\
\\
\textbf{Subcase 2: Linear number of singleton parts}\\
As already mentioned, this case requires more effort because we cannot solve it by only counting the number of permutations that stabilise $\Sp(B_n)$. Instead, we have to relate $\Sp(B_n)$ to the set $B_n$. Let $\Sym(B_n)$ be the group of all permutations of the strings in $B_n$. For $\pi \in \Sym_n$ and $\sigma \in \Sym(B_n)$, we say that $\pi$ \emph{realises} or \emph{induces} $\sigma$, if $\pi(b) = \sigma(b)$ for every $b \in B_n$.\\
Each $\pi \in \Stab(B_n) \leq \Sym_n$ that permutes the positions of the strings in $B_n$ induces a unique permutation in $\Sym(B_n)$. Conversely, each $\sigma \in \Sym(B_n)$ can be realised by multiple distinct $\pi \in \Stab(B_n)$, but not by too many: Roughly speaking, we will see that the number of distinct realisations of a $\sigma \in \Sym(B_n)$ is related to $|\StabP(\Sp(B_n))|$ (which is small if there are many singleton parts). Therefore, the aim is to show that only a bounded number of $\sigma \in \Sym(B_n)$ can be realised by a permutation $\pi \in \Stab(B_n)$ at all, and that each such $\sigma$ only has a small number of realisations. In total, this yields a bound on $|\Stab(B_n)|$.\\

Now let us go into the details: For a set $A \subseteq \{0,1\}^n$, we write $\bigsqcap A$ for $\bigsqcap_{a \in A} \Sp(a)$. Note that  $\Sp(a)$ is just the partition of $[n]$ into the positions where there are zeros and ones, respectively, in $a$. It can be seen that $\bigsqcap A$ is a supporting partition for $A$, which is finer or identical to its coarsest supporting partition.\\
First, we show: If we partially specify a $\sigma \in \Sym(B_n)$ by only fixing its behaviour on a subset $A \subseteq B_n$, then any $\sigma \in \Sym(B_n)$ compliant with the specification can only be realised by permutations $\pi \in \Sym_n$ which respect the parts of $\bigsqcap A$ in some way. Roughly speaking, if $\bigsqcap A$ consists of many small parts, then each such $\sigma$ will only have a small number of realisations in $\Sym_n$.\\
The second step is then to choose a suitable set $A_n \subseteq B_n$ such that fixing the behaviour of $\sigma \in \Sym(B_n)$ on $A_n$ indeed only admits a small number of realisations of $\sigma$, and such that $A_n$ is sufficiently small to admit only few possibilities how $\sigma$ can behave on $A_n$ (as $|B_n| \in \Oo(n)$, there are $\approx n^{|A_n|}$ such possibilities).\\

First of all, we show how to bound the number of possible realisations of any $\sigma \in \Sym(B_n)$ if $\sigma$ is fixed on some subset $A \subseteq B_n$. The next lemma is of a similar flavour as Lemma \ref{lem:LOG_centralLemmaForMultipleSets}. 
\begin{lemma}
	\label{lem:LOG_centralLemma}
	Let $B \subseteq \{0,1\}^n, A \subseteq B$. Let an injective mapping $p : A \lra B$ be given. Write $\bigsqcap A := \bigsqcap_{a \in A} \Sp(a)$.\\
	\\
	There is an assignment of positions to parts $Q_p : [n] \lra \bigsqcap A$ with the property that $|Q_p^{-1}(P)| = |P|$ for every $P \in \bigsqcap A$, and such that:\\
	Every $\pi \in \Sym_n$ realising any $\sigma \in \Sym(B)$ with $\sigma^{-1}(a) = p(a)$ for all $a \in A$ satisfies: $\pi(k) \in Q_p(k)$ for all $k \in [n]$ (it may be that such a $\pi$ does not exist). 
\end{lemma}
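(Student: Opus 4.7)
I would approach this via a matrix/column-pattern viewpoint. Regard each $a \in A$ as a row of length $n$, and assemble the $|A| \times n$ matrices $M_A, M_p$ whose row indexed by $a$ is $a$ and $p(a)$, respectively. Write $\alpha_k := (a_k)_{a \in A} \in \{0,1\}^A$ for the $k$-th column of $M_A$. Since $\Sp(a)$ partitions $[n]$ according to the value $a_k$, the intersection $\bigsqcap A$ partitions $[n]$ according to the full column pattern $\alpha_k$; hence each part $P \in \bigsqcap A$ is the set of positions sharing a common pattern $\alpha_P \in \{0,1\}^A$, and $|P|$ equals the number of columns of $M_A$ equal to $\alpha_P$.

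Next, I would rephrase the realisability condition in column language. Using $(\pi(v))_k = v_{\pi^{-1}(k)}$, a permutation $\pi \in \Sym_n$ realises some $\sigma \in \Sym(B)$ with $\sigma^{-1}(a) = p(a)$ for all $a \in A$ if and only if $\pi(p(a)) = a$ for every $a \in A$, equivalently $p(a)_j = a_{\pi(j)}$ for every $a \in A$ and $j \in [n]$. In matrix terms: the $j$-th column of $M_p$ coincides with the $\pi(j)$-th column of $M_A$.

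I would then define $Q_p$ by cases. If some realising $\pi$ exists, declare $Q_p(j)$ to be the unique part $P \in \bigsqcap A$ whose pattern $\alpha_P$ equals $(p(a)_j)_{a \in A}$; existence and uniqueness of $P$ follow from the previous step applied to any realising $\pi$, and that step simultaneously yields $\pi(j) \in Q_p(j)$ for every $j$ and every realising $\pi$. For the fibre-size condition, note that $\pi^{-1}(P) \subseteq Q_p^{-1}(P)$ for each $P$: if $k \in P$ and $j := \pi^{-1}(k)$, then $\pi(j) = k \in P = Q_p(j)$, so $j \in Q_p^{-1}(P)$. Since $\pi$ is a bijection, $|\pi^{-1}(P)| = |P|$, whence $|Q_p^{-1}(P)| \geq |P|$; summing gives $\sum_P |Q_p^{-1}(P)| = n = \sum_P |P|$, forcing equality for each $P$. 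If no realising $\pi$ exists, I would simply choose any function $Q_p : [n] \to \bigsqcap A$ with the prescribed fibre sizes, which exists because $\sum_P |P| = n$; the constraint on $\pi$ is then vacuous.

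The only real obstacle is verifying that, in the interesting case, the natural column-pattern definition of $Q_p$ automatically has the correct fibre sizes. The matrix viewpoint reduces this to a simple counting argument: $\pi$ acts as a bijection that permutes the columns of $M_p$ into the columns of $M_A$, so each pattern class $P$ receives exactly $|P|$ preimages. Everything else is bookkeeping.
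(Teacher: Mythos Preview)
Your argument is correct. The column-pattern description of $\bigsqcap A$ is exactly right, the derivation $p(a)_j = a_{\pi(j)}$ from $\pi(p(a)) = a$ is consistent with the paper's convention $(\pi(v))_k = v_{\pi^{-1}(k)}$, and the fibre-size verification via the counting identity $\sum_P |Q_p^{-1}(P)| = n = \sum_P |P|$ is clean (in fact one even has $Q_p^{-1}(P) = \pi^{-1}(P)$ on the nose once a realising $\pi$ is fixed, since $Q_p(j)$ is precisely the part containing $\pi(j)$). Your explicit treatment of the vacuous case is also fine.

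The paper instead argues by induction on $|A|$: for $|A|=1$ it reads off $Q_p$ from the zero/one positions of the single string, and in the inductive step it intersects the assignment $Q'_p$ obtained for $A' = A \setminus \{a\}$ with the zero/one blocks of the new string $a$. Unwinding that induction yields exactly your closed-form definition $Q_p(j) = \{\,i \in [n] : a_i = p(a)_j \text{ for all } a \in A\,\}$, i.e.\ the part of $\bigsqcap A$ whose column pattern equals the $j$-th column of $M_p$. So the two proofs construct the same object; your matrix viewpoint just names it directly and makes the fibre-size property and the ``no realising $\pi$'' case explicit, whereas the paper's induction leaves those implicit. The inductive presentation has the minor advantage of mirroring the companion Lemma on simultaneous permutations of supporting partitions, but your direct argument is shorter and arguably more transparent.
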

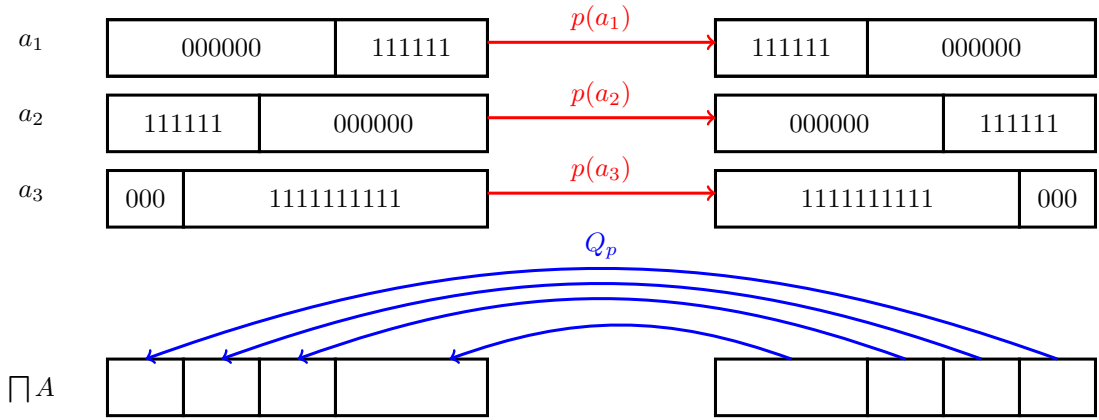
\begin{figure}[H]
	\centering
	\begin{tikzpicture}[line width = 0.4mm]
		\node at (-1,-0.3) {$a_1$};
		\node at (-1,-1.3) {$a_2$};
		\node at (-1,-2.3) {$a_3$};
		\node at (-1,-4.9) {$\bigsqcap A$};	
		\draw (0,0) rectangle (5,-0.75);
		\draw (0,0) rectangle (3,-0.75) node[pos=.5] {$000000$};
		\draw (3,0) rectangle (5,-0.75) node[pos=.5] {$111111$};
		
		\draw (0,-1) rectangle (2,-1.75) node[pos=.5] {$111111$};
		\draw (2,-1) rectangle (5,-1.75) node[pos=.5] {$000000$};
		
		\draw (0,-2) rectangle (1,-2.75) node[pos=.5] {$000$};	
		\draw (1,-2) rectangle (5,-2.75) node[pos=.5] {$1111111111$};	
		
		\draw (0,-3.5-1) rectangle (1,-4.25-1); %node[pos=.5] {$111111$};
		\draw (1,-3.5-1) rectangle (2,-4.25-1); %node[pos=.5] {$111111$};
		\draw (2,-3.5-1) rectangle (3,-4.25-1); %node[pos=.5] {$111111$};
		\draw (3,-3.5-1) rectangle (5,-4.25-1); %node[pos=.5] {$111111$};

		\draw (8,0) rectangle (5+8,-0.75);
		\draw (8,0) rectangle (2+8,-0.75) node[pos=.5] {$111111$};
		\draw (2+8,0) rectangle (5+8,-0.75) node[pos=.5] {$000000$};
		
		\draw (0+8,-1) rectangle (3+8,-1.75) node[pos=.5] {$000000$};
		\draw (3+8,-1) rectangle (5+8,-1.75) node[pos=.5] {$111111$};
		
		\draw (0+8,-2) rectangle (4+8,-2.75) node[pos=.5] {$1111111111$};	
		\draw (4+8,-2) rectangle (5+8,-2.75) node[pos=.5] {$000$};	
		
		\draw (8,-3.5-1) rectangle (2+8,-4.25-1); %node[pos=.5] {$111111$};
		\draw (2+8,-3.5-1) rectangle (3+8,-4.25-1); %node[pos=.5] {$111111$};
		\draw (3+8,-3.5-1) rectangle (4+8,-4.25-1); %node[pos=.5] {$111111$};
		\draw (4+8,-3.5-1) rectangle (5+8,-4.25-1); %node[pos=.5] {$111111$};

		\draw[red,->] (5,-0.3) -- node[above] {$p(a_1)$}  (8,-0.3);
		\draw[red,->] (5,-1.3) -- node[above] {$p(a_2)$}  (8,-1.3);
		\draw[red,->] (5,-2.3) -- node[above] {$p(a_3)$}  (8,-2.3);
		
		\draw (9,-3.5-1) edge[blue,->, bend right=20] node [above] {} (4.5,-3.5-1);
		\draw (10.5,-3.5-1) edge[blue,->, bend right=20] node [above] {} (2.5,-3.5-1);
		\draw (11.5,-3.5-1) edge[blue,->, bend right=20] node [above] {} (1.5,-3.5-1);
		\draw (12.5,-3.5-1) edge[blue,->, bend right=20] node [above] {$Q_p$} (0.5,-3.5-1);
		
	\end{tikzpicture}
	\caption{An example with $A = \{a_1,a_2,a_3\} \subseteq B$ and a mapping $p$ that specifies an image for each string in $A$. Every $\sigma \in \Sym(B)$ that acts as the inverse of $p$ on $A$ can only be realised by a $\pi \in \Sym_n$ that complies with $Q_p$.}
\end{figure}

\begin{proof}
	Let $\sigma \in \Sym(B)$ be such that $\sigma(a) = p(a)$ for each $a \in A$ and assume $\sigma$ is indeed a permutation of the strings in $B$ that can be realised by at least one $\pi \in \Sym_n$. We show the statement via induction on $|A|$.\\
	
	If $|A| = 1$, then it is determined that $\sigma(p(a)) = a$, for the only string $a \in A$ (and $a$ and $p(a)$ must have the same Hamming-weight). In order to map $p(a)$ to $a$, the ones and zeros must be mapped correctly. Indeed, this automatically yields the desired assignment $Q_p$: Let $P_0, P_1 \subseteq [n]$ be the positions where there are $0$s and $1$s, respectively, in $a$. Then $Q_p(k) = P_0$, if and only if $p(a)_k = 0$.\\ 
	
	For the inductive step, assume the statement holds for $|A| \leq m$. Consider now $A$ with $|A| = m+1$. Take any $m$-element subset $A' \subset A$. Any $\pi \in \Sym_n$ realising $\sigma$ must in particular induce the correct preimages on the elements of $A'$, and therefore, $\pi$ has to respect the assignment $Q'_p$ given by the induction hypothesis. That is, for each $k \in [n]$, it holds $\pi(k) \in Q'_p(k)$, where $Q'_p(k)$ is a part of  $\bigsqcap A'$. Let $a \in A \setminus A'$ be the unique string not contained in $A'$. Let $\mathbf{0}(a) := \{k \in [n] \mid a_k = 0 \}$, and $\mathbf{1}(a) := \{k \in [n] \mid a_k = 1 \}$.\\
	
	We define the desired assignment $Q_p : [n] \lra \bigsqcap A$ as follows:
	\begin{align*}
		Q_p(k) &:= \begin{cases}
			Q'_p(k) \cap \mathbf{0}(a) & \text{, if } p(a)_k = 0\\ 
			Q'_p(k) \cap \mathbf{1}(a) & \text{, if } p(a)_k = 1
		\end{cases} 
	\end{align*}
	It is easily seen that the range of $Q_p$ is indeed $\bigsqcap A$, since the range of $Q'_p$ is $\bigsqcap A'$, and we have $\bigsqcap A = \bigsqcap A' \sqcap \Sp(a) =  \bigsqcap A' \sqcap \{ \mathbf{0}(a),\mathbf{1}(a) \}$.\\
	Furthermore, it is clear that any $\pi \in \Sym_n$ realising $\sigma$ must map every $k \in [n]$ to a position within $Q_p(k)$: By the induction hypothesis, $\pi$ must map every $k$ to a position within $Q'_p(k)$, and since $\pi(p(a)) = a$, the zeros and ones in $p(a)$ must be moved to zeros and ones in $a$. 	
\end{proof}

\noindent We will mainly need this lemma for the restriction of the parts in $\bigsqcap A$ to the positions $S_n$ which are singletons in $\Sp(B_n)$. Therefore, we state the following important corollary:

\begin{corollary}
	\label{cor:LOG_centralCorollary}
	Let $A \subseteq B_n$ be arbitrary, and let an injective mapping $p : A \lra B$ be given. Then every $\pi \in \Sym_n$ that realises a $\sigma \in \Sym(B_n)$ with $\sigma^{-1}(a) = p(a)$ for all $a \in A$ satisfies:
	\[
	\pi^{-1}(P \cap S_n) = Q^{-1}_p(P) \cap S_n \text{ for all } P \in \bigsqcap A,
	\]
	where $Q_p : [n] \lra \bigsqcap A$ is the assignment that exists by the preceding lemma.
\end{corollary}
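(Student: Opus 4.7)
The plan is to extract the corollary from Lemma \ref{lem:LOG_centralLemma} in essentially two moves: first upgrade the pointwise constraint $\pi(k) \in Q_p(k)$ into an equality of preimages $\pi^{-1}(P) = Q_p^{-1}(P)$ for every $P \in \bigsqcap A$, and then show that intersecting both sides with $S_n$ is harmless because $\pi$ stabilises $S_n$ setwise.

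First I would note that by Lemma \ref{lem:LOG_centralLemma} any admissible $\pi$ satisfies $\pi(k) \in Q_p(k)$ for all $k \in [n]$. Translating this to preimages, whenever $Q_p(k) = P$ we have $\pi(k) \in P$, so $Q_p^{-1}(P) \subseteq \pi^{-1}(P)$ for every $P \in \bigsqcap A$. Since $\pi$ is a bijection, $|\pi^{-1}(P)| = |P|$, and the preceding lemma guarantees $|Q_p^{-1}(P)| = |P|$, so the inclusion is actually an equality: $\pi^{-1}(P) = Q_p^{-1}(P)$.

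Next, I need $\pi^{-1}(S_n) = S_n$. This is where the only nontrivial ingredient enters: because $\pi$ realises some $\sigma \in \Sym(B_n)$, we have $\pi \in \Stab(B_n)$, and by Lemma \ref{lem:sandwichLemma} this is contained in $\Stab_n(\Sp(B_n))$. Hence $\pi$ permutes the parts of $\Sp(B_n)$ and must send each part to one of the same cardinality; in particular singleton parts are mapped to singleton parts. Since $S_n$ is exactly the union of the singleton parts of $\Sp(B_n)$, it is setwise invariant under $\pi$, giving $\pi^{-1}(S_n) = S_n$.

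Combining these two facts, $\pi^{-1}(P \cap S_n) = \pi^{-1}(P) \cap \pi^{-1}(S_n) = Q_p^{-1}(P) \cap S_n$, which is precisely the statement of the corollary. The only genuinely non-bookkeeping step is the observation that $S_n$ is $\pi$-invariant; once the sandwich lemma is invoked to guarantee that $\pi$ acts on $\Sp(B_n)$ by part-permutations, the preservation of singleton parts (and thus of $S_n$) is automatic, and everything else is a direct consequence of Lemma \ref{lem:LOG_centralLemma}.
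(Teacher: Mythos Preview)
Your proof is correct and follows essentially the same approach as the paper: invoke Lemma~\ref{lem:LOG_centralLemma} to get $\pi^{-1}(P) = Q_p^{-1}(P)$, then use Lemma~\ref{lem:sandwichLemma} to conclude $\pi$ setwise stabilises $S_n$, and combine. You are in fact slightly more explicit than the paper in justifying the first step via the cardinality condition $|Q_p^{-1}(P)| = |P|$, which the paper simply asserts as a consequence of the lemma.
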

\begin{proof}
	Lemma \ref{lem:LOG_centralLemma} says that $\pi^{-1}(P) = Q^{-1}_p(P)$. Since $\pi$ realises a permutation in $\Sym(B_n)$, $\pi \in \Stab(B_n)$. Hence, by Lemma \ref{lem:sandwichLemma}, $\pi \in \Stab(\Sp(B_n))$. This means that $\pi(S_n) = S_n$, as singleton parts can only be mapped to singleton parts. Consequently, it must be the case that $\pi^{-1}(P \cap S_n) = Q^{-1}_p(P) \cap S_n$.  
\end{proof}

We move on to the more involved step of the proof. Recall that $S_n \subseteq [n]$ is the set of positions that are in singleton parts in $\Sp(B_n)$. As the next simple lemma shows, the intersection over all strings in $B_n$ yields a refinement of the coarsest supporting partition $\Sp(B_n)$, so in particular, the positions in $S_n$ are also in singleton parts in $\bigsqcap B$.

\begin{lemma}
	\label{lem:LOG_intersectionSupport}
	Let $B \subseteq \{0,1\}^n$. The partition $\bigsqcap B$ is a supporting partition for $B$.
\end{lemma}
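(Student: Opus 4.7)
The plan is to unwind the definition of a supporting partition and check that any permutation that pointwise stabilises $\bigsqcap B$ automatically fixes every string in $B$, hence certainly stabilises $B$ setwise.

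First I would reduce the goal to the statement $\StabP_n(\bigsqcap B) \leq \Stab_n(B)$, which is exactly what it means for $\bigsqcap B$ to be a supporting partition of $B$. By construction, $\bigsqcap B = \bigsqcap_{a \in B} \Sp(a)$ is a common refinement of the partitions $\Sp(a) = \{\mathbf{0}(a), \mathbf{1}(a)\}$ for $a \in B$. Therefore any $\pi \in \StabP_n(\bigsqcap B)$ also fixes each block of each $\Sp(a)$ setwise, i.e.\ $\pi(\mathbf{0}(a)) = \mathbf{0}(a)$ and $\pi(\mathbf{1}(a)) = \mathbf{1}(a)$ for every $a \in B$.

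Next I would translate this into a statement about the action on strings. Under the convention $(\pi \cdot a)_k = a_{\pi^{-1}(k)}$, a permutation $\pi$ that preserves $\mathbf{0}(a)$ and $\mathbf{1}(a)$ as sets satisfies $a_k = a_{\pi^{-1}(k)}$ for every $k \in [n]$: indeed, $k$ and $\pi^{-1}(k)$ lie in the same block of $\Sp(a)$, which by definition consists of positions carrying the same bit in $a$. Hence $\pi \cdot a = a$, so $\pi$ fixes $a$ pointwise.

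Since this holds for every $a \in B$, we conclude $\pi(B) = B$, giving $\pi \in \Stab_n(B)$. This proves $\StabP_n(\bigsqcap B) \leq \Stab_n(B)$, as required. I do not foresee a serious obstacle; the argument is essentially a definition chase, and the only thing to be careful about is the convention for how $\Sym_n$ acts on $\{0,1\}^n$ (permuting positions), which is the same convention used throughout the paper.
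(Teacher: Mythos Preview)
Your proposal is correct and follows essentially the same approach as the paper: both arguments observe that each string $a\in B$ is constant on every block of $\bigsqcap B$ (equivalently, that $\bigsqcap B$ refines each $\Sp(a)$), so any $\pi\in\StabP_n(\bigsqcap B)$ fixes every $a\in B$ and hence lies in $\Stab_n(B)$. The paper's proof is simply a more terse version of what you wrote.
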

\begin{proof}
	By the definition of the intersection, every string $b \in B_n$ is constant zero or constant one on every part $P \in \bigsqcap B$. Hence, $\StabP(\bigsqcap B) \subseteq \Stab(B_n)$. This is the definition of a supporting partition.	
\end{proof}

With this in mind, we can now select a subset $A_n \subseteq B_n$ such that the partition $\bigsqcap A_n$ is sufficiently fine on the positions in $S_n$ (such a subset must exist because $B_n$ itself fulfils this condition). If we take a suitable definition of what it means to be sufficiently fine on $S_n$, we can even guarantee that $A_n$ is considerably smaller than $B_n$. Later it will become clear that $A_n$ is chosen in such a way that if we partially specify a $\sigma \in \Sym(B_n)$ on $A_n$ (as in Lemma \ref{lem:LOG_centralLemma}), then there are only few permutations in $\Sym_n$ that realise this $\sigma$. The small size of $A_n$ then implies that not too many permutations in $\Sym(B_n)$ can at all be realised by permutations in $\Stab(B_n)$.

\begin{lemma}
	\label{lem:LOG_existenceOfSubsetA}
	There exists a subset $A_n \subseteq B_n$ of size $|A_n| \leq \frac{|S_n|}{2}$ such that for each part $P \in \bigsqcap A_n$, one of the following two statements is true:
	\begin{enumerate}
		\item $|P \cap S_n| \leq 2$; \textbf{or:}
		\item $|P \cap S_n| > 2$ and for every $b \in B_n \setminus A_n$, one of these two conditions holds:
		\begin{itemize}
			\item $b$ is constant on $P \cap S_n$; \textbf{or}
			\item $b[P \cap S_n]$ is \emph{imbalanced} and, for every $P' \in \bigsqcap A_n$ with $P' \neq P$, $|P' \cap S_n| > 2$, $b$ is \emph{constant} on $P' \cap S_n$.
		\end{itemize} 	
	\end{enumerate}
	By $b[P \cap S_n]$ we mean the substring of $b$ at the positions in $P \cap S_n$, and being \emph{imbalanced} means that $b[P \cap S_n]$ contains exactly one $0$ and there is a $1$ at all other positions, or vice versa (exactly one $1$ and the rest $0$). Being \emph{constant} means that the string has only zeros or only ones at the respective positions.
\end{lemma}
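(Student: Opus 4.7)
The plan is to construct $A_n$ by a greedy refinement procedure controlled by an integer potential function. Start with $A_n^0 := \emptyset$, so that $\bigsqcap A_n^0 = \{[n]\}$. At step $i$, if every $b \in B_n \setminus A_n^i$ already satisfies the condition stated in the lemma on every $P \in \bigsqcap A_n^i$, stop and return $A_n := A_n^i$; otherwise pick any violating $b$ and set $A_n^{i+1} := A_n^i \cup \{b\}$. The bookkeeping is done via the potential
\[
\Phi(A) \;:=\; \sum_{\substack{P \in \bigsqcap A \\ |P \cap S_n| > 2}} \bigl( |P \cap S_n| - 2 \bigr),
\]
a non-negative integer with $\Phi(\emptyset) \leq |S_n|$.

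Adding $b$ refines $\bigsqcap A_n^i$ by splitting each part $P$ according to $b$'s bits; in particular, $P \cap S_n$ splits into the positions where $b$ is $0$ and those where $b$ is $1$. A short calculation on $\max(0,\cdot - 2)$ shows that the contribution of a single part $P$ to $\Phi$ is (i) unchanged when $b$ is constant on $P \cap S_n$ or when $|P \cap S_n| \leq 2$; (ii) decreases by exactly $1$ when $b$ is imbalanced on $P \cap S_n$ and $|P \cap S_n| > 2$; (iii) decreases by exactly $2$ when both sides of the split contain at least two positions of $S_n$ (the ``complicated'' case) and $|P \cap S_n| > 2$. In particular contributions to $\Phi$ never grow.

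The heart of the argument is now a two-case analysis of violations. A violating $b$ falls into one of two (overlapping) classes: either (a) $b$ is complicated on some bad part $P$, in which case regime~(iii) applied to $P$ alone already yields a total decrease of at least $2$; or (b) $b$ is non-constant on two distinct bad parts $P \neq P'$. In case~(b) we may assume $b$ is imbalanced on both (if it is complicated on either, case~(a) applies), and then regime~(ii) applied twice gives a total decrease of at least $1+1 = 2$. Since $\Phi$ is a non-negative integer dropping by at least $2$ per iteration, the loop terminates after at most $|S_n|/2$ steps, giving the desired bound $|A_n| \leq |S_n|/2$; at termination every $b \in B_n \setminus A_n$ satisfies the condition on every $P$, and parts with $|P \cap S_n| \leq 2$ fall under case~(1) of the lemma automatically.

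The main obstacle is choosing the potential correctly. The shift by $-2$ inside the sum is exactly what makes an imbalanced split ``pay'' $1$ (rather than $0$), so that two imbalanced splits in case~(b) together pay $2$; without that shift, case~(b) would fail to yield a useful decrease. Conversely, any larger penalty term would overestimate the initial value of $\Phi$ and weaken the bound $|A_n|\leq |S_n|/2$. Apart from this calibration, the remaining verifications are routine splittings of $|P \cap S_n|$ into zeros and ones and repeated use of the identity $|P \cap S_n| = |P_0 \cap S_n| + |P_1 \cap S_n|$.
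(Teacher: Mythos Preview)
Your proposal is correct and follows essentially the same approach as the paper: the same greedy refinement starting from $A_n^0=\emptyset$, the same potential $\Phi(A)=\sum_{P}\max(|P\cap S_n|-2,0)$, and the same two-case analysis showing a drop of at least $2$ per added string (either one ``complicated'' split on a bad part, or two imbalanced splits on distinct bad parts). Your phrasing of the stopping condition as ``no violating $b$ remains'' is just a reformulation of the paper's ``all $\widehat{C}_k$ are empty or $K_i=\emptyset$''.
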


\begin{figure}[H]
	\centering
	\begin{tikzpicture}[line width = 0.5mm]
		\node at (-1,-0.4) {$\bigsqcap A_n$};
		\node at (-1.2,-1.9) {$b \in B_n \setminus A_n$};
		%	\node at (-1.9,-2.3) {$\Sp(A_1) \sqcap \Sp(A_2)$};
		%\draw (0,0) rectangle (5,-0.75);
		%\draw[fill = blue!50] (1,0) rectangle (5,-0.75);
		
		%	\draw[fill = yellow!30] (0,-1) rectangle (3,-1.75);
		%	\draw[fill = red!70] (3,-1) rectangle (5,-1.75);
		\draw[decorate, decoration={brace}] (0,0.2) -- (4.5 ,0.2 ) node[pos=0.5, above] {$S_n$};
		
		\draw[fill=green!20]  (0,0) rectangle (4.5,-0.75);	
		\draw[fill=green!20]  (0,-1.5) rectangle (4.5,-2.25);	
		
		\draw  (0,0) rectangle (0.2,-0.75);	
		\draw  (0.2,0) rectangle (0.4,-0.75);	
		\draw  (0.4,0) rectangle (0.6,-0.75);	
		\draw  (0.6,0) rectangle (0.8,-0.75);	
		\draw  (0.8,0) rectangle (1.0,-0.75);	
		\draw  (1.0,0) rectangle (1.2,-0.75);
		\draw  (1.2,0) rectangle (1.4,-0.75);	
		\draw  (1.4,0) rectangle (1.6,-0.75);	
		\draw  (1.6,0) rectangle (1.8,-0.75);	
		\draw  (1.8,0) rectangle (2.0,-0.75);	
		\draw  (2.0,0) rectangle (2.2,-0.75);	
		\draw  (2.2,0) rectangle (2.4,-0.75);	
		\draw  (2.4,0) rectangle (2.6,-0.75);	
		\draw (2.6,0)  rectangle (3.7,-0.75);
		\draw (3.7,0)  rectangle (4.5,-0.75);
		\draw (4.5,0)  rectangle (5,-0.75);

		\draw (0,-1.5) rectangle (2.6,-2.25) node[pos=.5] {$?$};;		
		\draw (2.6,-1.5) rectangle (3.7,-2.25) node[pos=.5] {$0001$};
		\draw (3.7,-1.5) rectangle (4.5,-2.25) node[pos=.5] {$111$};
		\draw (4.5,-1.5) rectangle (5,-2.25) node[pos=.5] {$?$};
	\end{tikzpicture}
	\caption{The set $A_n \subseteq B_n$ is chosen such that on the large parts of $\bigsqcap A_n$ in $S_n$ (the green area), every $b \in B_n \setminus A_n$ is constant/imbalanced. Elsewhere, $b$ may be arbitrary.}
\end{figure}
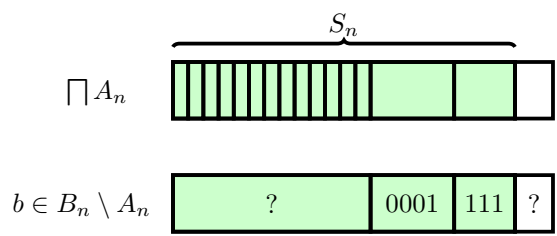

\begin{proof}
	We construct $A_n$ stepwise, starting with $A_n^0 := \emptyset$, and adding one string $a_i \in B_n$ in step $i$.\\
	For step $i+1$ of the construction, assume we have constructed $A_n^i$. For $k \in [n]$, we write $P_i(k)$ for the part of $\bigsqcap A_n^i$ that $k$ is in. Now we let
	\[
	K_i := \{k \in S_n \mid |P_i(k) \cap S_n| > 2 \}.
	\]
	This is the set of positions whose parts need to be refined more. If $K_i = \emptyset$, then the construction is finished because all parts of $\bigsqcap A_n^i$ satisfy condition 1 of the lemma. So assume $K_i \neq \emptyset$.\\
	\\
	By Lemma \ref{lem:LOG_intersectionSupport}, $\bigsqcap B_n$ is a supporting partition for $B_n$ and therefore at most as coarse as $\Sp(B_n)$. Hence, all positions in $S_n$ are in singleton parts of $\bigsqcap B_n$.\\
	We conclude that for all $k \in K_i$, there must be a string $b \in B_n \setminus A_n^i$ that can be added to $A_n^i$ in order to make $P_i(k) \cap S_n$ smaller when it is intersected with $\Sp(b)$. In fact, there may be several such strings $b$ that we could choose to add in this step of the construction. So let 
	\[
	C_k := \{ b \in B_n \setminus A_n^i \mid b \text{ is non-constant on } P_i(k) \cap S_n \}
	\] 
	be the non-empty set of such candidate strings. We restrict our candidate set further:
	\begin{align*}
		\widehat{C}_k := \{ b \in C_k &\mid \text{there are two distinct parts } P, P' \in \bigsqcap A_n^i\\ &\text{ s.t. } b \text{ is non-constant on } P \cap S_n \text{ and } P' \cap S_n, \text{ and} \\
		& |P \cap S_n| > 2 \text{ and } |P' \cap S_n| > 2\}\\
		\cup \{b \in C_k &\mid b[P_i(k) \cap S_n] \text{ is not imbalanced}\}.
	\end{align*}
	We pick our next string $a_{i+1}$ that is added in this step of the construction from one of the sets $\widehat{C}_k$, where $k$ ranges over all positions in $K_i$. If $\widehat{C}_k = \emptyset$ for all these $k$, then $A_n^i$ is already the desired set $A_n$ because it satisfies the conditions of the lemma.\\
	Otherwise, we choose $a_{i+1}$ arbitrarily from one of the $\widehat{C}_k$ and set $A^{i+1}_n := A^i_n \cup \{a_{i+1}\}$. Then we proceed with the construction until $K_i = \emptyset$ or all $\widehat{C}_k$ are empty. In both cases, the constructed set is as required by the lemma.\\ 
	\\
	\\
	It remains to show: $|A_n| \leq \frac{|S_n|}{2}$, i.e.\ that the construction process consists of at most $\frac{|S_n|}{2}$ steps. We do this by defining a potential function $\Phi$ that associates with any partition $\Pp$ of $[n]$ a natural number $\leq n$ that roughly says how many further refinement steps of $\Pp$ are at most possible. Concretely:
	\[
	\Phi(\Pp) := \sum_{P \in \Pp} \max\{(|P \cap S_n| - 2),0\}.
	\]
	If $\Pp$ contains as its only part the whole set $[n]$, then $\Phi(\Pp) = |S_n|-2$. Now observe that a necessary condition for adding a new string $a_{i+1}$ to $A_n$ is the existence of a part $P$ with $|P \cap S_n| > 2$ in the current partition $\Pp = \bigsqcap A^i_n$. This is the case if and only if $\Phi(\Pp) > 0$. Therefore, all that remains to be shown is:
	\begin{equation}
		\Phi\left(\bigsqcap A^{i}_n\right) - \Phi\left(\bigsqcap A^{i+1}_n\right) \geq 2 \tag{$\star$}
	\end{equation}
	for all construction steps $i$.\\
	To this end, consider step $i+1$: We add $a_{i+1}$ to $A^i_n$. There are two (not necessarily disjoint) cases:\\
	The first case is: There are two distinct parts $P,P' \in \bigsqcap A_n^i$ such that $a_{i+1}$ is non-constant on $P \cap S_n$ and $P' \cap S_n$, and  $|P \cap S_n| > 2$ and $|P' \cap S_n| > 2$. In this case, both $P \cap S_n$ and $P' \cap S_n$ will be split when $a_{i+1}$ is added, and each of these splits reduces the potential $\Phi$ by at least one, so ($\star$) holds.\\
	In the second case, there is a part $P$ such that $a_{i+1}[P \cap S_n]$ is not imbalanced and not constant. Therefore, $P \cap S_n$ will be split into two parts $P_1,P_2$ with $|P_1 \cap S_n| \geq 2$, $|P_2 \cap S_n| \geq 2$. This means that the contribution of $P \cap S_n$ to $\Phi(\bigsqcap A_n^i)$, namely $p:= |P \cap S_n| - 2$, is reduced to $|P_1 \cap S_n| - 2 + |P_2 \cap S_n| - 2 = p-2$ (here it is important that the new parts both have size at least two). So also in this case, ($\star$) holds.
\end{proof}

\noindent Now we have almost all pieces that we need to prove a good bound on $|\Stab(B_n)|$: Given that $|B_n| \leq cn$ for some constant $c$, there are at most $(cn)^{|S_n|/2}$ possibilities to specify a $\sigma \in \Sym(B_n)$ on the elements of the set $A_n$ from the previous lemma. It remains to show that each such $\sigma$ cannot be realised by too many permutations in $\Sym_n$. If for all parts $P \in \bigsqcap A_n$, we had $|P \cap S_n| \leq 2$, then Corollary \ref{cor:LOG_centralCorollary} would already imply our desired bound. However, there may also be parts $P \in \bigsqcap A_n$ where $|P \cap S_n|$ is unbounded. The next lemma shows how the properties of these parts that are stated in Lemma \ref{lem:LOG_existenceOfSubsetA} help us to deal with them. Essentially, it says that a permutation in $\Stab(B_n)$ is already fully specified if we only know how it moves the parts $P \in \bigsqcap A_n$ with $|P \cap S_n| \leq 2$. In other words, we can count the number of permutations in $\Stab(B_n)$ by just looking at their possible behaviour on the parts where $|P \cap S_n| \leq 2$.

%TODO: noch ein Korollar machen, das lemma 9 und 16 kombiniert, und dann in den folgenden zwei Lemmas darauf verweisen!
\begin{lemma}
	\label{lem:LOG_pi}
	Let $A_n \subseteq B_n$ be the subset that exists by Lemma \ref{lem:LOG_existenceOfSubsetA}, and let $p : A_n \lra B_n$ be an injective function. Let 
	\[
	\Gamma_p := \{ \pi \in \Stab(B_n) \mid \pi(p(a)) = a \text{ for all } a \in A_n \}.
	\]
	Further, let 
	\[
	P_{>2} := \{k \in S_n \mid |P(k) \cap S_n| > 2 \text{, where } P(k) \in \bigsqcap A_n \text{ is the part that } k \text{ is in}\}.
	\]
	Then for any $\pi, \pi' \in \Gamma_p$ such that $\pi^{-1}|_{([n] \setminus P_{>2})} = \pi'^{-1}|_{([n] \setminus P_{>2})}$, it also holds $\pi^{-1}|_{P_{>2}} = \pi'^{-1}|_{P_{>2}}$.
\end{lemma}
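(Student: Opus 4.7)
The plan is to set $\rho := \pi' \circ \pi^{-1}$ and prove that $\rho$ is the identity on $P_{>2}$, which is equivalent to the desired conclusion. Three elementary properties of $\rho$ follow directly: (i) $\rho \in \Stab(B_n)$, since $\Gamma_p \subseteq \Stab(B_n)$ and the latter is a group; (ii) $\rho(a) = a$ as a string for every $a \in A_n$, because the condition $\pi(p(a)) = a$ yields $\pi^{-1}(a) = p(a)$ and hence $\rho(a) = \pi'(p(a)) = a$; and (iii) $\rho$ fixes $[n] \setminus P_{>2}$ pointwise, which is immediate from the hypothesis on $\pi^{-1}$ and $\pi'^{-1}$. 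From (ii), $\rho$ preserves every part of $\Sp(a)$ setwise for each $a \in A_n$, so $\rho \in \StabP(\bigsqcap A_n)$; combining this with (i) and Lemma~\ref{lem:sandwichLemma} gives $\rho \in \Stab(\Sp(B_n))$, whence $\rho(S_n) = S_n$. Together with (iii), $\rho$ is the identity everywhere outside $P_{>2}$ and restricts to a permutation of $P \cap S_n$ for every large part $P \in \bigsqcap A_n$.

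Suppose for contradiction that $\rho$ moves some $k \in P \cap S_n$ to a distinct $\ell \in P \cap S_n$ for some large part $P$. Since $\bigsqcap B_n$ is a supporting partition of $B_n$ by Lemma~\ref{lem:LOG_intersectionSupport} and therefore refines $\Sp(B_n)$, every element of $S_n$ is a singleton of $\bigsqcap B_n$, so some $b \in B_n$ distinguishes $k$ from $\ell$. Each $a \in A_n$ is constant on every part of $\bigsqcap A_n$, so $b \in B_n \setminus A_n$, and Lemma~\ref{lem:LOG_existenceOfSubsetA} forces $b$ to be imbalanced on $P \cap S_n$ and constant on every other large $P' \cap S_n$. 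In particular $\rho(b) \neq b$, since otherwise $b_k = b_{\rho^{-1}(\ell)} = b_\ell$, contradicting the choice of $b$.

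The main obstacle is the final combinatorial step. I would compute $\rho(b)$ directly: it agrees with $b$ on every position outside $P \cap S_n$ (on $[n] \setminus P_{>2}$ by (iii), and on each other large $P' \cap S_n$ because $b$ is constant there while $\rho$ only permutes that set setwise), whereas on $P \cap S_n$ the unique exceptional bit of $b$ is shifted from its special position $k^*$ to $\rho(k^*)$. Hence $\rho(b) \in B_n \setminus A_n$ is a specific ``shifted'' variant of $b$ that must also lie in $B_n$. Running $b$ over all pairs of positions in $P \cap S_n$ that are separated in $\bigsqcap B_n$ and combining the resulting shift constraints, the hardest part is to show that this web of consistency requirements, together with the exclusion by Lemma~\ref{lem:LOG_existenceOfSubsetA} of any refining strings in $B_n \setminus A_n$ beyond the imbalanced form, leaves no room for a non-trivial $\rho$ on $P \cap S_n$, thereby forcing $\rho(k) = k$ and completing the contradiction.
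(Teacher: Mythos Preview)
Your setup is exactly right and coincides with the paper's: form $\rho = \pi' \circ \pi^{-1}$, observe $\rho \in \Stab(B_n)$, $\rho$ fixes $[n]\setminus P_{>2}$ pointwise, $\rho$ stabilises each $P\cap S_n$ setwise, and for any $b \in B_n\setminus A_n$ that is non-constant on a large $P\cap S_n$ the string $\rho(b)$ agrees with $b$ outside $P\cap S_n$ while the unique exceptional bit is moved from its position $k^*$ to $\rho(k^*)$. The paper performs precisely this computation.

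The gap is your final paragraph. You propose to accumulate ``shift constraints'' over many choices of $b$ and argue that they force $\rho$ to be the identity on $P\cap S_n$, but you do not actually carry this out, and it is not clear how one would: knowing that $\rho$ permutes the set of exceptional positions of the imbalanced strings in $B_n$ does not by itself pin $\rho$ down to the identity. The paper closes the argument with a different and much shorter contradiction. Pick any $y\in P_{>2}$ with $y':=\rho(y)\neq y$ and show that the \emph{transposition} $\tau=(y\ y')$ already lies in $\Stab(B_n)$. This drops straight out of your own computation: if $b_y=b_{y'}$ then $\tau(b)=b\in B_n$; if $b_y\neq b_{y'}$ then $b$ is imbalanced on $\widehat P(y)=P(y)\cap S_n$ with its unique bit at $y$ or at $y'$, and in the first case your calculation gives $\rho(b)=\tau(b)\in B_n$, while in the second the same calculation applied to $\rho^{-1}$ (which satisfies $\rho^{-1}(y')=y$) gives $\rho^{-1}(b)=\tau(b)\in B_n$. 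But $\tau\in\Stab(B_n)$ is impossible, because $y,y'\in S_n$ are singleton parts of $\Sp(B_n)$ and merging $\{y\}$ with $\{y'\}$ would yield a strictly coarser supporting partition. That is the missing idea: do not try to prove $\rho=\mathrm{id}$ directly; instead extract a single transposition in $\Stab(B_n)$ and contradict the coarseness of $\Sp(B_n)$.
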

%TODO: Problem: P(x) ist nicht unbedingt identisch zu P(k)!
\begin{proof}
	For a contradiction, we assume that there exist $\pi, \pi' \in \Gamma_p$ such that $\pi^{-1}|_{([n] \setminus P_{>2})} = \pi'^{-1}|_{([n] \setminus P_{>2})}$, but  $\pi^{-1}|_{P_{>2}} \neq \pi'^{-1}|_{P_{>2}}$. Then there is $x \in [n]$ such that $\pi(x) \in P_{>2}$, and $\pi'(x) \neq \pi(x)$ (i.e.\ $\pi(x)$ is the point where $\pi^{-1}$ and $\pi'^{-1}$ differ). Let $y := \pi(x), y' := \pi'(x)$. Let $P(y) \in \bigsqcap A_n$ be the part that $y$ is in, and let $\widehat{P}(y) := P(y) \cap S_n$. We know that $y' \in P(y)$, too, because $\pi, \pi' \in \Gamma_p$, so this follows from Lemma \ref{lem:LOG_centralLemma}. As $y \in P_{>2}$, in particular, $y \in S_n$. Hence, also $x,y' \in S_n$ because $\pi, \pi' \in \Stab(B_n)$, and by Lemma \ref{lem:sandwichLemma}, singleton parts must be mapped to singleton parts. We conclude that we even have $y' \in \widehat{P}(y)$.\\
	Now, our goal is to show that the transposition $\tau := (y \ y')$ is contained in $\Stab(B_n)$. This is a contradiction because in $\Sp(B_n)$, $y, y'$ are both in singleton parts. However, if $(y \ y') \in \Stab(B_n)$, then there is a coarser supporting partition in which $\{y, y'\}$ forms one part. This is a contradiction as $\Sp(B_n)$ is the coarsest possible support.\\
	In order to show $\tau \in \Stab(B_n)$, we only need to deal with those strings in $B_n$ which are not constant on the positions $\{y, y'\}$. More precisely, we have to show that every $b \in B_n$ with $b_y \neq b_{y'}$ has a "swapping partner" $b' \in B_n$ where $b'_y = b_{y'}$ and vice versa, and $b'_i = b_i$ for all other $i$.\\
	So take any $b \in B_n$ such that w.l.o.g.\ $b_y = 0, b_{y'} = 1$. Note that $b \notin A_n$, as every string in $A_n$ is constant on $P(y)$ (otherwise, $P(y)$ would not be a single part in $\bigsqcap A_n$). Furthermore, $|\widehat{P}(y)| > 2$, since $y \in P_{>2}$. Therefore, Lemma \ref{lem:LOG_existenceOfSubsetA} implies that the substring $b[\widehat{P}(y)]$ is imbalanced and $b$ is constant on every $P' \cap S_n$, for all $P' \in \bigsqcap A_n$ with $P' \neq P(y)$, $|P' \cap S_n| > 2$. W.l.o.g.\ let the imbalance of $b[\widehat{P}(y)]$ be such that $b_i = 1$ for every position $i \in \widehat{P}(y), i \neq y$. We claim that $b' := (\pi' \circ \pi^{-1})(b) \in B_n$ is the desired swapping partner of $b$, i.e.\ $\tau(b) = b'$ and vice versa. The strings $b$ and $(\pi' \circ \pi^{-1})(b)$ look somewhat like this:\\
	\begin{figure}[H]
		\centering
		\begin{tikzpicture}[line width = 0.5mm]
			\node at (-1,-0.4) {$b$};
			\node at (-1.2,-2.5) {$(\pi' \circ \pi^{-1})(b)$};
			%	\node at (-1.9,-2.3) {$\Sp(A_1) \sqcap \Sp(A_2)$};
			%\draw (0,0) rectangle (5,-0.75);
			%\draw[fill = blue!50] (1,0) rectangle (5,-0.75);
			
			%	\draw[fill = yellow!30] (0,-1) rectangle (3,-1.75);
			%	\draw[fill = red!70] (3,-1) rectangle (5,-1.75);
			\draw[decorate, decoration={brace}] (0,0.2) -- (4.5 ,0.2 ) node[pos=0.5, above] {$S_n$};
			
			\draw[fill=green!20]  (0,0) rectangle (4.5,-0.75);	
			\draw[fill=green!20]  (0,-2) rectangle (4.5,-2.75);	
			
			\draw  (0,0) rectangle (1,-0.75);	
			\draw  (1,0) rectangle (4,-0.75) node[pos=.5] {$0111111111111$};	
			\draw  (4,0) rectangle (5,-0.75);	
			
			\draw[red,->] (1.3,-1.2) -- (1.3,-0.5);
			\draw[red,->] (1.3,-1.6) -- (1.3,-2.2);
			\node at (1.3,-1.4) {$y$}; 
			\draw[red,->] (3.7,-1.2) -- (3.7,-0.5);			
			\draw[red,->] (3.7,-1.6) -- (3.7,-2.2);			
			\node at (3.7,-1.4) {$y'$}; 
			\node at (2.5,-1.3) {$\widehat{P}(y)$}; 
			
			\draw (1.3,-0.5) edge[blue,->, bend right=20] node [above, pos=0.6, xshift=-0.1cm] {$\pi^{-1}$} (0.5,-1.25);
			\node at (0.5,-1.4) {$x$};
			\draw (0.5,-1.5) edge[blue,->, bend right=90] node [below] {$\pi'$} (3.7,-2.6);
			
			\draw  (0,-2) rectangle (1,-2.75);	
			\draw  (1,-2) rectangle (4,-2.75) node[pos=.5] {$1111111111110$};	
			\draw  (4,-2) rectangle (5,-2.75);	
		\end{tikzpicture}
	\end{figure}
	To see that $\tau(b) = (\pi' \circ \pi^{-1})(b)$, consider firstly $\pi^{-1}(b) \in B_n$. Obviously, $(\pi^{-1}(b))_x = 0$. The string $(\pi' \circ \pi^{-1})(b)$ is also in $B_n$ and we have $(\pi' \circ \pi^{-1})(b)_{y'} = 0$. Moreover, the substring $\pi^{-1}(b)[\pi^{-1}(\widehat{P}(y))]$ is imbalanced just like $b[\widehat{P}(y)]$, so $(\pi^{-1}(b))_j = 1$ for all $j \in \pi^{-1}(\widehat{P}(y)) \setminus \{x\}$. As a consequence of Corollary \ref{cor:LOG_centralCorollary}, we have $\pi^{-1}(\widehat{P}(y)) = \pi'^{-1}(\widehat{P}(y))$. Therefore, the substring $(\pi' \circ \pi^{-1})(b)[\widehat{P}(y)]$ is also imbalanced and has a $1$ at each position except $y'$.\\
	This shows that $(\tau(b))[\widehat{P}(y)] = b'[\widehat{P}(y)]$. It remains to show that $b_i = b'_i$ for all $i \in [n] \setminus \widehat{P}(y)$.\\
	We have $(\pi' \circ \pi^{-1})(i) = i$ for $i \in [n] \setminus P_{>2}$, because $\pi^{-1}|_{([n] \setminus P_{>2})} = \pi'^{-1}|_{([n] \setminus P_{>2})}$, so $b_i = b'_i$ for $i \in [n] \setminus P_{>2}$.\\
	For $i \in P_{>2} \setminus \widehat{P}(y)$, let $\widehat{P}(i)$ be the part of $\bigsqcap A_n$ that $i$ is in, intersected with $S_n$. As already said, we know from Lemma \ref{lem:LOG_existenceOfSubsetA} that $b$ is constant on $\widehat{P}(i)$. Analogously to what we argued already for $\widehat{P}(y)$, we get that $(\pi' \circ \pi^{-1})(\widehat{P}(i)) = \widehat{P}(i)$, so also for $i \in P_{>2} \setminus \widehat{P}(y)$, we have $b_i = b'_i$.\\
	In total, this shows that indeed, $b' = \tau(b)$, and since $b' \in B_n$, we have $\tau \in \Stab(B_n)$. This is a contradiction and finishes the proof of the lemma.	
\end{proof}

With this, we can finally compute our upper bound on $|\Stab(B_n)|$.
\begin{lemma}
	\label{lem:LOG_stabilizerBound}	
	Let $c$ be a constant such that $|B_n| \leq c \cdot n$ (for large enough $n$). Then, for large enough $n$, it holds:
	\[
	|\Stab(B_n)| \leq (2cn)^{|S_n|/2} \cdot (n-|S_n|)!
	\]	
\end{lemma}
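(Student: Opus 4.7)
The plan is to bound $|\Stab(B_n)|$ via a two-level counting argument. For any $\pi \in \Stab(B_n)$, the action of $\pi$ on the strings in $B_n$ yields an induced permutation $\sigma_\pi \in \Sym(B_n)$, and restricting $\sigma_\pi^{-1}$ to $A_n$ gives an injective ``partial specification'' $p_\pi : A_n \to B_n$ defined by $p_\pi(a) := \sigma_\pi^{-1}(a)$. I would first upper-bound the number of distinct $p_\pi$ that can occur, and then, for each such $p$, upper-bound $|\Gamma_p|$ using Lemma \ref{lem:LOG_pi} together with Corollary \ref{cor:LOG_centralCorollary}. The desired bound should appear directly as the product of these two factors.

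For the number of partial specifications: since $p_\pi$ is an injection $A_n \to B_n$, there are at most $|B_n|^{|A_n|} \leq (cn)^{|S_n|/2}$ possibilities, where the second inequality uses $|A_n| \leq |S_n|/2$ from Lemma \ref{lem:LOG_existenceOfSubsetA} and $|B_n| \leq cn$. For the number of $\pi \in \Gamma_p$ with a fixed $p$: Lemma \ref{lem:LOG_pi} tells us that such a $\pi$ is fully determined by $\pi^{-1}$ restricted to $[n] \setminus P_{>2}$, which I would split as $([n] \setminus S_n) \sqcup (S_n \setminus P_{>2})$ and handle the two pieces separately. On $[n] \setminus S_n$: every $\pi \in \Stab(B_n)$ lies in $\Stab(\Sp(B_n))$ by Lemma \ref{lem:sandwichLemma}, and a setwise stabilizer of $\Sp(B_n)$ must map singleton parts to singleton parts, so $\pi$ permutes $[n] \setminus S_n$ among itself, giving at most $(n - |S_n|)!$ choices. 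On $S_n \setminus P_{>2}$: Corollary \ref{cor:LOG_centralCorollary} forces $\pi^{-1}(P \cap S_n) = Q_p^{-1}(P) \cap S_n$ for every $P \in \bigsqcap A_n$, and by the defining property of $P_{>2}$ the relevant parts satisfy $|P \cap S_n| \leq 2$, so each such part admits at most $2$ choices; since the positions in $S_n \setminus P_{>2}$ are partitioned into at most $|S_n|/2$ such groups, the total is at most $2^{|S_n|/2}$. Multiplying gives the claimed bound $(cn)^{|S_n|/2} \cdot 2^{|S_n|/2} \cdot (n - |S_n|)! = (2cn)^{|S_n|/2} \cdot (n - |S_n|)!$.

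The main conceptual obstacle is simply justifying that Lemma \ref{lem:LOG_pi} really does reduce the counting on $P_{>2}$ to counting on its complement, so no factorial contribution from $P_{>2}$ survives; this is exactly where the elaborate construction of $A_n$ in Lemma \ref{lem:LOG_existenceOfSubsetA} (guaranteeing imbalance/constancy on large parts) pays off. The remaining steps are essentially a careful bookkeeping exercise, and the exponent $|S_n|/2$ in the final bound matches the size constraint on $A_n$ precisely.
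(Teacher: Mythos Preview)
Your proposal is correct and follows essentially the same approach as the paper: both arguments count the number of injections $p:A_n\to B_n$ as $(cn)^{|S_n|/2}$, then bound $|\Gamma_p|$ by using Lemma~\ref{lem:LOG_pi} to ignore $P_{>2}$, Corollary~\ref{cor:LOG_centralCorollary} to get at most $2^{|S_n|/2}$ choices on $S_n\setminus P_{>2}$, and Lemma~\ref{lem:sandwichLemma} to get $(n-|S_n|)!$ choices on $[n]\setminus S_n$. The only cosmetic difference is that you start from $\pi$ and extract $p_\pi$, whereas the paper fixes $p$ first and observes at the end that every $\pi\in\Stab(B_n)$ lies in some $\Gamma_p$; also note (a minor imprecision shared with the paper) that there can be more than $|S_n|/2$ parts $P$ with $|P\cap S_n|\le 2$, but those with $|P\cap S_n|=1$ contribute a factor of $1$, so the bound $2^{|S_n|/2}$ is still valid.
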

\begin{proof}
	Let $A_n \subseteq B_n$ be the subset of $B_n$ whose existence is stated in Lemma \ref{lem:LOG_existenceOfSubsetA}.
	Fix any injective function $p : A_n \lra B_n$. As in the previous lemma, let
	\[
	\Gamma_p := \{ \pi \in \Stab(B_n) \mid \pi(p(a)) = a \text{ for all } a \in A_n \}.
	\]
	We bound $|\Gamma_p|$ by counting the number of possible $\pi \in \Gamma_p$. We know by Lemma \ref{lem:LOG_pi} that we only have to count the number of possibilities to choose the preimages of the elements in $[n] \setminus P_{>2}$, where again,
	\[
	P_{>2} := \{k \in S_n \mid |P(k) \cap S_n| > 2 \text{, where } P(k) \in \bigsqcap A_n \text{ is the part that } k \text{ is in}\}.
	\]
	For every part $P \in \bigsqcap A_n$ with $|P \cap S_n| \leq 2$, we know by Corollary \ref{cor:LOG_centralCorollary} that $\pi^{-1}(P \cap S_n) \subseteq S_n$ is the same fixed set of size $\leq 2$ for all $\pi \in \Gamma_p$, so we only have two options how $\pi^{-1}$ can behave on $P \cap S_n$. The number of such parts $P$ is at most $|S_n|/2$.\\
	\\
	For $i \in [n] \setminus S_n$, we can only say that $\pi^{-1}(i) \notin S_n$ (by Lemma \ref{lem:sandwichLemma}). Hence, every $\pi \in \Gamma_p$ can in principle permute the set $[n] \setminus S_n$ arbitrarily. In total, we conclude:
	\[
	|\Gamma_p| \leq 2^{(|S_n|/2)} \cdot (n-|S_n|)!
	\]
	This is for a fixed function $p$. The number of possible choices for $p$ is bounded by $(cn)^{|S_n|/2}$, since $|A_n| \leq |S_n|/2$ (Lemma \ref{lem:LOG_existenceOfSubsetA}) and we are assuming $|B_n| \leq cn$.\\
	Every $\pi \in \Stab(B_n)$ must occur in at least one of the sets $\Gamma_p$ for some choice of $p$, so indeed, $(2cn)^{|S_n|/2} \cdot (n-|S_n|)!$ is an upper bound for $|\Stab(B_n)|$. 
\end{proof}
As in the case where $|S_n|$ grows sublinearly in $n$, we conclude this part of the proof by computing the orbit size of $B_n$ based on the stabiliser bound and comparing this to any polynomial in $2^n$.

\begin{lemma}
	\label{lem:orbitCase12}
	Let $c$ be a constant such that $|B_n| \leq c \cdot n$, and $\delta_s > 0$ be a constant such that  $|S_n| \geq \delta_s \cdot n$ (for large enough $n$).
	Then for any $k \in \bbN$, the limit 
	\[
	\lim_{n\to\infty} \frac{|\Orbit(B_n)|}{2^{kn}} = \lim_{n\to\infty} \frac{n!}{|\Stab(B_n)| \cdot 2^{kn}} 
	\]
	does not exist. That is to say, the orbit of $B_n$ w.r.t.\ the action of $\Sym_n$ on $\{0,1\}^n$ grows super-polynomially in $2^n$.
\end{lemma}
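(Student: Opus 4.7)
The plan is to combine the Orbit-Stabiliser Theorem with the stabiliser bound of \autoref{lem:LOG_stabilizerBound} and verify by a Stirling-style estimate that the resulting lower bound on $|\Orbit(B_n)|$ beats every polynomial in $2^n$. Writing $s := |S_n|$ and $t := n-s$, Orbit-Stabiliser together with \autoref{lem:LOG_stabilizerBound} yields
\[
|\Orbit(B_n)| \;=\; \frac{n!}{|\Stab(B_n)|} \;\geq\; \frac{n!}{(2cn)^{s/2} \cdot t!},
\]
so the whole task reduces to asymptotically analysing this single expression, using only the hypothesis $s \geq \delta_s n$ for a fixed constant $\delta_s > 0$.

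The first step is to bound $n!/t!$ from below. Using $n!/t! = \binom{n}{s}\cdot s!$ together with the elementary bound $\binom{n}{s} \geq (n/s)^s$ and Stirling's $s! \geq (s/e)^s \cdot \Omega(1)$ (absorbing the $\sqrt{2\pi s}$ into a polynomial factor), one obtains $n!/t! \geq (n/e)^s$ up to a $\mathrm{poly}(n)$ correction. Substituting, and simplifying via $e^{s}(2c)^{s/2} = (2ce^2)^{s/2}$, gives, up to polynomial factors,
\[
\frac{n!}{(2cn)^{s/2}\cdot t!} \;\geq\; \frac{(n/e)^s}{(2cn)^{s/2}} \;=\; \left(\frac{n}{2ce^2}\right)^{\!s/2}.
\]

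The second step is to compare this lower bound with the target $2^{kn}$. For $n$ large enough that $n > 4ce^2$, and using $s \geq \delta_s n$, the right-hand side dominates $(n/(2ce^2))^{\delta_s n/2}$. Taking logarithms base $2$, the exponent is at least $(\delta_s n/2)\log_2(n/(2ce^2))$, which is $\Theta(n\log n)$ and therefore outstrips $kn$ for every fixed $k\in\bbN$. It follows that
\[
\lim_{n\to\infty}\frac{n!}{|\Stab(B_n)|\cdot 2^{kn}} \;=\; \infty,
\]
so the limit does not exist, which is the claim.

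There is no substantive combinatorial obstacle in this step: all of the hard work has already been carried out in \autoref{lem:LOG_stabilizerBound}, where the economical subset $A_n \subseteq B_n$ and the analysis of $\bigsqcap A_n$ on $S_n$ produced a stabiliser bound whose "large factorial" factor $(n-|S_n|)!$ is too small to compete with $n!$ once $|S_n|$ is linear in $n$. The only real care needed here is in tracking the Stirling correction factors and verifying that they are dwarfed by the dominant $n^{\Theta(n)}$ growth coming from $s = \Theta(n)$; this is straightforward because those corrections are at worst $\mathrm{poly}(n)$ and thus negligible against $(n/K)^{\Theta(n)}$ for any constant $K$.
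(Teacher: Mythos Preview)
Your argument is correct and follows essentially the same route as the paper: invoke \autoref{lem:LOG_stabilizerBound}, apply Orbit--Stabiliser, and use a Stirling-type estimate to see that the resulting lower bound grows like $n^{\Theta(n)}$, which dominates $2^{kn}$ for every fixed $k$. The only notable difference is that you keep $s=|S_n|$ symbolic and defer the hypothesis $s\geq \delta_s n$ until after the simplification $(n/(2ce^2))^{s/2}$, whereas the paper substitutes $s=\delta_s n$ into the stabiliser bound at the outset and then expands both $n!$ and $((1-\delta_s)n)!$ via Stirling; your ordering sidesteps the monotonicity check the paper needs for that early substitution and yields a slightly shorter computation, but the underlying idea is identical.
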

\begin{proof}
	Plugging in $|S_n| \geq \delta_s \cdot n$ into the stabiliser-bound from Lemma \ref{lem:LOG_stabilizerBound} yields:
	\[
	|\Stab(B_n)| \leq (2cn)^{(\delta_s n)/2} \cdot ((1-\delta_s)n)!
	\]
	This is true because larger values for $|S_n|$ only make the expression $2^{|S_n|/2} \cdot (n-|S_n|)!$ smaller.\\
	Replacing factorials with the Stirling Formula, the fraction that we are taking the limit of becomes at least:
	\begin{align*}
		&\frac{1}{2\sqrt{1-\delta_s}} \cdot \left(\frac{1}{(1-\delta_s)}\right)^{(1-\delta_s)n} \cdot \left(\frac{n}{e}\right)^{\delta_s n} \cdot \frac{1}{(2cn)^{(\delta_s n)/2}} \cdot \frac{1}{2^{kn}}\\
		= &\frac{1}{2\sqrt{1-\delta_s}} \cdot \left(\frac{1}{2^k(1-\delta_s)}\right)^{(1-\delta_s)n} \cdot \left(\frac{n}{2c \cdot 4^k \cdot e^2}\right)^{(\delta_s n)/2}\\
		= &\frac{1}{2\sqrt{1-\delta_s}} \cdot \left(\frac{n}{(2^k(1-\delta_s))^{2(1-\delta_s)/\delta_s } \cdot 2c \cdot 4^k \cdot e^2}\right)^{(\delta_s n)/2}
	\end{align*}
	The denominator is constant, so the limit does not exist.	
\end{proof}
\noindent
This lemma together with Lemma \ref{lem:LOG_orbitCase11} proves Lemma \ref{lem:LOG_caseLinearSupport}.\\
Theorem \ref{thm:mainTechnical} now follows directly from Lemmas \ref{lem:resultCaseSublinear} and \ref{lem:LOG_caseLinearSupport}, since these two lemmas cover all cases.

\section{Concluding remarks and future research}
\label{sec:conclusion}
A question that remains open is what exactly is the threshold of ``fineness'' of a preorder where the orbit size changes from super-polynomial to polynomial. In other words: What is the largest colour class size for which our super-polynomial orbit theorem for hypercubes still holds?\\
One can check that all parts of our proof can be modified such that it also goes through if we allow colour classes of size $o(n^2)$. If the size is in $\Theta(n^2)$, though, the bound in \autoref{lem:LOG_stabilizerBound} becomes too large for \autoref{lem:orbitCase12} to hold.\\
On the other hand, the finest preorder with a polynomial orbit that we know so far is one where the colour class sizes are in $\Oo(2^n/\sqrt{n})$: It corresponds to the partition of $\{0,1\}^n$ according to Hamming-weight. This is precisely the orbit-partition of the vertex-set (w.r.t.\ the action of $\Sym_n$ on the positions). Its largest colour class has size $\binom{n}{n/2} \in \Theta(2^n/\sqrt{n})$.\\
Determining the finest preorder that is in principle CPT-definable in hypercubes would potentially allow to better judge whether a preorder-based CPT-algorithm like the one in \cite{pakusa2018definability} can at all be a candidate for a solution of the unordered CFI problem.\\
Moreover, it would be helpful to identify further h.f.\ objects that are undefinable in hypercubes for symmetry reasons.

%TODO: eventuell die DOIs raussuchen für alle Quellen
\bibliography{references.bib}

\begin{thebibliography}{10}

\bibitem{anderson2017symmetric}
Matthew Anderson and Anuj Dawar.
\newblock On symmetric circuits and fixed-point logics.
\newblock {\em Theory of Computing Systems}, 60(3):521--551, 2017.
\newblock \href {https://doi.org/10.1007/s00224-016-9692-2}
  {\path{doi:10.1007/s00224-016-9692-2}}.

\bibitem{blass1999}
Andreas Blass, Yuri Gurevich, and Saharon Shelah.
\newblock Choiceless polynomial time.
\newblock {\em Annals of Pure and Applied Logic}, 100(1-3):141--187, 1999.
\newblock \href {https://doi.org/10.1016/S0168-0072(99)00005-6}
  {\path{doi:10.1016/S0168-0072(99)00005-6}}.

\bibitem{caifurimm92}
Jin-yi Cai, Martin F{\"u}rer, and Neil Immerman.
\newblock An optimal lower bound on the number of variables for graph
  identification.
\newblock {\em Combinatorica}, 12:389--410, 1992.
\newblock \href {https://doi.org/10.1007/BF01305232}
  {\path{doi:10.1007/BF01305232}}.

\bibitem{chandra1980structure}
Ashok~K Chandra and David Harel.
\newblock Structure and complexity of relational queries.
\newblock In {\em 21st Annual Symposium on Foundations of Computer Science
  (sfcs 1980)}, pages 333--347. IEEE, 1980.
\newblock \href {https://doi.org/10.1109/SFCS.1980.41}
  {\path{doi:10.1109/SFCS.1980.41}}.

\bibitem{dawar2015nature}
Anuj Dawar.
\newblock The nature and power of fixed-point logic with counting.
\newblock {\em ACM SIGLOG News}, 2(1):8--21, 2015.

\bibitem{dawar2009logics}
Anuj Dawar, Martin Grohe, Bjarki Holm, and Bastian Laubner.
\newblock Logics with rank operators.
\newblock In {\em 2009 24th Annual IEEE Symposium on Logic In Computer
  Science}, pages 113--122. IEEE, 2009.
\newblock \href {https://doi.org/10.1109/LICS.2009.24}
  {\path{doi:10.1109/LICS.2009.24}}.

\bibitem{dawar2008choiceless}
Anuj Dawar, David Richerby, and Benjamin Rossman.
\newblock Choiceless polynomial time, counting and the
  {C}ai--{F}{\"u}rer--{I}mmerman graphs.
\newblock {\em Annals of Pure and Applied Logic}, 152(1-3):31--50, 2008.
\newblock \href {https://doi.org/10.1016/j.apal.2007.11.011}
  {\path{doi:10.1016/j.apal.2007.11.011}}.

\bibitem{gradel2015polynomial}
Erich Gr{\"a}del and Martin Grohe.
\newblock Is {P}olynomial {T}ime {C}hoiceless?
\newblock In {\em Fields of Logic and Computation II}, pages 193--209.
  Springer, 2015.
\newblock \href {https://doi.org/10.1007/978-3-319-23534-9\_11}
  {\path{doi:10.1007/978-3-319-23534-9\_11}}.

\bibitem{grapakschalkai15}
Erich Gr{\"{a}}del, Wied Pakusa, Svenja Schalth{\"{o}}fer, and {\L}ukasz
  Kaiser.
\newblock Characterising {C}hoiceless {P}olynomial {T}ime with {F}irst-order
  {I}nterpretations.
\newblock In {\em Proceedings of the 30th Annual {ACM/IEEE} Symposium on Logic
  in Computer Science}, pages 677--688, 2015.
\newblock \href {https://doi.org/10.1109/LICS.2015.68}
  {\path{doi:10.1109/LICS.2015.68}}.

\bibitem{grohe2008quest}
Martin Grohe.
\newblock The quest for a logic capturing {PTIME}.
\newblock In {\em 2008 23rd Annual IEEE Symposium on Logic in Computer
  Science}, pages 267--271. IEEE, 2008.
\newblock \href {https://doi.org/10.1109/LICS.2008.11}
  {\path{doi:10.1109/LICS.2008.11}}.

\bibitem{gurevich1985logic}
Yuri Gurevich.
\newblock Logic and the {C}hallenge of {C}omputer {S}cience.
\newblock In {\em Current Trends in Theoretical Computer Science}. Computer
  Science Press, 1988.

\bibitem{harary2000automorphism}
Frank Harary.
\newblock The automorphism group of a hypercube.
\newblock {\em J. Univers. Comput. Sci.}, 6(1):136--138, 2000.

\bibitem{immerman1982relational}
Neil Immerman.
\newblock Relational queries computable in polynomial time.
\newblock In {\em Proceedings of the fourteenth annual ACM symposium on Theory
  of computing}, pages 147--152, 1982.
\newblock \href {https://doi.org/10.1145/800070.802187}
  {\path{doi:10.1145/800070.802187}}.

\bibitem{pakusa2015linear}
Wied Pakusa.
\newblock {\em Linear Equation Systems and the Search for a Logical
  Characterisation of Polynomial Time}.
\newblock PhD thesis, RWTH Aachen, 2015.

\bibitem{pakusa2018definability}
Wied Pakusa, Svenja Schalth{\"o}fer, and Erkal Selman.
\newblock Definability of {C}ai-{F}{\"u}rer-{I}mmerman problems in {C}hoiceless
  {P}olynomial {T}ime.
\newblock {\em ACM Transactions on Computational Logic (TOCL)}, 19(2):1--27,
  2018.
\newblock \href {https://doi.org/10.1145/3154456} {\path{doi:10.1145/3154456}}.

\bibitem{rossman2010choiceless}
Benjamin Rossman.
\newblock Choiceless computation and symmetry.
\newblock In {\em Fields of Logic and Computation, Essays Dedicated to Yuri
  Gurevich on the Occasion of His 70th Birthday}, volume 6300 of {\em Lecture
  Notes in Computer Science}, pages 565--580. Springer, 2010.
\newblock \href {https://doi.org/10.1007/978-3-642-15025-8\_28}
  {\path{doi:10.1007/978-3-642-15025-8\_28}}.

\bibitem{svenja}
Svenja Schalthöfer.
\newblock {\em Choiceless Computation and Logic}.
\newblock PhD thesis, RWTH Aachen, 2020.

\bibitem{vardi1982complexity}
Moshe~Y Vardi.
\newblock The complexity of relational query languages.
\newblock In {\em Proceedings of the fourteenth annual ACM symposium on Theory
  of computing}, pages 137--146, 1982.
\newblock \href {https://doi.org/10.1145/800070.802186}
  {\path{doi:10.1145/800070.802186}}.

\end{thebibliography}

\end{document}